\DeclareMathAlphabet{\mathpzc}{OT1}{pzc}{m}{it}
\newcommand{\blind}{1}
\tikzset{join/.code=\tikzset{after node path={%
\ifx\tikzchainprevious\pgfutil@empty\else(\tikzchainprevious)%
edge[every join]#1(\tikzchaincurrent)\fi}}}
\tikzset{>=stealth',every on chain/.append style={join},
         every join/.style={->}}
\tikzstyle{labeled}=[execute at begin node=$\scriptstyle,
\newcommand*{\Rom}[1]{\expandafter\@slowromancap\romannumeral #1@}
\newcommand*{\rom}[1]{\romannumeral #1}
\newcommand{\p}{\textup{P}}
\newcommand{\bp}{\textbf{\textup{P}}}
\newcommand{\tp}{\mathbb{P}}
\newcommand{\e}{\textup{E}}
\newcommand{\be}{\textbf{\textup{E}}}
\newcommand{\te}{\mathbb{E}}
\newcommand{\ty}{\mathcal{T}}
\newcommand{\infnorm}[1]{\|#1\|_{\infty}}
\newcommand{\twonorm}[1]{\|#1\|_{2}}
\newcommand{\maxnorm}[1]{\|#1\|_{\max}}
\newcommand{\norm}[1]{|#1|}
\newcommand{\bx}{\bm{x}}
\newcommand{\bmu}{\bm{\mu}}
\newcommand{\heta}{\hat{\eta}}
\newcommand{\supp}{\sup_{S: |S| \leq D}}
\newcommand{\cri}{\textup{Cr}}
\newcommand{\sj}[1]{S^{(j)}_{#1}}
\newcommand{\smj}[1]{S^{(-j)}_{#1}}
\def\boxit#1{\vbox{\hrule\hbox{\vrule\kern6pt\vbox{\kern6pt#1\kern6pt}\kern6pt\vrule}\hrule}}
\newcommand{\mcs}{[1]}
\newcommand{\mcf}{[0]}
\newcommand{\ind}{\perp\!\!\!\!\perp}
\newtheorem{example}{Example} 
\newtheorem{theorem}{Theorem}
\newtheorem{lemma}{Lemma} 
\newtheorem{proposition}{Proposition} 
\newtheorem{remark}{Remark}
\newtheorem{definition}{Definition}
\newtheorem{assumption}{Assumption}
\begin{document}

\def\spacingset#1{\renewcommand{\baselinestretch}%
{#1}\small\normalsize} \spacingset{1}


\if1\blind
{
  \title{\bf RaSE: A Variable Screening Framework via Random Subspace Ensembles}
  \author{Ye Tian\\
    Department of Statistics\\ Columbia University\\
    and \\
    Yang Feng \\
    Department of Biostatistics, School of Global Public Health\\New York University}
    \date{}
  \maketitle
} \fi

\if0\blind
{
  \bigskip
  \bigskip
  \bigskip
  \begin{center}
    {\LARGE\bf RaSE: A Variable Screening Framework via Random Subspace Ensembles}
\end{center}
  \medskip
} \fi

\bigskip
\begin{abstract}
Variable screening methods have been shown to be effective in dimension reduction under the ultra-high dimensional setting. Most existing screening methods are designed to rank the predictors according to  their individual contributions to the response. As a result,  variables that are marginally independent but jointly dependent with the response could be missed. In this work, we propose a new framework for variable screening, \emph{Random Subspace Ensemble (RaSE)},  which works by evaluating the quality of random subspaces that may cover multiple predictors. This new screening framework can be naturally combined with any subspace evaluation criterion, which leads to an array of screening methods. The framework is capable to identify signals with no marginal effect or with high-order interaction effects. It is shown to enjoy the sure screening property and rank consistency. We also develop an iterative version of RaSE screening with theoretical support. Extensive simulation studies and real-data analysis show the effectiveness of the new screening framework.
\end{abstract}

\noindent%
{\it Keywords:} Variable screening; Random subspace method; Ensemble learning; Sure screening property; Rank consistency; High dimensional data; Variable selection

\spacingset{1.5} 

\section{Introduction}\label{sec:intro}
With the rapid advancement of computing power and technology, high-dimensional data become ubiquitous in many disciplines such as genomics, image analysis, and tomography. With high-dimensional data, the number of variables $p$ could be much larger than the sample size $n$. What makes statistical inference possible is the sparsity assumption, which assumes only a few variables have contributions to the response. Under this sparsity assumption, there has been a rich literature on the topic of variable selection, including LASSO \citep{tibshirani1996regression}, SCAD \citep{fan2001variable}, elastic net \citep{zou2005regularization}, and MCP \citep{zhang2010nearly}. Despite the success of these methods in many applications, for the ultra-high dimensional scenario where the dimension $p$ grows exponentially with $n$,  they may not work well due to the ``curse of dimensionality" in terms of simultaneous challenges to computational expediency, statistical accuracy, and algorithmic stability \citep{fan2009ultrahigh}.

To conquer these difficulties, \cite{fan2008sure} proposed a novel procedure called \emph{sure independence screening}  (SIS) with solid theoretical support. In the past decade, the power of feature screening has been well recognized and a myriad of screening methods have been proposed. The existing screening methods can be broadly classified into two categories, model-based methods and model-free ones. Model-based screening methods rely on specific models, such as SIS \citep{fan2008sure} and its extensions to generalized linear models \citep{fan2009ultrahigh}, Cox model \citep{fan2010high, zhao2012principled}, non-parametric independence screening method based on additive models \citep{fan2011nonparametric, cheng2014nonparametric} and screening via high-dimensional ordinary least-square projection (HOLP) \citep{wang2016high}. Recently, model-free approaches become more popular because of less stringent requirements. Examples of such approaches include the sure independent ranking and screening (SIRS) \citep{zhu2011model}, the screening method based on distance correlation (DC-SIS) and its iterative version \citep{li2012feature, zhong2015iterative}, screening procedure via martingale difference correlation (MDC-SIS) \citep{shao2014martingale}, screening via Kolmogorov filter \citep{mai2013kolmogorov, mai2015fused}, the screening approach for discriminant analysis (MV-SIS) \citep{cui2015model}, interaction pursuit via Pearson correlation (IP) and the distance correlation (IPDC) \citep{fan2016interaction, kong2017interaction}, the screening method based on ball correlation \citep{pan2018generic}, the nonparametric screening under conditional strictly convex loss \citep{han2019nonparametric}, and the screening method via covariate information number (CIS) \citep{nandy2020covariate}.

For variables that are marginally independent but jointly dependent with the response, many existing screening methods could miss them. This issue has been recognized in the literature \citep{fan2008sure,fan2009ultrahigh,zhu2011model,zhong2015iterative} and iterative screening procedures were developed, which were shown to be effective empirically. However, to the best of our knowledge, there is not much theoretical development for the iterative screening methods. In addition, some iterative screening methods (e.g.  iterative SIS) are coupled with a variable selection method like LASSO or SCAD, making its performance dependent on the specific choice of the regularization method. Besides, some other iterative screening methods (e.g. iterative SIRS and iterative DC-SIS) recruit variables step by step through residuals until a pre-specified number of variables are picked. Thus, their success hinges on a key tuning parameter, that is, how many variables to recruit in each step, making these procedures potentially less robust. 

These issues mentioned above motivate us to propose a new screening framework which goes  beyond marginal utilities. In the new framework, we investigate multiple features at the same time, via the random subspace method \citep{ho1998random}. \cite{tian2021rase} proposed a new \emph{Random Subspace Ensemble} classification method based on a similar idea, \textit{RaSE}, according to a specific aggregation framework first introduced in \cite{cannings2017random}. They advocated applying RaSE on sparse classification problems. The main idea of RaSE can be simply described as follows. First, $B_1B_2$ random subspaces are generated from a specific distribution on subspaces, which are evenly divided into $B_1$ groups. Next, the best subspace within each group is picked according to some criterion and a base learner is trained in that subspace. Hence we obtain $B_1$ base learners, each of which corresponds to a subspace. Finally, these $B_1$ base learners are aggregated on average and the ensemble will be used in prediction. The vanilla RaSE algorithm is reviewed in Algorithm \ref{algo: rase} in Appendix \ref{subsec: rase algo}. It's important to note that there is a by-product of RaSE, which is the selected proportion of each variable within $B_1$ selected subspaces. In this work, we will use this selected proportion to do variable screening, and call this the \emph{RaSE screening} framework.

We highlight the merits of RaSE screening framework as follows. First, by looking at different feature subspaces, variables marginally independent but jointly dependent with the response can be identified. Second, instead of proposing only a single screening approach, the flexible framework of RaSE allows us to use any criterion function for comparing subspaces, leading to an array of screening methods. One possible way to construct such a criterion function is to choose a base learner and a specific measure for comparing the subspaces. For example, if we know linear methods are suitable for the data, then we can apply RaSE by picking subspaces achieving lower BIC under linear models. If $k$-nearest neighbor ($k$NN) is believed to perform better, we can apply RaSE by choosing subspaces with the smallest cross-validation error on $k$NN. Third, under general conditions, we show the sure screening property and rank consistency for RaSE screening framework.  
Finally, we develop a novel iterative RaSE screening framework with sure screening property established  without the need to use a variable selection step or specifying the number of variables to recruit in each step.

The rest of this paper is organized as follows. Section \ref{sec:rase} introduces the vanilla RaSE screening framework and its iterative version in detail, and discusses  the relationship between RaSE and marginal screening methods. In Section \ref{sec:theory}, we present the theoretical properties for vanilla RaSE and iterative RaSE screening, including sure screening property and rank consistency. In Section \ref{sec:numerical}, extensive simulation studies and real-data analysis are conducted to demonstrate the power of our new screening framework. We summarize our contributions and point out some promising future avenues in Section \ref{sec:discussion}. The supplementary materials include all the technical proofs as well as additional details.

\section{RaSE: A General Variable Screening Framework}\label{sec:rase}
In what follows, we consider predictors $\bx = (x_1, \ldots, x_p)^T$ and response $y$. For regression problems, $y$ takes value from the real line $\mathbb{R}$, while for classification problems, $y$ takes value from an integer set $\{1, \ldots, K\}$, where $K > 1$ is a known integer. Denote the training data as $\{(\bx_i, y_i)\}_{i=1}^n$. Denote by $S_{\textup{Full}} = \{1, \ldots, p\}$ the full feature set. The signal set $S^* \subseteq S_{\textup{Full}}$ is defined as the set $S$ with minimal cardinality satisfying $y|\bx_{S} \ind \bx_{S_{\textup{Full}}\backslash S}$.  Denote $p^* = \norm{S^*}$. $[a]$ is used to represent the largest integer no larger than $a$.

To introduce the RaSE framework, we denote the $B_1B_2$ random subspaces as $\{S_{b_1 b_2}, b_1 = 1\ldots, B_1, b_2=1\ldots, B_2\}$, the $b_1$-th group of subspaces as $\{S_{b_1 b_2}\}_{b_2=1}^{B_2}$, and the selected $B_1$ subspaces as $\{S_{b_1*}\}_{b_1=1}^{B_1}$. The objective function corresponding to the specific criterion to choose subspaces is written as $\cri_n: \mathcal{S} \rightarrow \mathbb{R}$, where $\mathcal{S}$ is the collection of all subspaces. Assume a smaller value of $\cri_n$ leads to a better subspace.  Although the original RaSE \citep{tian2021rase} was introduced to solve classification problems, we now consider the general prediction framework, including both classification and regression.



\subsection{Vanilla RaSE screening framework}
Following the idea of \cite{tian2021rase}, we use the proportion of each feature among the selected $B_1$ subspaces as the importance measure.  Therefore, a natural screening procedure is to rank variables based on this proportion vector, then pick the variables with the largest proportions. The RaSE screening framework is summarized in Algorithm \ref{algo: rase screening}.
 
\begin{algorithm}
\caption{Vanilla RaSE screening}
\label{algo: rase screening}
\KwIn{training data $\{(\bm{x}_i, y_i)\}_{i = 1}^n$, subspace distribution $\mathcal{D}$, criterion function $\cri_n$, integers $B_1$ and $B_2$, number of variables $N$ to select}
\KwOut{the selected proportion of each feature $\hat{\bm{\eta}}$, the selected subset $\hat{S}$}
Independently generate random subspaces $S_{b_1b_2} \sim \mathcal{D}, 1 \leq b_1 \leq B_1, 1 \leq b_2 \leq B_2$\\
\For{$b_1 \leftarrow 1$ \KwTo $B_1$}{
  Select the optimal subspace $S_{b_1*} = S_{b_1b_2^*}$, where $b_2^*= \arg\min\limits_{1 \leq b_2 \leq B_2} \cri_n(S_{b_1b_2})$ \\
}
Output the selected proportion of each feature $\hat{\bm{\eta}}=(\heta_1,\ldots,\heta_p)^T$ where $\heta_j=B_1^{-1}\sum_{b_1=1}^{B_1}\mathds{1}(j\in S_{b_1*}), j=1,\ldots,p$\\
Output $\hat{S} = \{1 \leq j \leq p: \heta_j \textup{ is among the }  N \textup{ largest of all}\}$
\end{algorithm}

In the algorithm, the subspace distribution $\mathcal{D}$ is chosen as a \emph{hierarchical uniform distribution} over the subspaces by default. Specifically, with $D$ as the upper bound of the subspace size, we first generate the subspace size $d$ from the uniform distribution over $\{1, \ldots, D\}$. Then, the subspace $S_{11}$ follows the uniform distribution over all size-$d$ subspaces $\{S \subseteq S_{\textrm{Full}}: |S| = d\}$.  In practice, the subspace distribution can be adjusted if we have prior information about the data structure.

Algorithm \ref{algo: rase screening} is not the end of the story because it ranks all the variables but does not determine how many variables to keep. To facilitate the theoretical analysis, we define the final feature subset to be selected as
\begin{equation}\label{S alpha hat}
	\hat{S}_{\alpha} = \{1 \leq j \leq p: \heta_j \textup{ is among the } \allowbreak [\alpha D/c_{2n}] \textup{ largest of all}\},
\end{equation}
where $c_{2n}$ is a constant (to be specified in the next section) depending on $n$, $B_2$, $D$, and the criterion $\cri$ which is a population counterpart of $\cri_n$. Here, $\alpha$ can be any constant larger than 1, which will appear in the upper bound introduced in the sure screening theorem of Section \ref{sec:theory}. 

\subsection{Iterative RaSE screening}

As we mentioned in the introduction, the existing iterative screening methods have various tuning components such as the number of variables to recruit in each step and/or a specific variable selection method. We propose the iterative RaSE screening in Algorithm \ref{algo: iterative rase screening} to tackle these issues. 

The main idea of iterative RaSE screening is to update the subspace distribution based on the selected proportion in the preceding steps and not to conduct variable screening until the final step. To understand the details in the algorithm, we introduce a new subspace distribution.

\begin{algorithm}[!h]
\SetAlgoLined
\caption{Iterative RaSE screening ($\mbox{RaSE}_T$)}
\label{algo: iterative rase screening}
\KwIn{training data $\{(\bm{x}_i, y_i)\}_{i = 1}^n$, initial subspace distribution $\mathcal{D}^{[0]}$, criterion function $\cri_n$, integers $B_1$ and $B_2$, the number of iterations $T$, positive constant $C_0$, number of variables $N$ to select}
\KwOut{the selected proportion of each feature $\hat{\bm{\eta}}^{[T]}$, the selected subset $\hat{S}$}
\For{$t\leftarrow 0$ \KwTo $T$}{
	Independently generate random subspaces $S_{b_1b_2}^{[t]} \sim \mathcal{D}^{[t]}, 1 \leq b_1 \leq B_1$, $1 \leq b_2 \leq B_2$\\
	\For{$b_1 \leftarrow 1$ \KwTo $B_1$}{
  		Select the optimal subspace $S_{b_1*}^{[t]} =S_{b_1b_2^*}^{[t]}$, where $b_2^*= \arg\min\limits_{1 \leq b_2 \leq B_2} \cri_n(S_{b_1b_2}^{[t]})$ \\

	}
	Update $\hat{\bm{\eta}}^{[t]}$ where $\heta_j^{[t]}=B_1^{-1}\sum_{b_1=1}^{B_1}\mathds{1}(j\in S_{b_1*}^{[t]}), j=1,\ldots,p$\\
Update $\mathcal{D}^{[t+1]} \leftarrow$ hierarchical restrictive multinomial distribution $\mathcal{R}(\mathcal{U}_0, p, \tilde{\bm{\eta}}^{[t]})$, where $\tilde{\eta}_j^{[t]} \propto [\heta_j^{[t]}\mathds{1}(\heta_j^{[t]} > C_0/\log p) + \frac{C_0}{p}\mathds{1}(\heta_j^{[t]} \leq C_0/\log p)]$ and $\sum_{j=1}^p  \tilde{\eta}_j^{[t]} = 1$
}
Output the selected proportion of each feature $\hat{\bm{\eta}}^{[T]}$\\
Output $\hat{S} = \{1 \leq j \leq p: \heta_j^{[T]} \textup{ is among the }  N \textup{ largest of all}\}$
\end{algorithm}
Note that each subspace $S$ can be equivalently represented as $\bm{J}=(J_1,\ldots, J_p)^T$,  where $J_j = \mathds{1}(j\in S), j=1,\ldots,p$. A subspace following the \textit{hierarchical restrictive multinomial distribution} $\mathcal{R}(\mathcal{U}, p, \tilde{\bm{\eta}})$, where $\sum_{j=1}^p \tilde{\eta}_j = 1$ and $\tilde{\eta}_j \geq 0$, is equivalent to the procedure:
\begin{enumerate}
	\item Draw $d$ from  distribution $\mathcal{U}$ on $\{1,\ldots, D\}$;
	\item Draw $\bm{J}=(J_1,\ldots, J_p)^T$ from a restrictive multinomial distribution with parameter $(p, d, \tilde{\bm{\eta}})$, where the restriction is $J_j \in \{0, 1\}$.
\end{enumerate}
For example, the hierarchical uniform distribution belongs to this family where $\mathcal{U}$ is the uniform distribution $\mathcal{U}_0$ on $\{1,\ldots, D\}$ and $\tilde{\eta}_j = \frac{1}{p}$ for all $j = 1, \ldots, p$.

With the hierarchical restrictive multinomial distribution in hand, we can depict the iterative algorithm more precisely. At iteration $t$, the algorithm updates the subspace distribution of next round $\mathcal{D}^{[t+1]}$ by the hierarchical restrictive multinomial distribution $\mathcal{R}(\mathcal{U}_0, p, \tilde{\bm{\eta}}^{[t]})$, where $\tilde{\eta}_j^{[t]} \propto [\heta_j^{[t]}\mathds{1}(\heta_j^{[t]} > C_0/\log p) + \frac{C_0}{p}\mathds{1}(\heta_j^{[t]} \leq C_0/\log p)]$ and $\heta_j^{[t]}$ is the proportion of variable $j$ in the $B_1$ selected subspaces $\{S_{b_1*}^{[t]}\}_{b_1=1}^{B_1}$.

\subsection{Connections with marginal screening and interaction detection}
Before closing this section and moving into theoretical analysis, we want to point out the connection of RaSE screening approach with the classical marginal screening methods as well as the important problem of interaction detection. 

First of all, it is easy to observe that when $D = 1$ in Algorithm \ref{algo: rase screening}, with proper measure, RaSE screening method reduces to the marginal screening approaches. In this sense, RaSE screening method can be seen as an extension of classical marginal screening frameworks by evaluating subspaces instead of individual predictors. In addition, when there are signals with no marginal contribution, one intuitive idea is to screen all possible interaction terms, which demand extremely high computational costs. For example, screening all the order-$d$ interactions leads to a computational cost of $O(p^d)$. Instead of screening all possible interactions, RaSE randomly chooses some feature subspaces and explores their contributions to the response via a specific criterion. The carefully designed mechanism of generating random subspaces along with the iterative step greatly alleviate the requirement on computation.
 
Second, there has been a great interest in studying screening methods for interaction detection \citep{hao2014interaction, fan2016interaction, kong2017interaction}. The proposed RaSE screening framework works in a different fashion, by evaluating the contribution of variables through the joint contributions in different subspaces. A simulation example (Example \ref{exp_interaction}) where we have 4-way interactions among predictors will be studied to show the effectiveness of RaSE.  


\section{Theoretical Analysis}\label{sec:theory}
In this section, we investigate the theoretical properties of RaSE screening method to help readers understand how it works and why it can succeed in practice. We are not claiming that the assumptions we make are the weakest and conclusions we obtain are the strongest. 

Before moving forward, we first define some notations. For two numbers $a$ and $b$, we denote $a\vee b=\max(a, b)$ and $a\wedge b = \min(a, b)$. For two numerical sequences $\{a_n\}_{n=1}^{\infty}$ and $\{b_n\}_{n=1}^{\infty}$, we denote $a_n = o(b_n)$ or $a_n \ll b_n$ if $\lim\limits_{n \rightarrow \infty}|a_n/b_n| = 0$. Denote $a_n = O(b_n)$ or $a_n \lesssim b_n$ if  $\limsup\limits_{n \rightarrow \infty}|a_n/b_n| < \infty$. When $a_n \lesssim b_n$ and $a_n \gtrsim b_n$ hold at the same time, we write it as $a_n \asymp b_n$. Denote Euclidean norm for a length-$p$ vector $\bx = (x_1, \ldots, x_p)^T$ as $\twonorm{\bx} = \sqrt{\sum_{j=1}^p x_j^2}$. $\bm{1}_p$ represents a length-$p$ vector with all entries 1. For a $p \times p'$  matrix $A = (a_{ij})_{p \times p'}$, define the 1-norm  $\|A\|_{1} = \sup\limits_{j}\sum_{i=1}^p|a_{ij}|$, the operator norm $\|A\|_{2} = \sup\limits_{\bx: \|\bx\| = 1}\|A\bx\|_2$, the infinity norm $\infnorm{A} = \sup\limits_{i}\sum_{j=1}^{p'}|a_{ij}|$ and the maximum norm $\maxnorm{A} = \sup\limits_{i,j}|a_{ij}|$. We also denote the minimal and maximal eigenvalues of a square matrix $A$ as $\lambda_{\min}(A)$ and $\lambda_{\max}(A)$, respectively. Besides, we use different probability notations $\bp, \tp, \p$ to represent probabilities w.r.t. randomness from subspaces, randomness from training samples, and all randomness, respectively. And we use the same fonts $\be, \te, \e$ to represent the corresponding expectations. In addition, throughout this section, we assume $p^* = |S^*|$ is fixed.

\subsection{Sure screening property}\label{subsec:sure screening}
First, note that the success of RaSE relies on the large selected proportions of all signals. According to Algorithm \ref{algo: rase screening}, the selected proportion of signal $j$ depends on the comparison of two different types of subspaces, namely ``covering signal $j$" or ``not covering signal $j$". To understand when RaSE can succeed, we also need to compare subspaces ``covering a subset $\bar{S}_j \ni j$" or ``not covering $\bar{S}_j$", which is essential when signal $j$ has no marginal effect.  Next, we analyze the joint distribution of these two types of subspaces given the number of $B_2$ subspaces covering some $\bar{S}_j \ni j$, in the following useful lemma.

\begin{lemma}\label{lem: ind}
	Let $\{S_{1b_2}\}_{b_2=1}^{B_2} \overset{i.i.d.}{\sim}  \mathcal{R}(\mathcal{U}_0, p, p^{-1}\bm{1}_p)$. For any set $\bar{S}_j \ni j$ with cardinality $|\bar{S}_j| \leq D$, let $p_j = \bp(S_{11} \supseteq \bar{S}_j) = D^{-1}\sum_{d = |\bar{S}_j|}^{D}\frac{\binom{p-|\bar{S}_j|}{d-|\bar{S}_j|}}{\binom{p}{d}}$. Given $N_j \coloneqq \#\{b_2: S_{1b_2} \supseteq \bar{S}_j\} = k$, dividing $\{S_{1b_2}\}_{b_2=1}^{B_2}$ into $\{\sj{1b_2}\}_{b_2=1}^{k}$ and $\{\smj{1b_2}\}_{b_2=1}^{B_2 - k}$, where $\sj{1b_2} \supseteq \bar{S}_j$ and $\smj{1b_2} \not\supseteq \bar{S}_j$, 	
	\begin{enumerate}[(i)]
		\item $\{\sj{1b_2}\}_{b_2=1}^{k}$ independently follow the distribution
		\begin{equation}\label{eq: sj dist}
			\bp(\sj{} = S) = \left[D\cdot p_j\binom{p}{|S|}\right]^{-1}\cdot \mathds{1}(S \supseteq \bar{S}_j);
		\end{equation}
		\item $\{\smj{1b_2}\}_{b_2=1}^{B_2 - k}$ independently follow the distribution
		\begin{equation}\label{eq: smj dist}
			\bp(\smj{} = S) = \left[D(1-p_j)\binom{p}{|S|}\right]^{-1} \cdot\mathds{1}(S \not\supseteq \bar{S}_j);
		\end{equation}
		\item $\{\sj{1b_2}\}_{b_2=1}^{k} \ind \{\smj{1b_2}\}_{b_2=1}^{B_2 - k}$.
	\end{enumerate}
\end{lemma}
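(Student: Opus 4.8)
The plan is to recognize this lemma as an instance of the standard fact that conditioning an i.i.d.\ sample on the number of observations falling into a fixed event splits it into two independent i.i.d.\ subsamples governed by the respective conditional laws. First I would unpack the base distribution: since $\mathcal{R}(\mathcal{U}_0, p, p^{-1}\bm{1}_p)$ is the hierarchical uniform distribution, each $S_{1b_2}$ satisfies $\bp(S_{1b_2} = S) = [D\binom{p}{|S|}]^{-1}$ for every $S$ with $1 \le |S| \le D$. Summing this over all $S \supseteq \bar{S}_j$ (splitting by $d = |S|$ and counting the $\binom{p-|\bar{S}_j|}{d-|\bar{S}_j|}$ size-$d$ subspaces containing $\bar{S}_j$) reproduces the stated formula for $p_j$, and it follows that the indicators $\mathds{1}(S_{1b_2} \supseteq \bar{S}_j)$ are i.i.d.\ Bernoulli($p_j$), so $N_j \sim \mathrm{Binomial}(B_2, p_j)$ and $\bp(N_j = k) = \binom{B_2}{k} p_j^k (1-p_j)^{B_2-k}$.

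The core computation is to evaluate the conditional joint law of the entire partitioned collection. Fixing the ordering of each subsample by original index, I would take arbitrary test values $s_1, \ldots, s_k$ with each $s_l \supseteq \bar{S}_j$ and $t_1, \ldots, t_{B_2-k}$ with each $t_m \not\supseteq \bar{S}_j$, and compute
\[
\bp\big(\sj{11} = s_1, \ldots, \sj{1k} = s_k,\ \smj{11} = t_1, \ldots, \smj{1,B_2-k} = t_{B_2-k},\ N_j = k\big).
\]
This event is the disjoint union, over the $\binom{B_2}{k}$ choices of which indices in $\{1,\ldots,B_2\}$ carry the covering subspaces, of the events assigning $s_1,\ldots,s_k$ and $t_1,\ldots,t_{B_2-k}$ to the corresponding ordered positions. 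By the i.i.d.\ assumption each such assignment has probability $\prod_{l}\bp(S_{11}=s_l)\prod_m \bp(S_{11}=t_m)$, which does not depend on the index choice, so summing contributes the factor $\binom{B_2}{k}$.

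Dividing by $\bp(N_j = k)$ then cancels $\binom{B_2}{k}$ and distributes $p_j^k(1-p_j)^{B_2-k}$ across the factors, yielding
\[
\prod_{l=1}^{k}\frac{\bp(S_{11}=s_l)}{p_j}\prod_{m=1}^{B_2-k}\frac{\bp(S_{11}=t_m)}{1-p_j} = \prod_{l=1}^{k}\big[D p_j \tbinom{p}{|s_l|}\big]^{-1}\prod_{m=1}^{B_2-k}\big[D(1-p_j)\tbinom{p}{|t_m|}\big]^{-1}.
\]
Since this conditional probability factorizes completely into the claimed marginal for each covering and each non-covering subspace, all three conclusions follow simultaneously: the per-factor form gives \eqref{eq: sj dist} and \eqref{eq: smj dist}, and the factorization across the two groups gives the independence in (iii).

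I expect the only delicate point to be the bookkeeping of the ordering, i.e.\ justifying that after relabeling the covering subspaces as $\sj{11}, \ldots, \sj{1k}$ (and likewise the non-covering ones) the subsamples genuinely inherit the i.i.d.\ structure rather than some order-dependent law. This is handled by the exchangeability of the original i.i.d.\ sample, which is exactly what makes every term in the sum over index assignments identical and lets the combinatorial factor cancel cleanly; everything else reduces to routine algebra on binomial coefficients.
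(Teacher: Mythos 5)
Your proposal is correct and takes essentially the same route as the paper's proof: both compute the conditional joint law of the partitioned collection given $N_j = k$, note that summing over the $\binom{B_2}{k}$ index assignments contributes a factor that cancels against $\bp(N_j = k) = \binom{B_2}{k}p_j^k(1-p_j)^{B_2-k}$, and read off (i)--(iii) simultaneously from the complete factorization of the resulting expression. Your explicit justification of the combinatorial factor via exchangeability is slightly more careful than the paper's, which leaves that bookkeeping implicit, but the argument is the same.
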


The proof of Lemma \ref{lem: ind} can be found in Appendix \ref{sec: proofs}. It shows us that given $N_j \coloneqq \#\{b_2: S_{1b_2} \supseteq \bar{S}_j\} = k$, $\{\sj{1b_2}\}_{b_2=1}^{k}$ and $\{\smj{1b_2}\}_{b_2=1}^{B_2 - k}$ are independent. And each $\sj{1b_2}, \smj{1b_2}$ follows a ``weighted" hierarchical uniform distribution by adjusting the sampling weight based on the cardinality of subspace. 

Now, we introduce a concentration of $\cri_n$ on its population version $\cri$ for a collection of subsets.  In particular, for any $D$, there exists a sequence $\{\epsilon_n\coloneqq\epsilon(n, D)\}_{n=1}^{\infty}$ and positive constant $c_{1n} \rightarrow 0$ such that 
	\begin{equation}\label{eq: large deviation}
		\tp\left(\supp |\cri_n(S) - \cri(S)| > \epsilon_n \right) \leq c_{1n}
	\end{equation}
holds for any $n$. Such a sequence $\{\epsilon_n\}_{n=1}^{\infty}$ always exists, though we would like it to be small to have a uniform concentration as described in the following assumption, which is important to establish the sure screening property of RaSE.
		

\begin{assumption}\label{asmp: sure screening}
	For any $j = 1, \ldots, p$, there exists a subset $\bar{S}_j \ni j$, and we denote $\delta_j(S)\coloneqq \delta_j(n, D, S) =  \bp_{\sj{}}(\cri(S) -  \cri(\sj{}) < 2\epsilon_n|S)$, where $\sj{}$ follows the distribution in \eqref{eq: sj dist} w.r.t. $\bar{S}_j$. It holds that 
	\begin{equation}
		D \geq \sup_{j \in S^*}|\bar{S}_j|, B_2\inf_{j \in S^*}p_j  \gtrsim 1, \limsup_{n, D, B_2 \rightarrow \infty} \left\{B_2\sup\limits_{j \in S^*}\be_{\smj{}}\left[\delta_j(\smj{})^{\frac{1}{2}B_2p_j}\right]\right\} < \infty,
	\end{equation}
	where $\smj{}$ follows the distribution in \eqref{eq: smj dist} and $p_j = \bp(S_{11} \supseteq \bar{S}_j) = D^{-1}\sum\limits_{d = |\bar{S}_j|}^{D}\frac{\binom{p-|\bar{S}_j|}{d-|\bar{S}_j|}}{\binom{p}{d}}$.
\end{assumption}

\begin{remark}\label{rmk: sure screening}
	 In Assumption \ref{asmp: sure screening}, $\delta_j$ measures the strength of signal $j$ via comparing the two types of feature subspaces introduced in Lemma \ref{lem: ind}. From the assumption, we need a large $B_2$ when $\delta_j$ is small. 
\end{remark}

\begin{theorem}[Sure screening property]\label{thm: sure screening}
	Define 
	\begin{equation}
		c_{2n} \coloneqq c_2(n, B_2, D) \coloneqq (1-c_{1n})\left(1-\sup\limits_{j \in S^*}\be_{\smj{}}\left[\delta_j(\smj{})^{\frac{1}{2}B_2p_j}\right]\right)^{B_2}\left(1-\exp\left\{-\frac{3}{28}B_2\inf_{j \in S^*} p_j\right\}\right).
		\end{equation}
	For any $\alpha>1$, let $\hat{S}_{\alpha} = \{1 \leq j \leq p: \heta_j \textup{ is among the } \allowbreak [\alpha D/c_{2n}] \textup{ largest of all}\}$. Under Assumption  \ref{asmp: sure screening}, when $B_1 \gg \log p^*$ and $n \rightarrow \infty$, we have
	\begin{enumerate}[(i)]
	\item 	$\p(S^* \subseteq \hat{S}_{\alpha}) \geq 1-p^*\exp\left\{-2B_1 c^2_{2n}\left(1-\frac{1}{\alpha}\right)^2\right\} \rightarrow 1$;
	\item The selected model size $|\hat{S}_{\alpha}|\lesssim D$. 
	\end{enumerate}
\end{theorem}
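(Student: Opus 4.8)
The plan is to reduce both parts to a single quantitative statement: that for every signal $j\in S^*$ the selected proportion $\heta_j$ exceeds the threshold $c_{2n}/\alpha$ with high probability. Part (ii) is then immediate, since by construction $|\hat{S}_\alpha|=[\alpha D/c_{2n}]$, and Assumption \ref{asmp: sure screening} forces $c_{2n}$ to be bounded away from $0$: the factor $1-c_{1n}\to 1$, the factor involving $\inf_{j}p_j$ is bounded below because $B_2\inf_j p_j\gtrsim 1$, and the middle factor is $(1-O(B_2^{-1}))^{B_2}$ since $\sup_{j}\be_{\smj{}}[\delta_j(\smj{})^{\frac12 B_2 p_j}]\lesssim B_2^{-1}$; hence $1/c_{2n}=O(1)$ and $|\hat{S}_\alpha|\lesssim D$. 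For part (i) the key combinatorial fact is that $\sum_{j=1}^p\heta_j=B_1^{-1}\sum_{b_1}|S_{b_1*}|\le D$, because every selected subspace has size at most $D$. Consequently at most $\alpha D/c_{2n}$ coordinates can have $\heta_j\ge c_{2n}/\alpha$, so any signal reaching this threshold is guaranteed to sit among the $[\alpha D/c_{2n}]$ largest. It therefore suffices to bound $\p(\exists j\in S^*:\heta_j<c_{2n}/\alpha)\le\sum_{j\in S^*}\p(\heta_j<c_{2n}/\alpha)$ by a union bound.

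Next I would lower bound the per-group selection probability $\p(j\in S_{1*})\ge c_{2n}$, and the three factors defining $c_{2n}$ arise from three nested reductions. First I condition on the sample through the uniform concentration event $G=\{\supp|\cri_n(S)-\cri(S)|\le\epsilon_n\}$, which holds with probability at least $1-c_{1n}$ (the first factor); on $G$ a group selects a subspace covering $\bar S_j$ whenever some covering subspace beats every non-covering one in the population criterion $\cri$ by a margin exceeding $2\epsilon_n$, because $\cri_n$ and $\cri$ then differ by at most $\epsilon_n$. Second, writing $N_j\sim\mathrm{Binomial}(B_2,p_j)$ for the number of the $B_2$ subspaces covering $\bar S_j$, a multiplicative Chernoff bound yields $\bp(N_j<\tfrac12 B_2 p_j)\le\exp\{-\tfrac{3}{28}B_2 p_j\}$, producing the third factor and guaranteeing at least $\tfrac12 B_2 p_j$ covering candidates.

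The heart of the argument is the middle factor, the within-group comparison, where Lemma \ref{lem: ind} is essential. Conditioning on $N_j=k$, the lemma makes the covering family $\{\sj{1b_2}\}$ and the non-covering family $\{\smj{1b_2}\}$ independent, each i.i.d.\ from \eqref{eq: sj dist} and \eqref{eq: smj dist}. Conditioning further on the non-covering realizations and integrating over the $k$ i.i.d.\ covering subspaces, the probability that the best covering subspace beats the best non-covering one by margin $2\epsilon_n$ equals $1-\delta_j(S^-)^k$, where $S^-$ minimizes $\cri$ among the non-covering subspaces; since $\delta_j(\cdot)$ is decreasing in $\cri$ we have $\delta_j(S^-)=\max_{b_2}\delta_j(\smj{1b_2})$, whence $1-\delta_j(S^-)^k\ge\prod_{b_2}(1-\delta_j(\smj{1b_2})^k)$. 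Taking expectation over the i.i.d.\ non-covering subspaces factorizes this into $(1-\be_{\smj{}}[\delta_j(\smj{})^k])^{B_2-k}$, and using $k\ge\tfrac12 B_2 p_j$ together with $\delta_j\le 1$ bounds it below by $(1-\sup_{j\in S^*}\be_{\smj{}}[\delta_j(\smj{})^{\frac12 B_2 p_j}])^{B_2}$, the second factor. Multiplying the three factors gives $\p(j\in S_{1*})\ge c_{2n}$.

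Finally, since the $B_1$ subspace groups are generated independently, conditional on the sample the indicators $\{\mathds{1}(j\in S_{b_1*})\}_{b_1=1}^{B_1}$ are i.i.d.\ Bernoulli whose conditional mean is at least $c_{2n}$ on $G$, so Hoeffding's inequality gives $\p(\heta_j<c_{2n}/\alpha)\le\exp\{-2B_1 c_{2n}^2(1-1/\alpha)^2\}$ after writing $c_{2n}/\alpha=c_{2n}-c_{2n}(1-1/\alpha)$; the union bound over the $p^*$ signals completes part (i), and $B_1\gg\log p^*$ sends the bound to $0$. I expect the main obstacle to be the middle factor: correctly exploiting the conditional independence and the two distinct sampling distributions of Lemma \ref{lem: ind} to turn a statement about the minima of two dependent collections into the clean product form $(1-\be_{\smj{}}[\cdots])^{B_2}$, while keeping the sample-level concentration (the $1-c_{1n}$ factor) cleanly separated from the subspace-level randomness, so that the passage through the event $G$ in the final Hoeffding step introduces no spurious additive terms beyond what is already absorbed into $c_{2n}$.
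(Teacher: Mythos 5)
Your proposal is correct and follows the paper's proof essentially step for step: the same uniform concentration event $\mathcal{A}$ (your $G$) contributing the factor $1-c_{1n}$, Lemma \ref{lem: ind} together with the Chernoff bound at $\lambda=\tfrac12$ producing the $\exp\{-\tfrac{3}{28}B_2 p_j\}$ term, the per-group bound $\p(j\in S_{1*}\mid N_j=k)\geq(1-c_{1n})\bigl(1-\be_{\smj{}}[\delta_j(\smj{})^k]\bigr)^{B_2-k}$ restricted to $k\geq\tfrac12 B_2 p_j$, and finally Hoeffding plus a union bound over the $p^*$ signals, with part (ii) from $\liminf c_{2n}>0$. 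The only local difference is cosmetic: where you condition on the non-covering subspaces and use monotonicity of $\delta_j$ in $\cri$ together with $1-\max_i a_i\geq\prod_i(1-a_i)$, the paper conditions on the covering subspaces and applies Jensen's inequality to $x\mapsto x^{B_2-k}$, both routes yielding exactly $\bigl(1-\be_{\smj{}}[\delta_j(\smj{})^k]\bigr)^{B_2-k}$.
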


Next, we would like to analyze the restriction on $B_2$ imposed by Assumption \ref{asmp: sure screening}, which depends on $\delta_j$. We first introduce a useful notion called \textit{detection complexity}.

\begin{definition}[Detection complexity]\label{def: detection}
	We say feature $j \in S^*$ is detectable in complexity $d$, if there exists a subset $\bar{S}_j \ni j$ with cardinality $d$ and another subset $S_j^0 \subseteq S_{\textup{Full}} \backslash \{j\}$ with cardinality $p_j^0$, such that
	\begin{equation}
		\inf_{S \in \mathscr{S}, S' \in \mathscr{S}'} \left[\cri(S) - \cri(S')\right]> 2\epsilon_n,
	\end{equation}
	where $\mathscr{S} = \mathscr{S}(j, D) = \{S: |S| \leq D, |S\cap (S^* \cup S_j^0)| < d\}$, $\mathscr{S}' = \mathscr{S}'(j, D) = \{S: |S| \leq D, S \supseteq \bar{S}_j\}$, and $\epsilon_n$ satisfies \eqref{eq: large deviation}. We define the detection complexity of $j$, which is denoted by $\mathpzc{d}_j$, as the minimal integer $d$ to make $j \in S^*$ detectable in complexity $d$.
\end{definition}

\begin{remark}
	The detection complexity $\mathpzc{d}_j$ actually indicates the difficulty to identify signal $j$. When $\mathpzc{d}_j = 1$, $\bar{S}_j$ is actually equal to $\{j\}$ and $\mathscr{S} = \{S: |S|\leq D, S \cap(S^* \cup S^0_j) = \emptyset\}$. It implies that the given criterion function performs better at subsets covering $j$ than at subsets not intersecting with $S^* \cup S^0_j$. $S^0_j$ is introduced to avoid cases that some noises might have strong marginal effects. This condition is similar to marginal conditions in literature, for examples, see \cite{fan2008sure, fan2011nonparametric, zhu2011model, li2012feature, shao2014martingale, cui2015model, pan2018generic, nandy2020covariate}. The difference is that here we state it via subspaces instead of single features used in existing works. And when $\mathpzc{d}_j \geq 2$, the definition of detection complexity allows us to consider the joint contribution of multiple features. See Examples \ref{exp_sis} and \ref{exp_isis} in our numerical studies as examples.	
\end{remark}

Now we introduce an assumption under which the restriction on $B_2$ can be explicitly calculated.

\begin{assumption}\label{asmp: sure screening joint contribution}
	All signals in $S^*$ are detectable in complexity $\mathpzc{d}$, where $\mathpzc{d} = \max\limits_{j \in S^*}\mathpzc{d}_j$.
\end{assumption}

Intuitively speaking, this assumption requires all signals to be detectable under the same level, which equals the largest detection complexity of signals. In some sense, it is necessary for the sure screening property. Signal $j$ with large detection complexity is associated with a larger set $\bar{S}_j$, requiring a larger $B_2$ to sample subsets that cover $\bar{S}_j$ with sufficiently high probability. 

\begin{proposition}\label{prop: sure screening joint contribution}
	Under Assumption \ref{asmp: sure screening joint contribution}, when $B_2 \asymp \left(\frac{p}{D}\right)^{\mathpzc{d}}$, Assumption \ref{asmp: sure screening} hold.
\end{proposition}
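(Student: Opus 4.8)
The plan is to verify the three requirements of Assumption~\ref{asmp: sure screening} one at a time, taking for each $j \in S^*$ the set $\bar{S}_j$ of cardinality $\mathpzc{d}$ and the auxiliary noise set $S_j^0$ supplied by Assumption~\ref{asmp: sure screening joint contribution} (so that $j$ is detectable in complexity $\mathpzc{d}$). The conceptual heart of the argument is a single observation: because $\sj{}$ is supported on $\mathscr{S}' = \{S': |S'| \le D,\ S' \supseteq \bar{S}_j\}$ while every $S \in \mathscr{S}$ satisfies $\cri(S) - \cri(S') > 2\epsilon_n$ for all $S' \in \mathscr{S}'$, the event $\{\cri(S) - \cri(\sj{}) < 2\epsilon_n\}$ is $\bp_{\sj{}}$-null whenever $S \in \mathscr{S}$. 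Hence $\delta_j(S) = 0$ for every $S \in \mathscr{S}$, and equivalently $\delta_j(S) > 0$ forces $|S \cap (S^* \cup S_j^0)| \ge \mathpzc{d}$. This turns the moment appearing in Assumption~\ref{asmp: sure screening} into a purely combinatorial tail probability.

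For the first two requirements I would compute $p_j$ explicitly. Since $|\bar{S}_j| = \mathpzc{d}$ for all $j \in S^*$, the quantity $p_j = D^{-1}\sum_{d=\mathpzc{d}}^{D}\binom{p-\mathpzc{d}}{d-\mathpzc{d}}/\binom{p}{d}$ depends on $j$ only through $\mathpzc{d}$, so $\inf_{j \in S^*} p_j = p_j$ is a single number. Writing $\binom{p-\mathpzc{d}}{d-\mathpzc{d}}/\binom{p}{d} = (d)_{\mathpzc{d}}/(p)_{\mathpzc{d}} \asymp (d/p)^{\mathpzc{d}}$ for fixed $\mathpzc{d}$ (here $(x)_k = x(x-1)\cdots(x-k+1)$), and averaging $d$ over $\{1,\dots,D\}$ using $\sum_{d \le D} d^{\mathpzc{d}} \asymp D^{\mathpzc{d}+1}$, yields $p_j \asymp (D/p)^{\mathpzc{d}}$ with matching two-sided constants. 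Requirement~1 ($D \ge \sup_{j}|\bar{S}_j| = \mathpzc{d}$) then holds for all large $D$ because $\mathpzc{d}$ is fixed, and requirement~2 follows immediately: with $B_2 \asymp (p/D)^{\mathpzc{d}}$ we get $B_2 \inf_{j} p_j \asymp (p/D)^{\mathpzc{d}}(D/p)^{\mathpzc{d}} \asymp 1 \gtrsim 1$.

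The remaining and main requirement is the uniform moment bound. Using $\delta_j \in [0,1]$ together with the null-set observation, and using only positivity of the exponent $\tfrac12 B_2 p_j$,
\[
\be_{\smj{}}\!\left[\delta_j(\smj{})^{\frac12 B_2 p_j}\right] \le \bp_{\smj{}}(\delta_j(\smj{}) > 0) \le \bp_{\smj{}}\!\left(|\smj{} \cap (S^* \cup S_j^0)| \ge \mathpzc{d}\right).
\]
Since \eqref{eq: smj dist} is just the hierarchical uniform law conditioned on $\{S \not\supseteq \bar{S}_j\}$ and $1 - p_j \to 1$, this probability is at most $(1-p_j)^{-1}$ times the unconditional one. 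A union bound over the $\binom{m}{\mathpzc{d}}$ size-$\mathpzc{d}$ subsets of $A \coloneqq S^* \cup S_j^0$ (with $m = |A|$) gives, for a uniform size-$d$ subspace, $\bp(|S \cap A| \ge \mathpzc{d}\mid |S|=d) \le \binom{m}{\mathpzc{d}}(d)_{\mathpzc{d}}/(p)_{\mathpzc{d}} \lesssim \binom{m}{\mathpzc{d}}(d/p)^{\mathpzc{d}}$, and averaging over $d$ again produces $\lesssim \binom{m}{\mathpzc{d}}(D/p)^{\mathpzc{d}}$. Multiplying by $B_2 \asymp (p/D)^{\mathpzc{d}}$ leaves $B_2 \sup_{j}\be_{\smj{}}[\delta_j^{\frac12 B_2 p_j}] \lesssim \sup_{j}\binom{m}{\mathpzc{d}}$, a constant, so the $\limsup$ is finite and requirement~3 holds.

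The step I expect to be the real obstacle is controlling all constants uniformly in $j \in S^*$ and keeping them bounded as $n, D, B_2 \to \infty$. Concretely, the binomial factor $\binom{m}{\mathpzc{d}}$ is harmless only if $m = |S^* \cup S_j^0| \le p^* + p_j^0$ is uniformly bounded, i.e. the noise sets $S_j^0$ have uniformly bounded cardinality; I would make this explicit, noting it is consistent with the fixed-$p^*$ setting. The other delicate point is that the two combinatorial estimates — the lower bound on $p_j$ needed for requirement~2 and the upper tail bound needed for requirement~3 — must have matching order $(D/p)^{\mathpzc{d}}$ so that they cancel the chosen $B_2 \asymp (p/D)^{\mathpzc{d}}$ from both sides. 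Obtaining clean two-sided control of $(d)_{\mathpzc{d}}/(p)_{\mathpzc{d}}$ and of the averaged power sum $D^{-1}\sum_{d\le D}(d/p)^{\mathpzc{d}}$, uniformly in the regime $\mathpzc{d} \le D \ll p$, is where the bookkeeping concentrates.
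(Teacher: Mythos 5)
Your proposal is correct and follows essentially the same route as the paper: the key observation that detectability forces $\delta_j(S)=0$ off the event $\{|S \cap (S^* \cup S_j^0)| \ge \mathpzc{d}\}$, the two-sided estimate $p_j \asymp (D/p)^{\mathpzc{d}}$, and the tail bound $\bp_{\smj{}}(|\smj{} \cap (S^* \cup S_j^0)| \ge \mathpzc{d}) \lesssim (D/p)^{\mathpzc{d}}$ are exactly the paper's Lemmas \ref{lem: pj} and \ref{lem: prop sure screening} together with its concluding display. The only cosmetic differences are that you obtain the tail bound by a union bound over size-$\mathpzc{d}$ subsets of $S^* \cup S_j^0$ where the paper counts the hypergeometric-type probabilities $\bp_{\smj{}}(|\smj{} \cap (S^* \cup S_j^0)| = t)$ directly, and that you make explicit the (correct, implicitly used) requirement that $|S^* \cup S_j^0|$ stay uniformly bounded.
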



In the ideal case, in Assumption \ref{asmp: sure screening}, we can set $\bar{S}_j = \{j\}$ for all $j \in S^*$, implying $\mathpzc{d}=1$, which leads to the weakest restriction on $B_2$, that is, $B_2 \asymp \frac{p}{D}$. If a signal $j$ does not have marginal contribution to the response,  we have $\mathpzc{d}_j \geq 2$, requiring a larger order of $B_2$ to satisfy Assumption \ref{asmp: sure screening}. This motivates the iterative RaSE screening (Algorithm \ref{algo: iterative rase screening}) which usually has a less stringent restriction on $B_2$, making the framework more applicable to  high-dimensional settings.  

Next, we study the sure screening property for iterative RaSE, and discuss how the restriction on $B_2$ can be relaxed. For simplicity, we only study the one-step iteration, i.e. the case when $T=1$. It's not very hard to generalize the conditions and conclusions to the general case when $T>1$. To better state the results, we first generalize Lemma \ref{lem: ind} to understand the distribution of two aforementioned types of subspaces after one iteration.

\begin{lemma}\label{lem: ind general}
For any set $\bar{S}_j \ni j$ with cardinality $|\bar{S}_j| \leq D$, let $\{S_{1b_2}\}_{b_2=1}^{B_2} \overset{i.i.d.}{\sim}$ some distribution $\mathcal{F}$ such that $\bp_{S_{11} \sim \mathcal{F}}(S_{11} \supseteq \bar{S}_j)\in(0,1)$. 
	Given $N_j \coloneqq \#\{b_2: S_{1b_2} \supseteq \bar{S}_j\} = k$, dividing $\{S_{1b_2}\}_{b_2=1}^{B_2}$ into $\{\sj{1b_2}\}_{b_2=1}^{k}$ and $\{\smj{1b_2}\}_{b_2=1}^{B_2 - k}$, where $\sj{1b_2} \supseteq \bar{S}_j$ and $\smj{1b_2} \not\supseteq \bar{S}_j$,
	\begin{enumerate}[(i)]
		\item $\{\sj{1b_2}\}_{b_2=1}^{k}$ independently follow the distribution
		\begin{equation}\label{eq: sj dist general}
			\bp(\sj{} = S) = \bp_{S_{11} \sim \mathcal{F}}(S_{11} = S)\cdot \frac{\mathds{1}(S \supseteq \bar{S}_j)}{\bp_{S_{11} \sim \mathcal{F}}(S_{11} \supseteq \bar{S}_j)};
		\end{equation}
		\item $\{\smj{1b_2}\}_{b_2=1}^{B_2 - k}$ independently follow the distribution
		\begin{equation}\label{eq: smj dist general}
			\bp(\smj{} = S) = \bp_{S_{11} \sim \mathcal{F}}(S_{11} = S)\cdot \frac{\mathds{1}(S \not\supseteq \bar{S}_j)}{\bp_{S_{11} \sim \mathcal{F}}(S_{11} \not\supseteq \bar{S}_j)};
		\end{equation}
		\item $\{\sj{1b_2}\}_{b_2=1}^{k} \ind \{\smj{1b_2}\}_{b_2=1}^{B_2 - k}$.
	\end{enumerate}
\end{lemma}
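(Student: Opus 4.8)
The plan is to mirror the proof of Lemma \ref{lem: ind}, since the only change is that the explicit hierarchical-uniform masses are replaced by the generic masses, which I abbreviate as $\bp_{\mathcal{F}}(S)\coloneqq\bp_{S_{11}\sim\mathcal{F}}(S_{11}=S)$. Write $A=\{S:S\supseteq\bar{S}_j\}$ and set $q\coloneqq\bp_{S_{11}\sim\mathcal{F}}(A)$, which lies in $(0,1)$ by hypothesis (this is exactly where that assumption is used, guaranteeing that the normalizations by $q$ and $1-q$ below are well defined). Because $S_{11},\ldots,S_{1B_2}$ are i.i.d.\ from $\mathcal{F}$, the indicators $\mathds{1}(S_{1b_2}\in A)$ are i.i.d.\ Bernoulli$(q)$, so $N_j=\sum_{b_2}\mathds{1}(S_{1b_2}\in A)\sim\mathrm{Binomial}(B_2,q)$ and in particular $\bp(N_j=k)=\binom{B_2}{k}q^k(1-q)^{B_2-k}$. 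I would then compute the conditional joint law of the relabeled tuple $(\sj{11},\ldots,\sj{1k},\smj{11},\ldots,\smj{1,B_2-k})$ given $N_j=k$ directly from this product structure.

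First I would fix target values $s_1,\ldots,s_k\in A$ and $t_1,\ldots,t_{B_2-k}\in A^c$ and count. For each of the $\binom{B_2}{k}$ choices of which original indices form the covering group, exactly one assignment of the original draws (covering positions receiving $s_1,\ldots,s_k$ and non-covering positions receiving $t_1,\ldots,t_{B_2-k}$, both in their original order) produces the prescribed relabeled tuple, and by independence its probability is $\prod_{m=1}^{k}\bp_{\mathcal{F}}(s_m)\prod_{m=1}^{B_2-k}\bp_{\mathcal{F}}(t_m)$, the same across all such choices by exchangeability. Summing over the $\binom{B_2}{k}$ index sets and dividing by $\bp(N_j=k)$, the two binomial coefficients cancel and the conditional mass factorizes as
\begin{equation}
\bp\big(\text{relabeled}=(s,t)\mid N_j=k\big)=\prod_{m=1}^{k}\frac{\bp_{\mathcal{F}}(s_m)}{q}\cdot\prod_{m=1}^{B_2-k}\frac{\bp_{\mathcal{F}}(t_m)}{1-q}.
\end{equation}

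This single factorized identity delivers all three conclusions at once: the product over the $s_m$ factors shows the $\sj{1b_2}$ are mutually independent with common mass $\bp_{\mathcal{F}}(S)\mathds{1}(S\supseteq\bar{S}_j)/q$, which is exactly \eqref{eq: sj dist general}; the product over the $t_m$ factors gives claim (ii) via \eqref{eq: smj dist general}; and the clean separation between the two blocks of factors yields the cross-independence $\{\sj{1b_2}\}_{b_2=1}^{k}\ind\{\smj{1b_2}\}_{b_2=1}^{B_2-k}$ of part (iii). I expect the only delicate point to be the relabeling bookkeeping in the counting step---namely justifying that conditioning on the unordered count $N_j=k$ (rather than on the specific covering index set) contributes precisely the factor $\binom{B_2}{k}$ that cancels the one in $\bp(N_j=k)$. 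This is where exchangeability of the i.i.d.\ draws is essential, and it is the sole place the argument could go astray.
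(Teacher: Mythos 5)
Your proposal is correct and follows essentially the same route as the paper: the paper omits this proof as ``very similar to Lemma \ref{lem: ind}'', whose argument is exactly your computation --- write the conditional joint mass as the unconditional product divided by $\bp(N_j=k)=\binom{B_2}{k}q^k(1-q)^{B_2-k}$, cancel the binomial coefficients, and read off independence and the two marginal laws from the resulting factorization. Your explicit handling of the relabeling bookkeeping (summing over the $\binom{B_2}{k}$ covering index sets via exchangeability) is a point the paper's Lemma \ref{lem: ind} proof passes over silently, but it is the same argument.
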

We omit the proof of Lemma \ref{lem: ind general} as it is very similar to Lemma \ref{lem: ind}. Next, we introduce the following technical assumption analogous to Assumption \ref{asmp: sure screening}.
\begin{assumption}\label{asmp: sure screening iterative}
	Suppose the signal set $S^*$ can be decomposed as $S^* = S_{\mcf}^* \cup S_{\mcs}^*$, where $S_{\mcf}^*$ and $S_{\mcs}^*$ satisfy the following conditions:
	\begin{enumerate}[(i)]
		\item (The first-step detection) For any $j \in S_{\mcf}^*$, denote $\delta_j^{\mcf}(S) =  \bp_{\sj{}}(\cri(S) -  \cri(\sj{}) < 2\epsilon_n|S)$, $p^{\mcf} = \bp(j \in S_{11}) = \frac{D+1}{2p}$, where $\sj{}$ follows the distribution in \eqref{eq: sj dist} w.r.t. $\bar{S}_j = \{j\}$. Then,
		\begin{equation}
			B_2  \gtrsim \frac{p}{D}, \limsup_{n, D, B_2 \rightarrow \infty} \left\{B_2 \sup_{j \in S_{\mcf}^*} \be_{\smj{}}\left[\delta^{\mcf}_j(\smj{})^{\frac{1}{2}B_2p^{\mcf}}\right]\right\}< \infty,
		\end{equation}
		where $\smj{}$ follows the distribution in \eqref{eq: smj dist} w.r.t. $\bar{S}_j$.
		\item (The second-step detection) Denote $\delta_j^{\mcs}(S) =  \bp_{\sj{}}(\cri(S) -  \cri(\sj{}) < 2\epsilon_n|S)$, $p^{\mcs} = \frac{(D+1)C_0}{2(D+C_0)p}$, where $\sj{}$ follows the distribution in \eqref{eq: sj dist general} w.r.t. $\bar{S}_j = \{j\}$, and $C_0$ is a constant from Algorithm \ref{algo: iterative rase screening}. $\Upsilon = \{\mathcal{R}(\mathcal{U}_0, p, \tilde{\bm{\eta}})\}$ is a family of hierarchical restrictive multinomial distributions satisfying
		\begin{equation}
			 \inf_{j \in S_{\mcf}^*} \tilde{\eta}_j \geq \frac{c_2^*}{(D+C_0)},
		\end{equation}
		for a constant $c_2^* > 0$. Then,
		\begin{equation}
			B_2 \gtrsim p, \limsup_{n, D, B_2 \rightarrow \infty} \left\{B_2 \sup\limits_{\mathcal{F} \in \Upsilon}\sup\limits_{j \in S^*} \be_{\smj{}}\left[\delta^{\mcs}_j(\smj{})^{\frac{1}{2}B_2p^{\mcs}}\right]\right\}< \infty,
		\end{equation}
		where $\smj{}$ follows the distribution in \eqref{eq: smj dist general} w.r.t. $\bar{S}_j = \{j\}$ and $\{S_{b1b2}\}_{b_1, b_2} \overset{i.i.d.}{\sim} \mathcal{F} \in \Upsilon$. 
	\end{enumerate}
\end{assumption}

\begin{remark}
	Condition (\rom{1}) is a relaxed version of Assumption \ref{asmp: sure screening}, which replaces $S^*$ by a  subset $S_{\mcf}^*$. 
	This can be seen as a first-step detection condition for RaSE screening method to capture $S^*_{\mcf}$. The remaining signals in $S^*_{\mcs}$ that might be missed in the first step  will be captured in the second step. The family of distributions $\Upsilon$ is introduced to incorporate the randomness in the first step of RaSE screening.  This type of stepwise detection condition is very common in the literature \citep{jiang2014variable, li2019robust, zhou2020model, tian2021rase}.
\end{remark}

\begin{theorem}[Sure screening property for one-step iterative RaSE screening]\label{thm: sure screening iterative}
	Define
	\begin{equation}
	c_{2n}^{[l]} \coloneqq c_2^{[l]}(n, B_2, D) \coloneqq (1-c_{1n})\left(1-\sup\limits_{j \in S^*}\be_{\smj{}}\left[\delta_j^{[l]}(\smj{})^{\frac{1}{2}B_2p^{[l]}}\right]\right)^{B_2} \left(1-\exp\left\{-\frac{3}{28}B_2(p^{[l]})^2\right\}\right),
	\end{equation}
	where $l = 0, 1$. For $\hat{S}_{\alpha}^{[1]} = \{1 \leq j \leq p: \heta_j^{[1]} \textup{ is among the } [\alpha D/c_{2n}^{\mcs}] \textup{ largest of all}\}$, where $\alpha > 1$, under Assumption \ref{asmp: sure screening iterative}, if $c_{2n}^{\mcf} > c_2^*$ and $B_1 \gg \log p^*$, we have
\begin{enumerate}[(i)]
	\item $\p(S^* \subseteq \hat{S}_{\alpha}^{[1]}) \geq 1-p^*\exp\left\{-2B_1 (c_{2n}^{\mcf} - c_{2}^{*})^2\right\} -p^*\exp\left\{-2B_1 (c_{2n}^{\mcs})^2\left(1-\frac{1}{\alpha}\right)^2\right\} 
		\rightarrow 1$, 
	as $n \rightarrow \infty$;
	\item $|\hat{S}_{\alpha}^{[1]}| \lesssim D$.
\end{enumerate}
		
\end{theorem}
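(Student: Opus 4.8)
The plan is to view the one-step ($T=1$) procedure as two coupled runs of the single-stage argument behind Theorem~\ref{thm: sure screening}, glued together by the event that the updated distribution $\mathcal{D}^{[1]}$ lands in the admissible family $\Upsilon$. Write $q_j^{[l]}$ for the probability that variable $j$ is selected in a single group at stage $l\in\{0,1\}$. Because $\heta_j^{[l]}=B_1^{-1}\sum_{b_1=1}^{B_1}\mathds{1}(j\in S_{b_1*}^{[l]})$ is, once the training data is fixed, an average of $B_1$ independent Bernoulli summands, a Hoeffding bound across the groups turns any lower bound on $q_j^{[l]}$ into a high-probability lower bound on $\heta_j^{[l]}$ (the factor $1-c_{1n}$ inside $c_{2n}^{[l]}$ absorbing the large-deviation event \eqref{eq: large deviation} for $\cri_n$). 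The entire proof is thus organized around two estimates, $q_j^{[0]}\ge c_{2n}^{[0]}$ for $j\in S_{[0]}^*$ and $q_j^{[1]}\ge c_{2n}^{[1]}$ for every $j\in S^*$, followed by two applications of Hoeffding and a union bound over the at most $p^*$ signals.

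First I would treat stage $0$. Condition~(\rom{1}) of Assumption~\ref{asmp: sure screening iterative} is exactly Assumption~\ref{asmp: sure screening} restricted to $S_{[0]}^*$ with $\bar S_j=\{j\}$, so the core estimate behind Theorem~\ref{thm: sure screening}---decomposing the $B_2$ subspaces of a group via Lemma~\ref{lem: ind} into those covering $\{j\}$ and those not, bounding the former through $\delta_j^{[0]}$ and the latter through the Chernoff concentration of $N_j$---yields $q_j^{[0]}\ge c_{2n}^{[0]}$ for all $j\in S_{[0]}^*$. Next I would convert this into control of $\mathcal{D}^{[1]}$. Using $\sum_k \heta_k^{[0]}=B_1^{-1}\sum_{b_1}\norm{S_{b_1*}^{[0]}}\le D$, the normalizing constant of the update obeys $Z^{[0]}\le D+C_0$; hence on the event $\{\heta_j^{[0]}\ge c_2^*\ \text{for all}\ j\in S_{[0]}^*\}$---which, once $p$ is large enough that $c_2^*>C_0/\log p$, forces each such $j$ into the first branch of the update---we obtain $\tilde\eta_j^{[0]}\ge \heta_j^{[0]}/(D+C_0)\ge c_2^*/(D+C_0)$, i.e.\ $\mathcal{D}^{[1]}\in\Upsilon$. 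Denoting this event $\mathcal{A}$, Hoeffding with slack $c_{2n}^{[0]}-c_2^*>0$ and a union bound over $|S_{[0]}^*|\le p^*$ variables give $\p(\mathcal{A}^c)\le p^*\exp\{-2B_1(c_{2n}^{[0]}-c_2^*)^2\}$.

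Then I would run stage $1$ conditionally on $\mathcal{A}$. The structural point is that, given the stage-$0$ output, the second-round subspaces $\{S_{b_1b_2}^{[1]}\}$ are i.i.d.\ draws from the now-fixed $\mathcal{D}^{[1]}$, which on $\mathcal{A}$ is some member of $\Upsilon$. Every variable---including the marginally invisible signals of $S_{[1]}^*$---keeps floor weight $\tilde\eta_j^{[0]}\ge \frac{C_0}{(D+C_0)p}$, so that, averaging the restrictive-multinomial inclusion probability over the uniform size $d$, its single-draw inclusion probability stays at least $p^{[1]}=\frac{(D+1)C_0}{2(D+C_0)p}$; this is what lets Lemma~\ref{lem: ind general} together with Condition~(\rom{2})---whose hypotheses are stated uniformly over $\mathcal{F}\in\Upsilon$ and over all $j\in S^*$---deliver $q_j^{[1]}\ge c_{2n}^{[1]}$ for every $j\in S^*$, whatever element of $\Upsilon$ was realized. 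For the final ranking I would reuse the argument of Theorem~\ref{thm: sure screening}(\rom{1}): since $\sum_k \heta_k^{[1]}\le D$, at most $\alpha D/c_{2n}^{[1]}$ indices can reach the level $c_{2n}^{[1]}/\alpha$, placing every signal whose proportion reaches that level within the top $[\alpha D/c_{2n}^{[1]}]$; a conditional Hoeffding bound with slack $c_{2n}^{[1]}(1-1/\alpha)$ and a union bound over $S^*$ then show each signal clears the level except with probability $p^*\exp\{-2B_1(c_{2n}^{[1]})^2(1-1/\alpha)^2\}$. Combining through $\p(S^*\not\subseteq\hat{S}_\alpha^{[1]})\le \p(\mathcal{A}^c)+\e\big[\mathds{1}(\mathcal{A})\,\p(S^*\not\subseteq\hat{S}_\alpha^{[1]}\mid\mathcal{D}^{[1]})\big]$ yields part~(\rom{1}); part~(\rom{2}) is immediate, since Condition~(\rom{2}) forces each factor of $c_{2n}^{[1]}$ to a positive constant, whence $|\hat{S}_\alpha^{[1]}|=[\alpha D/c_{2n}^{[1]}]\lesssim D$.

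The hard part will be the conditional stage-$1$ analysis. One must make rigorous that conditioning on the random, data-dependent event $\mathcal{A}$ does not spoil the across-group independence needed for the Hoeffding step, and---more importantly---that $q_j^{[1]}\ge c_{2n}^{[1]}$ holds uniformly over the whole family $\Upsilon$ rather than only at the realized $\mathcal{D}^{[1]}$, since that realized distribution is itself correlated with the stage-$0$ subspaces and with the data entering $\cri_n$. This is precisely why Assumption~\ref{asmp: sure screening iterative}(\rom{2}) is phrased with $\sup_{\mathcal{F}\in\Upsilon}$, and why the floor $C_0/p$ in the update is indispensable: it is the only mechanism preventing the weight of a second-step signal from collapsing to zero after the first round, and hence the only thing guaranteeing the uniform lower bound $p^{[1]}$ on its stage-$1$ inclusion probability.
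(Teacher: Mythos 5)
Your proposal is correct and follows essentially the same route as the paper's proof: the paper likewise first shows $\p\bigl(\bigcap_{j\in S^*_{\mcf}}\{\heta_j^{\mcf}\geq c_2^*\}\bigr)\geq 1-p^*\exp\{-2B_1(c_{2n}^{\mcf}-c_2^*)^2\}$ (your event $\mathcal{A}$ is the paper's event $\mathcal{B}$), notes that on this event $\inf_{i\in S^*_{\mcf}}\tilde{\eta}_i\geq c_2^*/(D+C_0)$ so the realized update lies in $\Upsilon$, lower-bounds the conditional single-draw inclusion probability by $p_j^{\mcs}\geq\frac{(D+1)C_0}{2(D+C_0)p}=p^{\mcs}$ via exactly the floor-weight/restrictive-multinomial computation you describe, and concludes with the conditional Chernoff--Hoeffding and union-bound steps combined as $\p(S^*\subseteq\hat{S}_{\alpha}^{\mcs})\geq\p(\cdot\mid\hat{\bm{\eta}}^{\mcf}\in\mathcal{B})\p(\hat{\bm{\eta}}^{\mcf}\in\mathcal{B})$. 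Your added remarks (the implicit requirement $c_2^*>C_0/\log p$ for the update branch, and conditioning on the data to justify the independence behind Hoeffding) are refinements of points the paper treats tacitly, not deviations from its argument.
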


The lower bound in (\rom{1}) comes from the two steps of Algorithm \ref{algo: iterative rase screening}, which is very intuitive. The general iterative RaSE screening algorithm with any $T \geq 1$ can be studied similarly by imposing analogous conditions, which we leave as future work. 

The restriction on $B_2$ can be discussed in a similar fashion as the vanilla RaSE screening for some specific scenarios. For instance, similar to Definition \ref{def: detection}, we can define the detection complexity of the second step based on the distribution of subsets from the first step. If a similar assumption like Assumption \ref{asmp: sure screening joint contribution} (see Assumption \ref{asmp: iter b2} in Appendix \ref{subsec: rase iteration} for the precise statement) holds, then we can expect that there exist $B_2 \asymp p/D$ in the first step and $B_2 \lesssim D^{|S^*_{[0]}|}(\log p)^{|S^*_{[1]}|-1}p$ in the second step to make Assumption \ref{asmp: sure screening iterative} hold (see Proposition \ref{prop: iter b2} in Appendix \ref{subsec: rase iteration} for a precise description), which relaxes the requirement shown in Proposition \ref{prop: sure screening joint contribution} ($B_2 \asymp (p/D)^{|S^*_{[0]}|+1}$) to a great extent. In Section \ref{sec:numerical}, an array of simulations and real data analyses will show the effectiveness of iterative RaSE screening. 

\subsection{Rank consistency}\label{subsec:rank consistency}
Next, we study another important property of the RaSE screening, namely the rank consistency. First, we impose the following assumption.
\begin{assumption}\label{asmp: rank consistency}
	 Suppose the following conditions hold:
\begin{enumerate}[(i)]
	\item Denote $\tilde{\delta}_j(S) = \bp_{\smj{}}(\cri(S) - 2\epsilon_n < \cri(\smj{})|S)$ and $\delta_j(S)\coloneqq \delta_j(n, D, S) =  \bp_{\sj{}}(\cri(S) -  \cri(\sj{}) < 2\epsilon_n|S)$, where $\smj{}$ follows the distribution in \eqref{eq: smj dist} with respect to $\bar{S}_j = \{j\}$ while $\sj{}$ follows the distribution in \eqref{eq: sj dist} with respect to some subset $\bar{S}_j \ni j$. We have
	\begin{align}
	&\gamma(n, D, B_2) \coloneqq (1-c_{1n})\left(1-2\exp\left\{-\frac{3}{28}B_2\inf_{j \in S^*}p_j\right\}\right) \\
	&\quad \cdot\left[\left(1-\sup_{j \in S^*}\be_{\smj{}}[\delta_{j}(\smj{})^{\frac{1}{2}B_2p_j}]\right)^{B_2}+\left(1-\sup_{j \notin S^*}\be_{\sj{}}\left[\tilde{\delta}_j(\sj{})^{B_2-\frac{3}{2}B_2p^{[0]}}\right]\right)^{\frac{3}{2}B_2p^{[0]}}-1\right] \\
	&\quad - c_{1n}> 0,
	\end{align}
	where $\sj{}$ and $\smj{}$ follow the distributions in \eqref{eq: sj dist} and \eqref{eq: smj dist}, respectively.
	\item $B_1 \gg \gamma(n, D, B_2)^{-2} \vee \log p$.
\end{enumerate}
\end{assumption}
\begin{remark}
	Condition (\rom{1}) is introduced to make sure the signals are separable from the noises. Here, $\tilde \delta_j$ is a parallel definition to $\delta_j$,  measuring the noise level via comparing the two types of feature subspaces introduced in Lemma \ref{lem: ind}. A related condition can be found in Assumption (C3) of \cite{cui2015model}. 
\end{remark}
\begin{theorem}[Rank consistency]\label{thm: rank consistency}
	Under Assumption \ref{asmp: rank consistency}, 
	\begin{equation}
		\p\left(\inf_{j \in S^*} \heta_j > \sup_{j \notin S^*} \heta_j\right) \geq 1-p\exp\left\{-\frac{1}{2}B_1 \gamma^2(n, D, B_2)\right\} \rightarrow 1,
	\end{equation}
	as $n, B_1, B_2 \rightarrow \infty$.
\end{theorem}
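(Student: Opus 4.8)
The plan is to prove the ranking event by controlling, for a single group, the probability $\pi_j \coloneqq \p(j \in S_{1*})$ that variable $j$ is selected, and then transferring this to the empirical proportions $\heta_j$ via concentration. The first goal is a uniform separation: a threshold $\tau$ and a gap such that, on the high-probability event of \eqref{eq: large deviation}, $\pi_j \geq \tau + \gamma/2$ for every $j \in S^*$ and $\pi_j \leq \tau - \gamma/2$ for every $j \notin S^*$, where $\gamma = \gamma(n,D,B_2)$. Granting this, I fix $\tau$ as the midpoint of the two bounds. Conditionally on the training sample, $\heta_j = B_1^{-1}\sum_{b_1=1}^{B_1}\mathds{1}(j \in S_{b_1*})$ is an average of $B_1$ independent Bernoulli indicators (the groups use independent subspace draws), so Hoeffding's inequality gives $\bp(\heta_j \leq \tau \mid \text{data}) \leq \exp\{-2B_1(\gamma/2)^2\} = \exp\{-\tfrac12 B_1\gamma^2\}$ for a signal and the symmetric bound for a noise variable. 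A union bound over the $p$ coordinates then yields $\inf_{j\in S^*}\heta_j > \tau > \sup_{j\notin S^*}\heta_j$ off an event of probability at most $p\exp\{-\tfrac12 B_1\gamma^2\}$, which is the claim.

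The core of the argument is the separation, and here Lemma \ref{lem: ind} is the main tool. For a signal $j$ the required lower bound on $\pi_j$ is exactly the per-group selection probability already controlled in the proof of Theorem \ref{thm: sure screening}: splitting the $B_2$ subspaces of the group into the covering copies $\{\sj{1b_2}\}$ and non-covering copies $\{\smj{1b_2}\}$, which are independent given $N_j$ by Lemma \ref{lem: ind}, the good event turns the population margin into an empirical one (if $\cri(\sj{}) < \cri(S) - 2\epsilon_n$ then $\cri_n(\sj{}) < \cri_n(S)$), so a covering copy fails to beat a non-covering competitor with probability at most $\delta_j(\cdot)$; combined with the lower-tail Chernoff bound $N_j \gtrsim \tfrac12 B_2 p_j$ this produces the factors $(1-\be_{\smj{}}[\delta_j(\smj{})^{\frac12 B_2 p_j}])^{B_2}$ and $1-\exp\{-\tfrac{3}{28}B_2 p_j\}$ that appear in $c_{2n}$. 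The genuinely new ingredient is the matching upper bound on $\pi_j$ for a noise variable. I run the mirror-image computation with $\bar S_j = \{j\}$: now the relevant quantity is $\tilde\delta_j$, which bounds the probability that a non-covering copy fails to beat a given covering competitor, and the upper-tail Chernoff bound $N_j \lesssim \tfrac32 B_2 p^{[0]}$ controls the number of covering subspaces. This shows that the best non-covering subspace beats every covering copy --- so $j \notin S_{1*}$ --- with probability at least the noise factor $(1-\be_{\sj{}}[\tilde\delta_j(\sj{})^{B_2 - \frac32 B_2 p^{[0]}}])^{\frac32 B_2 p^{[0]}}$. Subtracting the noise upper bound from the signal lower bound, and folding in the good-event probability $1-c_{1n}$ and the combined binomial factor $1-2\exp\{-\tfrac{3}{28}B_2\inf_{j}p_j\}$, reproduces exactly the definition of $\gamma$.

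I expect the main difficulties to be two pieces of bookkeeping rather than any single hard inequality. First, both the signal and the noise factor compare the \emph{best} covering subspace with the \emph{best} non-covering subspace, i.e. two order statistics of dependent families of $\cri_n$-values; extracting the clean product forms $(1-\be[\cdot])^{B_2}$ and $(1-\be[\cdot])^{\frac32 B_2 p^{[0]}}$ --- rather than the weaker union bound $1-B_2\be[\cdot]$ --- requires conditioning on one family and exploiting the conditional independence of the two families supplied by Lemma \ref{lem: ind}, and this is the step to execute carefully for the noise side. Second, the Hoeffding step in the first paragraph is valid only conditionally on the training sample, so the contribution of the complementary bad event (of probability at most $c_{1n}$) must be carried through the whole argument; this is exactly why $\gamma$ is defined with the prefactor $1-c_{1n}$ and the additive correction $-c_{1n}$, so that the final bound absorbs the bad event without an extra additive term. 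Once both points are in place, Assumption \ref{asmp: rank consistency}(ii), namely $B_1 \gg \gamma^{-2}\vee\log p$, makes $p\exp\{-\tfrac12 B_1\gamma^2\} \to 0$, giving the stated convergence.
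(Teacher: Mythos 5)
Your proposal matches the paper's proof essentially step for step: the signal lower bound is recycled from the proof of Theorem \ref{thm: sure screening}, the noise side is exactly the paper's mirror-image computation bounding $\p(j \in S_{1*}|N_j=k,\mathcal{A}) \leq 1-\bigl(1-\be_{\sj{}}\bigl[\tilde{\delta}_j(\sj{})^{B_2-k}\bigr]\bigr)^k$ combined with the upper-tail Chernoff bound $k \lesssim \tfrac{3}{2}B_2p^{[0]}$, and the final step is the same midpoint-threshold Hoeffding argument (the paper's $\gamma''$ construction) plus a union bound yielding $p\exp\{-\tfrac{1}{2}B_1\gamma^2\}$. The two bookkeeping points you flag --- obtaining the product forms via the conditional independence of Lemma \ref{lem: ind} together with Jensen's inequality rather than a union bound, and folding the bad event into $\gamma$ through the $(1-c_{1n})$ prefactor and the additive $-c_{1n}$ --- are precisely how the paper executes these steps.
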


In addition, under Assumption \ref{asmp: sure screening joint contribution} with $\mathpzc{d} = 1$, when $B_2$ is restricted to some level, we have Assumption \ref{asmp: rank consistency} holds by default.

\begin{proposition}\label{prop: rank consistency}
	Under Assumption \ref{asmp: sure screening joint contribution} with $\mathpzc{d} = 1$, there exist constants $C_2 > C_1 > 0$, such that, when $B_2 \in (C_1p/D, C_2p/D)$, Assumption \ref{asmp: rank consistency} holds.
\end{proposition}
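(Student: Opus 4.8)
The plan is to verify Assumption \ref{asmp: rank consistency} directly by exhibiting the constants $C_1<C_2$ and showing that $\gamma(n,D,B_2)$ stays bounded below by a fixed positive quantity on the window $B_2\in(C_1p/D,C_2p/D)$. Throughout I would invoke the $\mathpzc d=1$ reduction: under Assumption \ref{asmp: sure screening joint contribution} every signal is detectable in complexity one, so I may take $\bar S_j=\{j\}$ for all $j\in S^*$, whence $p_j=\bp(S_{11}\supseteq\{j\})=p^{[0]}=(D+1)/(2p)$ and, on the prescribed window, $B_2 p^{[0]}\asymp 1$ with implied constants governed by $C_1,C_2$. I would dispose of part (ii) first: once part (i) holds with $\gamma$ bounded below by an absolute constant, the requirement $B_1\gg\gamma^{-2}\vee\log p$ collapses to $B_1\gg\log p$, the standing hypothesis on $B_1$ in Theorems \ref{thm: sure screening}--\ref{thm: rank consistency}. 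So the entire content is the strict positivity of $\gamma$, which I treat as a product of three factors.

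For the prefactor $1-2\exp\{-\tfrac{3}{28}B_2\inf_{j\in S^*}p_j\}$, note $\inf_{j\in S^*}p_j=p^{[0]}$, so $\tfrac{3}{28}B_2p^{[0]}\ge\tfrac{3}{56}C_1$ on the window; choosing $C_1>\tfrac{56}{3}\log 2$ forces this factor above a fixed positive constant. For the signal factor $A:=(1-\sup_{j\in S^*}\be_{\smj{}}[\delta_j(\smj{})^{B_2p_j/2}])^{B_2}$ I would use detection complexity (Definition \ref{def: detection}) directly: when $\smj{}$ misses $S^*\cup S_j^0$ it lies in $\mathscr S(j)$ while every $\sj{}\supseteq\{j\}$ lies in $\mathscr S'(j)$, so $\cri(\smj{})-\cri(\sj{})>2\epsilon_n$ and hence $\delta_j(\smj{})=0$. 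Thus $\be_{\smj{}}[\delta_j^{\,\cdots}]\le\bp_{\smj{}}(\smj{}\cap(S^*\cup S_j^0)\ne\emptyset)=O(D/p)$, where the law of $\smj{}$ is that of Lemma \ref{lem: ind} and I use $|S^*\cup S_j^0|=O(1)$ (the quantitative content hidden in ``complexity $1$,'' cf.\ Proposition \ref{prop: sure screening joint contribution}). Consequently $B_2\sup_j\be_{\smj{}}[\delta_j^{\,\cdots}]=O(1)$ and $A$ is bounded below by a constant of the form $e^{-\Theta(1)}$.

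The noise factor $B:=(1-\sup_{j\notin S^*}\be_{\sj{}}[\tilde\delta_j(\sj{})^{B_2-\frac32 B_2p^{[0]}}])^{\frac32 B_2p^{[0]}}$ is the more delicate term. For $j\notin S^*$ and a typical $\sj{}\supseteq\{j\}$ covering no signal, running detection complexity in reverse shows that any $\smj{}$ covering some signal $k\in S^*$ obeys $\cri(\smj{})<\cri(\sj{})-2\epsilon_n$, so it fails the event defining $\tilde\delta_j$; hence $\tilde\delta_j(\sj{})\le 1-\bp_{\smj{}}(\smj{}\supseteq\{k\}\text{ for some }k\in S^*)=1-\Theta(D/p)$. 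Since the inner exponent $B_2(1-\tfrac32 p^{[0]})\asymp B_2\asymp p/D$, this yields $\tilde\delta_j^{\,\cdots}\le e^{-\Theta(1)}$, so $b:=\sup_{j\notin S^*}\be_{\sj{}}[\tilde\delta_j^{\,\cdots}]$ is a controlled constant in $(0,1)$, and the outer exponent $\tfrac32 B_2p^{[0]}\asymp 1$ keeps $B$ bounded below. The crucial point is that $b$ does \emph{not} vanish: both $A$ and $B$ tend to constants strictly inside $(0,1)$, so I cannot merely send $B\to 1$.

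The main obstacle is therefore the strict positivity of the bracket $A+B-1$. My plan is to make the two exponential rates explicit and commensurate: writing $\sup_j\be_{\smj{}}[\delta_j^{\,\cdots}]\asymp(|S^*\cup S_j^0|-1)\kappa D/p$ and $1-\tilde\delta_j\asymp p^*\kappa D/p$ with the \emph{same} geometric constant $\kappa$ arising from the hierarchical uniform law, I would expand $B\approx 1-\tfrac32 B_2p^{[0]}\,b$ and reduce the claim to $A\gtrsim \tfrac32 B_2p^{[0]}\,b$, i.e.\ to comparing $e^{-(|S^*\cup S_j^0|-1)\kappa c}$ against $\tfrac{3c}{4}e^{-p^*\kappa c}$ for $c=B_2D/p\in(C_1,C_2)$. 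Since $\tfrac{3c}{4}e^{-\kappa c}$ attains a maximum over $c>0$ that is a constant below $1$ for the relevant $\kappa$, the signal mass dominates the first-order noise correction and $A+B-1>0$ on the whole window; together with the positive prefactor and $c_{1n}\to0$ this gives $\gamma>0$. Choosing $C_1>\tfrac{56}{3}\log 2$ and any $C_2>C_1$ for which this comparison remains valid then pins down the interval in the statement. The only step requiring care beyond routine bookkeeping is keeping $|S^*\cup S_j^0|$ of bounded size so that the competing rates are genuinely of order $D/p$.
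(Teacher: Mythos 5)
Your skeleton matches the paper's up to the final step: reduce the claim to positivity of the bracket $\bigl(1-\sup_{j\in S^*}\be_{\smj{}}[\delta_j(\smj{})^{\frac12 B_2p_j}]\bigr)^{B_2}+\bigl(1-\sup_{j\notin S^*}\be_{\sj{}}[\tilde\delta_j(\sj{})^{B_2-\frac32 B_2p^{[0]}}]\bigr)^{\frac32 B_2p^{[0]}}-1$ together with $B_2\inf_j p_j\gtrsim 1$, bound $\be_{\smj{}}[\delta_j^{\,\cdots}]\lesssim D/p$ via $\mathpzc{d}=1$ and Lemma \ref{lem: prop sure screening}, and bound $\tilde\delta_j(\sj{})\le 1-\Theta(D/p)$ for $\sj{}$ avoiding the marginal-noise set. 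Where you diverge is the choice of window, and that is where your argument breaks. Because you (correctly) insist that the prefactor $1-2\exp\{-\frac{3}{28}B_2\inf_j p_j\}$ be positive, you force $C_1>\frac{56}{3}\log 2\approx 12.9$, so the whole window lives at $c=B_2D/p\gtrsim 13$. Your own reduction then requires $e^{-(|S^*\cup S_j^0|-1)\kappa c}>\frac{3c}{4}e^{-p^*\kappa c}$, which simplifies to $e^{\kappa c\,(p^*+1-|S^*\cup S_j^0|)}>\frac{3c}{4}$. This holds on your window only when $|S^*\cup S_j^0|\le p^*$, i.e.\ essentially $S_j^0=\emptyset$; as soon as some noise feature has a strong marginal effect (the very situation $S_j^0$ was introduced to handle in Definition \ref{def: detection}), the left side is at most $1$ while the right side exceeds $\frac{3}{4}\cdot 13$, and the domination fails for \emph{every} $c$ in the window. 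The root cause is structural: the signal factor decays like $e^{-\Theta(|S^*\cup S_j^0|)\,c}$ while the first-order noise correction decays like $c\,e^{-c/2}$, and nothing in the assumptions makes the signal-side constant smaller than $1/2$. A second gap: your matched-rate step assumes two-sided asymptotics $\sup_j\be_{\smj{}}[\delta_j^{\,\cdots}]\asymp(|S^*\cup S_j^0|-1)\kappa D/p$ and $1-\tilde\delta_j\asymp p^*\kappa D/p$ with a \emph{common} constant $\kappa$, but Definition \ref{def: detection} only delivers one-sided criterion gaps and hence only upper bounds on $\delta_j$ and $\tilde\delta_j$; no matching lower-bound structure is available.

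The paper goes the opposite way: it sets $C_1=C_2/2$ and takes $C_2$ \emph{small}. Then the signal factor is $\gtrsim e^{-C_2}\to 1$, and since $\sup_{j\notin S^*}\be_{\sj{}}[\tilde\delta_j^{\,\cdots}]\lesssim \frac{D}{p}+\bigl(1-\frac{D}{p}\bigr)^{B_2-\frac32B_2p^{[0]}}\lesssim e^{-C_1}$ while the outer exponent $\frac32B_2p^{[0]}\lesssim C_2\to 0$, the noise factor $(1-Ce^{-C_2/2})^{C_2}$ also tends to $1$; the bracket exceeds $1$ with room to spare and no delicate rate comparison is needed, which is exactly what makes the $O(D/p)$ contamination from $S_j^0\ne\emptyset$ harmless there. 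The trade-off is the issue you spotted: with $B_2\inf_jp_j\asymp C_2$ small, the prefactor $1-2\exp\{-\frac{3}{28}B_2\inf_jp_j\}$ is negative, a point the paper's proof of Proposition \ref{prop: rank consistency} leaves unaddressed (it only invokes $B_2\inf_jp_j\gtrsim 1$). So your instinct about the prefactor is a legitimate observation about the published argument, but your fix of pushing $C_1$ above $\frac{56}{3}\log 2$ destroys the bracket comparison in general: as written, your proof covers only the special case $S_j^0=\emptyset$ for all $j\in S^*$, and the ``signal mass dominates the first-order noise correction'' step must either be proved under an additional assumption on the constants or replaced by the paper's small-$C_2$ computation. (Your disposal of Assumption \ref{asmp: rank consistency}(ii) via $\gamma\gtrsim 1$ reducing it to $B_1\gg\log p$ is fine, and is a detail the paper's proof does not spell out.)
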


\section{Numerical Studies}\label{sec:numerical}
In this section, we will investigate the performance of RaSE screening methods via extensive simulations and real data experiments. Each setting is replicated 200 times. In simulations, we evaluate different screening approaches by calculating the 5\%, 25\%, 50\%, 75\%, and 95\% quantiles of the minimum model size (MMS) to include all signals. The smaller the quantile is, the better the screening approach is. For real data, since $S^*$ is unknown, we compare different methods by investigating the performance of the corresponding post-screening procedure. That is, after screening, we keep the same number of variables for each screening method, then the same model is fitted based on those selected variables and their prediction performance on an independent test data is reported. 

We compare RaSE screening methods with SIS \citep{fan2008sure}, ISIS \citep{fan2008sure, fan2009ultrahigh}, SIRS \citep{zhu2011model}, DC-SIS \citep{li2012feature}, MDC-SIS \citep{shao2014martingale}, MV-SIS \citep{cui2015model}, HOLP \citep{wang2016high}, IPDC \citep{kong2017interaction}, and CIS \citep{nandy2020covariate}.

All the experiments are conducted in R. We implement RaSE screening methods in \texttt{RaSEn} package. R package \texttt{SIS} is used to implement SIS. Corresponding to one-step iterative RaSE, we report the results of ISIS with two screening steps and one selection step \citep{saldana2018sis}.\footnote{For more details, please refer to the toy example in Appendix \ref{subsubsec: isis}.} R package \texttt{screening} (\url{https://github.com/wwrechard/screening}) is used to implement HOLP. We conduct SIRS, DC-SIS and MV-SIS through R package \texttt{VariableSelection}. IPDC is implemented by calling the function \texttt{dcor} in R package \texttt{energy}. We implement MDC-SIS through function \texttt{mdd} in R package \texttt{EDMeasure} to calculate the martingale difference divergence. CIS is implemented via R codes shared in \cite{nandy2020covariate}. 

We combine RaSE framework with various criteria to choose subspaces, including minimizing BIC (RaSE-BIC) and eBIC (RaSE-eBIC)  in linear model or logistic regression model, minimizing the leave-one-out MSE/error in $k$-nearest neighbor ($k$NN) (RaSE-$k$NN), and minimizing the 5-fold cross-validation MSE/error in support vector machine (SVM) with RBF kernel (RaSE-SVM). We add a subscript 1 to RaSE to denote the one-step iterative RaSE (e.g. RaSE\textsubscript{1}-BIC). In practice, we can choose the criterion based on the model we prefer in the post-screening procedure. For example, if we would like to use linear model in post-screening, we could set minimizing BIC of linear model as the criterion. If we want to fit a non-linear model in post-screening, minimizing cross-validation error in $k$NN or SVM with RBF kernel can be good choices. Some exploratory analysis can help us choose a proper post-screening method.

For all RaSE methods, we fix $B_1=200$ and $B_2=20\times [p/D]$, motivated by Proposition \ref{prop: sure screening joint contribution}. In addition, following \cite{weng2019regularization}, we fix $D=[\sqrt{n}]$, which is motivated from the fact that many estimators are $\sqrt{n}$-consistent. And we verify the effectiveness of this choice in Example \ref{exp_sis}. For Example \ref{exp_sis}, we also investigate the impact of $B_1$, $B_2$ and $D$ on the median MMS. For RaSE-$k$NN and RaSE\textsubscript{1}-$k$NN, $k$ is set to be 5. For RaSE-eBIC and RaSE\textsubscript{1}-eBIC, we set the penalty parameter $\gamma = 0.5$ \citep{chen2008extended, chen2012extended}. 


All the codes used in numerical experiments can be found on GitHub (\url{https://github.com/ytstat/RaSE-screening-codes}).

\subsection{Simulations}\label{subsec: simulations}

\begin{example}[Example \Rom{2} in \cite{fan2008sure}]\label{exp_sis}
	We generate data from the following model:
	\begin{equation}\label{eq: model in exp 1}
		y = 5x_1 + 5x_5 + 5x_3 - \frac{15}{\sqrt{2}}x_4 + \epsilon,
	\end{equation}
	where $\bx = (x_1, \ldots, x_p)^T \sim N(\bm{0}, \Sigma)$, $\Sigma = (\sigma_{ij})_{p \times p}$, $\sigma_{ij} = 0.5^{\mathds{1}(i \neq j)}$, $\epsilon\sim N(0, 1)$, and $\epsilon \ind \bx$.  The signal set $S^* = \{1, 2, 3, 4\}$. $n=100$ and $p=1000$.
\end{example}

In this example, there is no correlation between $y$ and $x_4$, further leading to the independence due to normality, therefore methods based on the marginal effect will fail to capture $x_4$. However, after projecting $y$ on the space which is perpendicular with any signals from $x_1$, $x_2$ and $x_3$, the correlation appears between the projected $y$ and $x_4$, which motivates the ISIS. Besides, the proposed RaSE methods are also expected to succeed since it works with feature subsets instead of a single variable.

\begin{table}[!h]
\renewcommand{\arraystretch}{0.7}
\setlength{\tabcolsep}{7pt}
\begin{center}
\begin{threeparttable}
\begin{tabular}{l|rrrrr|rrrrr}
\Xhline{1pt}
\multirow{2}{*}{Method/MMS} & \multicolumn{5}{c}{Example 1} & \multicolumn{5}{c}{Example 2} \\ \cline{2-11}
&5\% &25\%  &50\%  &75\%  &95\% &5\% &25\%  &50\%  &75\%  &95\%  \\
\hline
SIS&  227&  317&  397&  647&  922& 6&  28&  105&  592&  1855\\
ISIS&  14&  15&  15&  15&  25& 172&  861&  1415&  1825&  1963\\
SIRS&  87&  370&  594&  762&  949& 6&  1158&  1492&  1774&  1964\\
DC-SIS&  96&  358&  610&  776&  942& 6&  1083&  1460&  1752&  1976\\
HOLP&  912&  949&  969&  986&  999& 45&  196&  576&  1252&  1906\\
IPDC&  224&  442&  700&  869&  980& 59&  210&  386&  678&  1517\\
MDC-SIS&  146&  287&  512&  734&  937& 6&  20&  93&  999&  1908\\
CIS&  203&  434&  601&  780&  940& 2000&  2000&  2000&  2000&  2000\\
RaSE-BIC&  5&  12&  37&  126&  650& 6&  358&  1514&  1821&  1956\\
RaSE\textsubscript{1}-BIC&  4&  4&  4&  16&  55& 13&  834&  1507&  1797&  1969\\
RaSE-eBIC&  6&  21&  42&  489&  852& 8&  26&  1323&  1789&  1935\\
RaSE\textsubscript{1}-eBIC&  4&  4&  4&  4&  14& 907&  1485&  1739&  1878&  1971\\
RaSE-$k$NN&  22&  88&  233&  312&  883& 5&  5&  6&  76&  1190\\
RaSE\textsubscript{1}-$k$NN&  6&  80&  422&  694&  921& 5&  5&  5&  13&  1846\\
RaSE-SVM&  13&  59&  150&  336&  842& 5&  5&  5&  6&  68\\
RaSE\textsubscript{1}-SVM&  4&  4&  82&  126&  542& 5&  5&  5&  5&  11\\
\Xhline{1pt}
\end{tabular}
\end{threeparttable}
\caption{Quantiles of MMS in Examples \ref{exp_sis} and \ref{exp_knn}.}
\label{table: exp sis_knn}
\end{center}
\end{table}

We present the results in the left panel of Table \ref{table: exp sis_knn}. From the results, it can be seen that all the marginal screening methods do not perform well in the sense that they need a large model to cover all 4 signals. ISIS performs much better because it can detect the signals with a smaller model than SIS with one step iteration. For RaSE screening methods with no iteration, as analyzed in Proposition \ref{prop: sure screening joint contribution},  we have $\mathpzc{d} = 2$ since $x_4$ has no marginal contribution to $y$, leading to a theoretical requirement $B_2 \asymp (p/D)^2$, where $(p/D)^2 = 10^4$. Despite the current small $B_2$ setting, RaSE-BIC and RaSE-eBIC still perform better than SIS and other marginal screening methods. After one iteration, RaSE\textsubscript{1}-BIC and RaSE\textsubscript{1}-eBIC improve a lot compared to their vanilla counterparts, with RaSE\textsubscript{1}-eBIC achieving the best performance. 

Note that iterations can usually improve the performance of vanilla RaSE at small quantiles, but possibly lead to worse performance at large quantiles. See RaSE-$k$NN and RaSE\textsubscript{1}-$k$NN for examples. This phenomenon happens because iterative RaSE is very aggressive and the success of the second step is based on the accurate capture of some signals in the first step. If the first step fails to identify enough signals but captures many noises, these noises will be selected more frequently in the second step.

To further study the impact of $(B_1, B_2)$, we run this example for 200 times under different $(B_1, B_2)$ settings, where we range $B_1$ from 100 to 1000 with increment 100 and $B_2$ from 1000 to 97000 with increment 6000. The median of MMS with RaSE-BIC and RaSE\textsubscript{1}-BIC is summarized in Figure \ref{fig: B1B2_settings}. It shows that in general, larger $(B_1, B_2)$ leads to better performance. The performance is stable in terms of $B_1$ when $B_2$ is large.  On the other hand,  the performance improves continuously as $B_2$ grows. In particular, for RaSE-BIC, when $B_2 \geq 10^4$, it can capture $S^*$ very well, which agrees with Proposition \ref{prop: sure screening joint contribution}. These results indicate that we can further improve the performance of RaSE screening if we have sufficient computational resources. RaSE\textsubscript{1}-BIC can always achieve a great performance with a small $B_2$, showing its effectiveness in relaxing the restriction on $B_2$.

\begin{figure}[!h]
  \centering
  \subfloat[RaSE-BIC]{\label{fig: exp1_b1b2}\includegraphics[width=.5\textwidth]{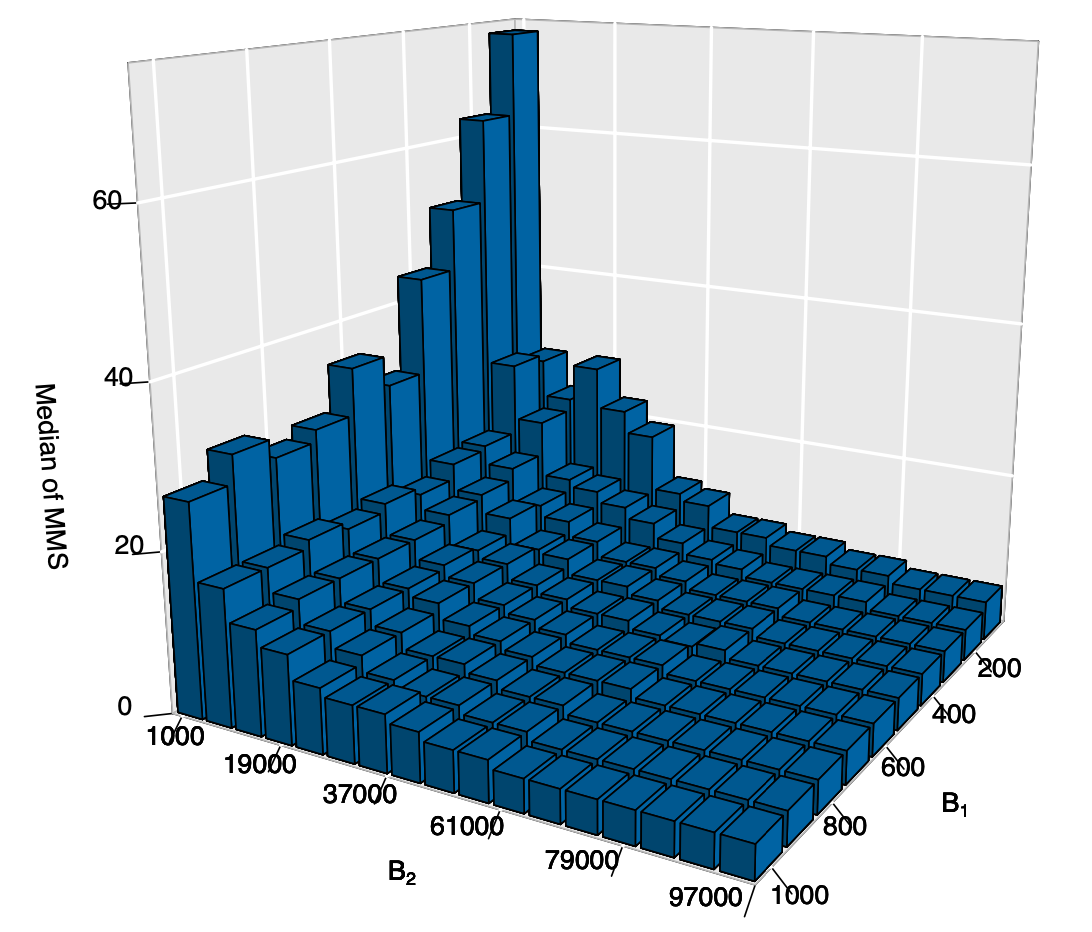}}
  \subfloat[RaSE\textsubscript{1}-BIC]{\label{fig: exp1_b1b2_1}\includegraphics[width=.5\textwidth]{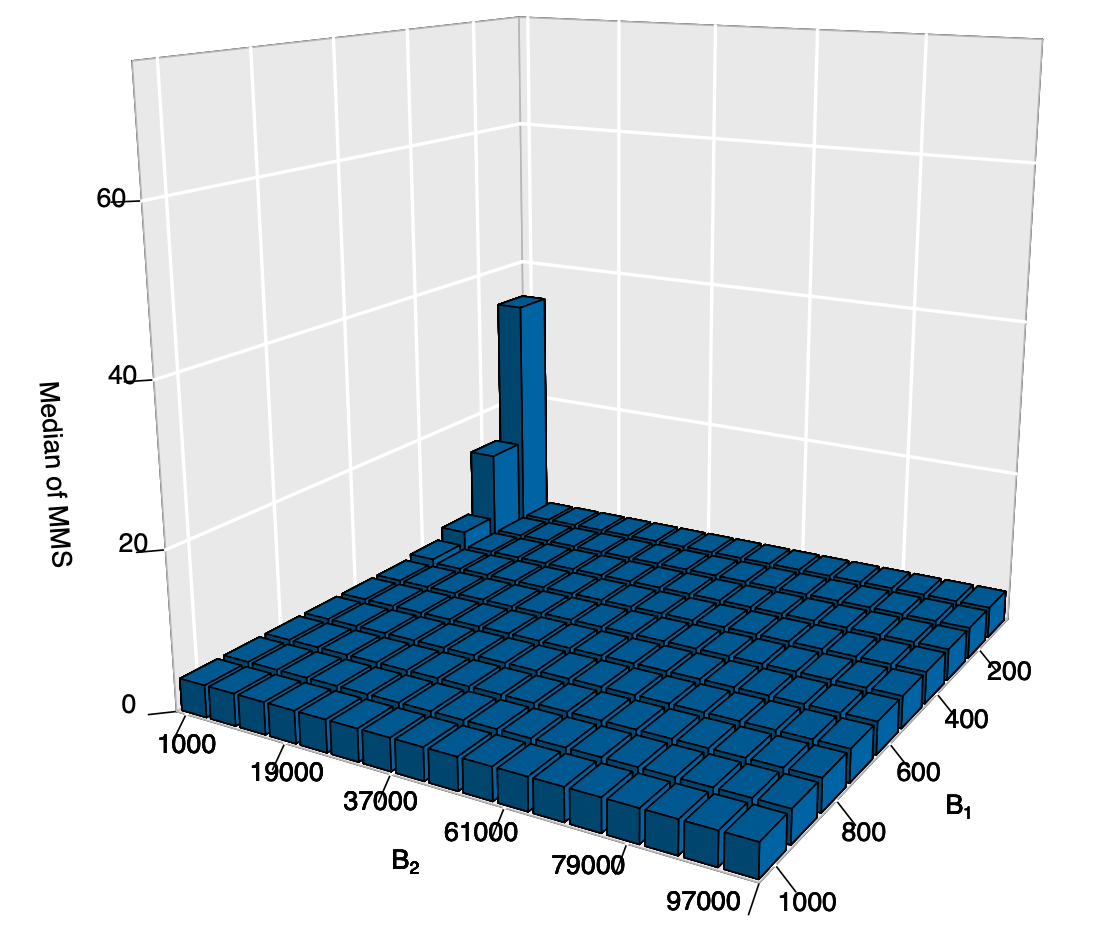}}
  \caption{Median MMS to capture $S^*$ ($|S^*| = 4$) as $(B_1, B_2)$ varies for RaSE-BIC (a) and RaSE\textsubscript{1}-BIC (b) in Example \ref{exp_sis}.}
  \label{fig: B1B2_settings}
\end{figure}

We also run this example 200 times to plot the median of MMS for RaSE-BIC and RaSE\textsubscript{1}-BIC under different $(D, B_2)$ while fixing $B_1=200$ in Figure \ref{fig: D_settings} in Appendix \ref{subsec: rase numerical}, where $D$ ranges from 2 to 40 with increment 2 and $B_2$ from 200 to 5000 with increment 300. The subfigure (a) shows that for RaSE-BIC, for a given $B_2$, the impact of $D$ is not monotonic. RaSE-BIC has a good and stable performance when $D$ is around $\sqrt{n} = 10$, which verifies the effectiveness of our choice for $D$. The subfigure (b) shows that the performance of RaSE\textsubscript{1}-BIC is very robust with respect to $D$, as long as $D$ and $B_2$ are not very small.

To compare the computational time of different methods, we list the average running time in 200 replications of Example \ref{exp_sis} in Table \ref{table: time}. All codes were run on NYU Greene clusters (2x Intel Xeon Platinum 8268 24C 205W 2.9GHz Processor) with 40 cores and 50 GB memory.\footnote{For SIS, ISIS, SIRS, DC-SIS and HOLP, since the package implementing them does not provide the option to use multi-cores, we ran them with a single core only.} It can be seen that RaSE methods have heavier computational burdens than other screening methods since their success leverages generating a large number of subspaces. This  can be alleviated with parallel computing and more powerful machines.

\begin{table}[ht]
\renewcommand{\arraystretch}{0.75}
\setlength{\tabcolsep}{6pt}
\begin{center}
\begin{tabular}{l|ccccccccc}
\Xhline{1pt}
Other methods &SIS&ISIS&SIRS&DC-SIS&HOLP&IPDC&MDC-SIS&CIS\\
\hline
Time (s) &  0.01&  0.67&  0.28&  1.30&  0.02&  0.49&  0.28&  1.18\\
\hline
RaSE methods &BIC& BIC\textsubscript{1} &eBIC &eBIC\textsubscript{1} &$k$NN &$k$NN\textsubscript{1} &SVM &SVM\textsubscript{1}\\
\hline
Time (s) &  1.99&  4.03&  2.01&  3.94&  6.74&  13.66&  150.41&  305.77 \\
\Xhline{1pt}
\end{tabular}
\caption{Average (over 200 replications) computational time in seconds for various methods in Example \ref{exp_sis}. For simplicity, for RaSE methods, we use criteria to differentiate them and the subscript ``1" denotes the one-step iterative version of the corresponding RaSE-based methods.}
\label{table: time}
\end{center}
\end{table}

\begin{example}[Latent clusters]\label{exp_knn}
	We generate data from the following linear model:
	\begin{equation}\label{eq: model in exp 3}
		y = 0.5(\tilde{x}_1 + \tilde{x}_2 + \tilde{x}_3 + \tilde{x}_4 + \tilde{x}_5 + \epsilon),
	\end{equation}
	where $\tilde{\bx} = (\tilde{x}_1, \ldots, \tilde{x}_p)^T \sim N(\bm{0}, \Sigma)$, $\epsilon \sim t_2$, $\Sigma = (\sigma_{ij})_{p \times p} = (0.5^{|i-j|})_{p \times p}$, and $\epsilon \ind \bx$. Generate $z \sim \textup{Unif}(\{-3, 3\}) \ind \tilde{\bx}$ and $\bx = \tilde{\bx} + z\mathbf{1}_p$. The signal set $S^* = \{1, 2, 3, 4, 5\}$.  $n=200$ and $p=2000$.
\end{example}

\begin{figure}[!h]
	\centering
	\includegraphics[width=0.8\textwidth]{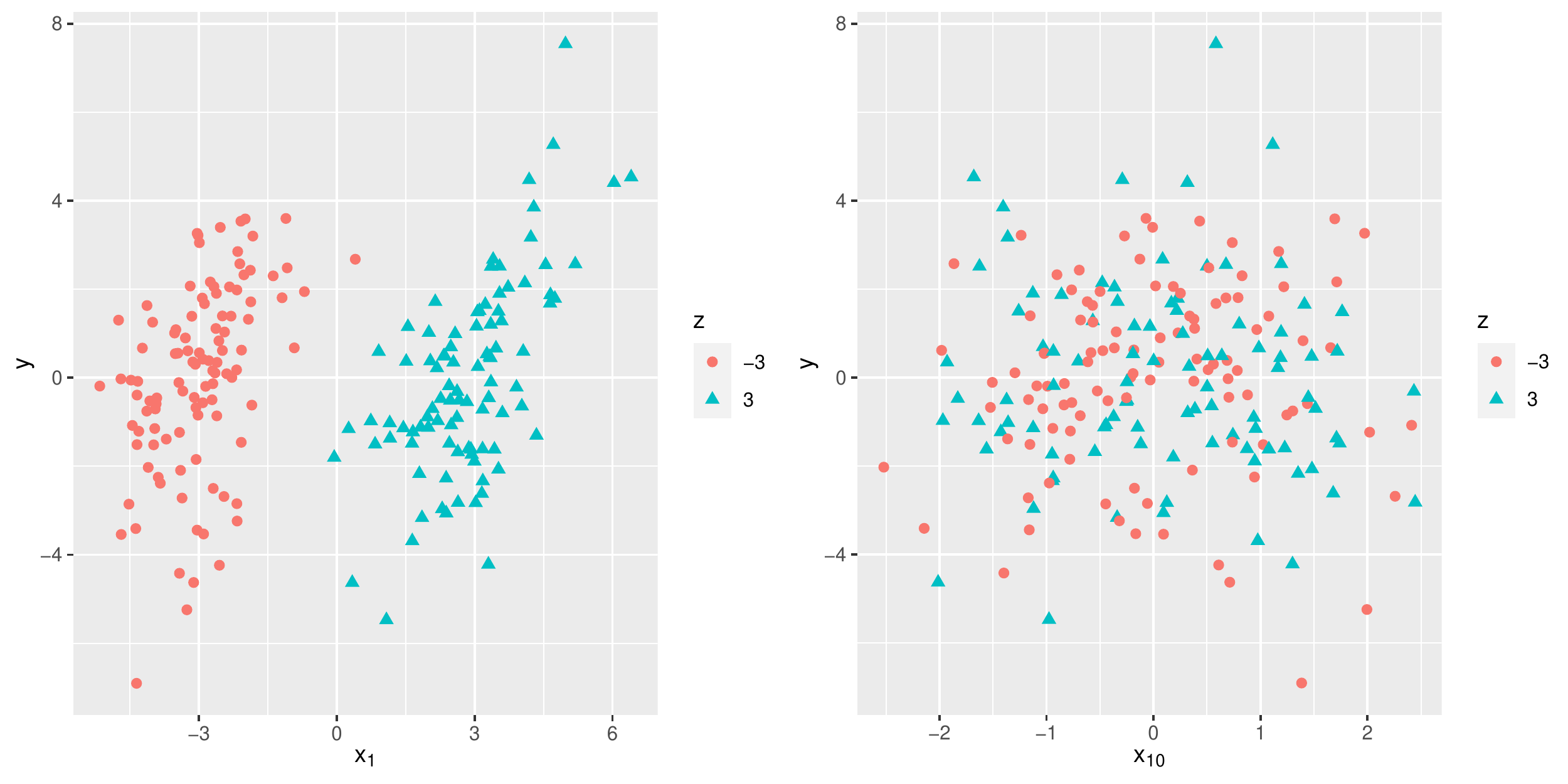}
	\caption{Scatterplots of $y$ vs. $x_1$ and $y$ vs. $x_{10}$ for Example \ref{exp_knn} ($n = 200$).}
	\label{fig: exp_knn}
\end{figure}

Figure \ref{fig: exp_knn} shows the scatterplots of $y$ vs. $x_1$ (left panel) and $y$ vs. $x_{10}$ (right panel). We expect the methods based on Pearson correlation to deteriorate due to the partial cancellation of signals by the averaging of two clusters. For such kind of data, $k$NN could be a favorable approach. The performances of various methods are presented in the right panel of Table \ref{table: exp sis_knn}. SIS and MDC-SIS perform well at $5\%$ and $25\%$ quantiles. RaSE-$k$NN and RaSE-SVM perform quite well with their performances further improved by their respective one-step iterative versions.

\begin{example}[Example 1.c in \cite{li2012feature}]\label{exp_dcsis}
	We generate data from the following model:
	\begin{equation}
		y = 2\beta_1 x_1x_2 + 3\beta_2\mathds{1}(x_{12} < 0)x_{22} + \epsilon,
	\end{equation}
	where $\beta_j = (-1)^U(4\log n/\sqrt{n} + |Z|), j = 1, 2$, $U \sim \textup{Bernoulli}(0.4)$, $Z \sim N(0, 1)$, $\epsilon \sim N(0, 1)$, $\bx \sim N(\bm{0}, \Sigma)$  where $\Sigma = (\sigma_{ij})_{p \times p} = (0.8^{|i-j|})_{p \times p}$,  $U\ind Z$, $\epsilon \ind \bx$,  and $(U, Z)\ind (\epsilon,\bx)$. Note that we regenerate $(U, Z)$ for each replication, so the results might differ from those in \cite{li2012feature}. The signal set $S^* = \{1, 2, 12, 22\}$. $n=200$ and $p=2000$.
\end{example}

\begin{table}[!h]
\renewcommand{\arraystretch}{0.7}
\setlength{\tabcolsep}{7pt}
\begin{center}
\begin{threeparttable}
\begin{tabular}{l|rrrrr|rrrrr}
\Xhline{1pt}
\multirow{2}{*}{Method/MMS} & \multicolumn{5}{c}{Example 3} & \multicolumn{5}{c}{Example 4} \\ \cline{2-11}
&5\% &25\%  &50\%  &75\%  &95\% &5\% &25\%  &50\%  &75\%  &95\%  \\
\hline
SIS&  184&  810&  1370&  1732&  1957& 264&  570&  709&  885&  984\\
ISIS&  362&  1008&  1482&  1775&  1945& 293&  626&  810&  911&  978\\
SIRS&  54&  741&  1294&  1634&  1920& 487&  737&  867&  935&  992\\
DC-SIS&  25&  456&  1222&  1638&  1923& 44&  304&  603&  814&  949\\
HOLP&  326&  954&  1475&  1774&  1975& 316&  586&  767&  886&  974\\
IPDC&  128&  429&  920&  1397&  1899& 7&  19&  68&  158&  528\\
MDC-SIS&  52&  165&  504&  1331&  1872& 189&  482&  736&  889&  979\\
CIS&  4&  5&  8&  55&  548& 5&  33&  136&  352&  789\\
RaSE-BIC&  637&  1242&  1619&  1842&  1959& 355&  693&  825&  914&  986\\
RaSE\textsubscript{1}-BIC&  714&  1196&  1550&  1839&  1974& 424&  661&  824&  918&  981\\
RaSE-eBIC&  484&  1137&  1496&  1794&  1951& 302&  553&  784&  913&  987\\
RaSE\textsubscript{1}-eBIC&  725&  1330&  1617&  1806&  1948& 480&  686&  860&  930&  986\\
RaSE-$k$NN&  5&  33&  168&  1321&  1855& 5&  15&  68&  290&  889\\
RaSE\textsubscript{1}-$k$NN&  4&  5&  8&  125&  1528& 4&  8&  51&  446&  910\\
RaSE-SVM&  4&  18&  504&  1282&  1848& 4&  15&  132&  468&  938\\
RaSE\textsubscript{1}-SVM&  4&  4&  5&  14&  1141& 4&  30&  232&  645&  898\\
\Xhline{1pt}
\end{tabular}
\end{threeparttable}
\caption{Quantiles of MMS in Examples \ref{exp_dcsis} and \ref{exp_interaction}.}
\label{table: exp dcsis_interaction}
\end{center}
\end{table}

The left panel of Table \ref{table: exp dcsis_interaction} exhibits the results of different screening methods. Due to the interaction term and indicator function, approaches based on linear models like SIS, ISIS, HOLP, and RaSE with BIC and eBIC do not perform very well. CIS and RaSE\textsubscript{1}-$k$NN achieve a very good performance at $5\%$, $25\%$ and $50\%$ quantiles. RaSE-$k$NN performs well at $5\%$ and $25\%$ quantiles but worse at others. RaSE-SVM performs well at the first two quantiles. The iteration step improves the performances of RaSE-$k$NN and RaSE-SVM significantly, and RaSE\textsubscript{1}-SVM outperforms all the other methods except at $95\%$ quantile.

\begin{example}[Interactions]\label{exp_interaction}
	We generate data from the following model:
	\begin{equation}
		y = 3\sqrt{|x_1|} + 2\sqrt{|x_1|}x_2^2 + 4\sin(x_1)\sin(x_2)\sin^2(x_3) + 12\sin(x_1)|x_2|\sin(x_3)x_4^2 + 0.5\epsilon,
	\end{equation}
	where $x_1, \ldots, x_p \stackrel{i.i.d.}{\sim} N(0,1)$, $\epsilon \sim N(0, 1)$, and $\epsilon \ind \bx$. The signal set $S^* = \{1, 2, 3, 4\}$. $n=300$ and $p=1000$.
\end{example}

This example evaluates the capability of different screening methods in terms of selecting high-order interactions. The results are summarized in the right panel of Table \ref{table: exp dcsis_interaction}. It can be observed that \textup{RaSE}-$k$NN, \textup{RaSE}\textsubscript{1}-$k$NN, \textup{RaSE}-SVM, \textup{RaSE}\textsubscript{1}-SVM, IPDC, and CIS achieve an acceptable performance, particularly for the lower quantiles. IPDC and CIS perform better at 75\% and 95\% quantiles than all RaSE methods but worse at the other three quantiles than \textup{RaSE}\textsubscript{1}-$k$NN. The remaining methods do not perform well on any of the 5 quantiles. It shows that RaSE framework equipped with minimizing cross-validation MSE on $k$NN or kernel SVM is promising to capture high-order interactions.

\begin{example}[Gaussian mixture, Example 1 in \cite{cannings2017random}]\label{exp_samworth}
	We generate data from the following model:
	\begin{align}
		y \sim \textup{Bernoulli}(0.5), \mbox{    }\bx|y = r \sim \frac{1}{2}N(\bmu_r, \Sigma) + \frac{1}{2}N(-\bmu_r, \Sigma), r = 0, 1,
	\end{align}
	where $\bmu_0 = (2, -2, 0, \ldots, 0)^T$, $\bmu_1 = (2, 2, 0, \ldots, 0)^T$, $\Sigma$ is an identity matrix. The signal set $S^* = \{1, 2\}$. $n=200$ and $p=2000$.
\end{example}

From the scatterplots in Figure \ref{fig: exp_samworth} in Appendix \ref{subsec: rase numerical}, the marginal screening methods are expected to fail because all signals are marginally independent with $y$. The only way to capture the signals is to measure the joint contribution of $(x_1, x_2)$. We summarize the results  in the left panel of Table \ref{table: exp samworth_isis}. 

The table shows that the marginal methods fail as we expected. RaSE with BIC and eBIC fail as well because the data points from the two classes are not linearly separable (Figure \ref{fig: exp_samworth}). SIRS, RaSE\textsubscript{1}-$k$NN and RaSE\textsubscript{1}-SVM achieve the best performance with very accurate feature ranking. 

\begin{table}[!h]
\renewcommand{\arraystretch}{0.7}
\setlength{\tabcolsep}{7pt}
\begin{center}
\begin{threeparttable}
\begin{tabular}{l|rrrrr|rrrrr}
\Xhline{1pt}
\multirow{2}{*}{Method/MMS} & \multicolumn{5}{c}{Example 5} & \multicolumn{5}{c}{Example 6} \\ \cline{2-11}
&5\% &25\%  &50\%  &75\%  &95\% &5\% &25\%  &50\%  &75\%  &95\%  \\
\hline
SIS&  515&  1090&  1414&  1746&  1947& 170&  471&  910&  1436&  1932\\
ISIS&  445&  1001&  1470&  1784&  1967& 7&  7&  7&  8&  8\\
SIRS&  2&  2&  2&  2&  2& 821&  1242&  1551&  1813&  1966\\
DC-SIS&  451&  960&  1385&  1706&  1913& 765&  1155&  1526&  1775&  1947\\
MV-SIS&  379&  957&  1366&  1692&  1895& 199&  706&  1258&  1660&  1909\\
HOLP&  495&  1065&  1381&  1712&  1936& ---&  ---&  ---&  ---&  ---\\
IPDC&  495&  1010&  1344&  1673&  1908& 879&  1425&  1722&  1884&  1988\\
MDC-SIS&  462&  1038&  1332&  1708&  1948& 163&  498&  1064&  1628&  1917\\
CIS&  2000&  2000&  2000&  2000&  2000& 229&  736&  1195&  1652&  1941\\
RaSE-BIC&  506&  1081&  1487&  1804&  1946& 8&  14&  20&  26&  1525\\
RaSE\textsubscript{1}-BIC&  464&  968&  1360&  1692&  1927& 5&  5&  5&  6&  14\\
RaSE-eBIC&  425&  1045&  1424&  1705&  1965& 26&  346&  894&  1406&  1919\\
RaSE\textsubscript{1}-eBIC&  480&  988&  1370&  1727&  1938& 5&  7&  10&  14&  1184\\
RaSE-$k$NN&  2&  3&  5&  6&  8& 38&  202&  294&  1470&  1925\\
RaSE\textsubscript{1}-$k$NN&  2&  2&  2&  2&  2& 27&  376&  967&  1486&  1828\\
RaSE-SVM&  2&  4&  6&  8&  26& 11&  39&  118&  343&  1743\\
RaSE\textsubscript{1}-SVM&  2&  2&  2&  2&  2& 5&  5&  118&  1133&  1792\\
\Xhline{1pt}
\end{tabular}
\end{threeparttable}
\caption{Quantiles of MMS in Examples \ref{exp_samworth} and \ref{exp_isis}.}
\label{table: exp samworth_isis}
\end{center}
\end{table}

\begin{example}[Multinomial logistic regression, Case 2 in Section 4.5 of \cite{fan2009ultrahigh}]\label{exp_isis}
	We first generate $\tilde{x}_1, \ldots, \tilde{x}_4 \overset{i.i.d.}{\sim} \textup{Unif}([-\sqrt{3}, \sqrt{3}])$ and $\tilde{x}_5, \ldots, \tilde{x}_p \overset{i.i.d.}{\sim} N(0, 1)$, then let $x_1 = \tilde{x}_1 - \sqrt{2} \tilde{x}_5$, $x_2 = \tilde{x}_2 + \sqrt{2} \tilde{x}_5$, $x_3 = \tilde{x}_3 - \sqrt{2} \tilde{x}_5$, $x_4 = \tilde{x}_4 + \sqrt{2} \tilde{x}_5$ and $x_j = \tilde{x}_j$ for $j = 5, \ldots, p$. The response is generated from
	\begin{equation}
		 \p(y = r|\tilde{\bx}) \propto \exp\{f_r(\tilde{\bx})\}, r = 1, \ldots, 4,
	\end{equation}
	where $f_1(\tilde{\bx}) = -a\tilde{x}_1 + a\tilde{x}_4$, $f_2(\tilde{\bx}) = a\tilde{x}_1 - a\tilde{x}_2$, $f_3(\tilde{\bx}) = a\tilde{x}_2 - a\tilde{x}_3$ and $f_4(\tilde{\bx}) = a\tilde{x}_3 - a\tilde{x}_4$ with $a = 5/\sqrt{3}$. The signal set $S^* = \{1, 2, 3, 4, 5\}$. $n=200$ and $p=2000$.
\end{example}

In this example, $x_5$ is marginally independent of $y$, therefore the marginal methods are expected to fail to capture $x_5$. Results are summarized in the right panel of Table \ref{table: exp samworth_isis}.

We observe that ISIS, RaSE\textsubscript{1}-BIC, and, RaSE\textsubscript{1}-eBIC lead to better performances. Without iteration, RaSE-BIC still performs competitively compared to other non-iterative approaches. Similar to Example \ref{exp_sis}, the iteration usually improves the performance of vanilla RaSE at small quantiles, but leads to worse performance at large quantiles possibly due to the aggressiveness of iterative RaSE.

To justify the effectiveness of RaSE methods in dealing with more complicated predictors, we add two additional examples in Appendix \ref{subsubsec: additional simulations}. In Example \ref{exp_realistic}, we consider realistic predictors, with the same conditional model $y|\bm{x}$ as in Example \ref{exp_sis}. In Example \ref{exp_mixed}, we use a mix of continuous and discrete variables, with the same conditional model as in Example \ref{exp_knn}. While we have similar findings as in Examples \ref{exp_sis} and \ref{exp_knn}, the performance of most approaches become slightly worse, showing the challenges for analyzing real data.

\subsection{Real data experiments}\label{subsec: real data}

In this section, we investigate the performance of RaSE screening methods on two real data sets. Each data set is randomly divided into training data and test data. As suggested by \cite{fan2008sure}, we select variables via different screening methods on training data, then the LASSO, $k$NN and SVM are fitted based on the selected variables on training data, and finally we evaluate different screening methods based on their corresponding post-screening performance on test data. As benchmarks, we also fit LASSO, $k$NN and SVM models on the training data without screening. Following \cite{fan2008sure}, we choose the top $[n/\log n]$ variables for all screening methods, i.e. let $N = [n/\log n]$ in Algorithms \ref{algo: rase screening} and \ref{algo: iterative rase screening}. Note that we could also choose $[\alpha D]$ variables for any $\alpha > 1$, which is motivated by \eqref{S alpha hat}. Another possibility is to use data-driven strategies. For instance, we could sample out a separate validation data set and use the post-screening validation MSE/classification error to determine $N$.\footnote{For RaSE methods, sometimes there might be less than $[n/\log n]$ variables which have positive selected proportion. In this case, we randomly choose from the remaining variables with 0 selected proportions to have the desired number of selected variables. } We randomly divide the whole data set into 90\% training data and 10\% test data in each of 200 replications, and apply various screening methods on training data. Each time, both training and test data are standardized by using the center and scale of training data. 

\subsubsection{Colon cancer data set}
This data set was collected by \cite{alon1999broad} and consists of 2000 genes measured on 62 patients, of which 40 are diagnosed with colon cancer (class 1) and 22 are healthy (class 0). The information on each gene is represented as a continuous variable. The prediction results are summarized in the left panel of Table \ref{table: real data}. 

The table shows that SIS, ISIS, MDC-SIS, CIS, RaSE-BIC, RaSE\textsubscript{1}-BIC, RaSE-eBIC, and RaSE\textsubscript{1}-eBIC improve the performance of vanilla LASSO. In addition,  RaSE-BIC with LASSO achieves the best performance among all post-screening procedures based on LASSO. Besides, RaSE-$k$NN with $k$NN and RaSE\textsubscript{1}-$k$NN with $k$NN lead to better results than those of vanilla $k$NN. RaSE-SVM and RaSE\textsubscript{1}-SVM also improve the performance of vanilla SVM, demonstrating the effectiveness of RaSE to improve various vanilla methods.

For results of RaSE methods, we also gather the top 10 selected features in 200 replications and calculate the percentages of selection of these top features out of 200 replications. The 10 features with the highest percentages (selection rates) are plotted in Figure \ref{fig: alon_perc} in Appendix \ref{subsec: rase numerical}. We notice that the first few features have high or moderately high selection rates ($100\%$ or $> 50\%$, respectively), implying that they are frequently selected in different replications. These results demonstrate that RaSE-based variable screening methods are reasonably stable.

\begin{table}[!h]
\renewcommand{\arraystretch}{0.7}
\setlength{\tabcolsep}{9pt}
\begin{center}
\begin{tabular}{l|c|ccccc}
\Xhline{1pt}
Screening &Post-screening &Cancer &Eye \\
\hline
--- &\multirow{12}{*}{LASSO} & 0.1792(0.1427) & 0.0103(0.0091) \\
SIS & & 0.1633(0.1407) & \textit{0.0091(0.0068)} \\
ISIS & & 0.1767(0.1444) & \textit{0.0091(0.0068)} \\
SIRS & & 0.2800(0.1734) & 0.0132(0.0123) \\
DC-SIS & & 0.3000(0.1998) & 0.0124(0.0118) \\
MV-SIS & & 0.2958(0.1826) & --- \\
HOLP & & 0.1825(0.1491) & 0.0228(0.0269) \\
IPDC & & 0.1917(0.1464) & 0.0129(0.0132) \\
MDC-SIS & & 0.1600(0.1406) & 0.0103(0.0071) \\
CIS & & 0.1550(0.1332) & 0.0194(0.0231) \\
RaSE-BIC & & \textbf{0.1192(0.1277)} & \textbf{0.0090(0.0066)} \\
RaSE\textsubscript{1}-BIC & & \textit{0.1417(0.1324)} & 0.0123(0.0104) \\
RaSE-eBIC & & 0.3083(0.2118) & 0.0092(0.0069) \\
RaSE\textsubscript{1}-eBIC & & 0.1458(0.1397) & 0.0122(0.0098) \\
\hline
--- &\multirow{3}{*}{$k$NN} & 0.2258(0.1653) & 0.0166(0.0206)  \\
RaSE-$k$NN & & 0.1533(0.1340) & 0.0131(0.0158) \\
RaSE\textsubscript{1}-$k$NN & & 0.1867(0.1500) & 0.0133(0.0161) \\
\hline
--- &\multirow{3}{*}{SVM} & 0.2025(0.1503) & 0.0160(0.0243) \\
RaSE-SVM & & \textit{0.1375(0.1277)} & 0.0158(0.0231) \\
RaSE\textsubscript{1}-SVM & & 0.1858(0.1477) & 0.0158(0.0232) \\
\Xhline{1pt}
\end{tabular}
\caption{Average test classification error rate with standard deviations (in parentheses) for colon cancer data set and average test mean square errors (MSEs) with standard deviations  (in parentheses) for rat eye expression data set. We boldface the values corresponding to the best performances and italicize the values corresponding to the subsequent two best performances. }
\label{table: real data}
\end{center}
\end{table}

\subsubsection{Rat eye expression data set}\label{subsubsec: rat}
This data set was used by \cite{scheetz2006regulation, fan2011nonparametric, wang2016high, zhong2015iterative, nandy2020covariate} among others. It contains the gene expression values corresponding to  18976 probes from the eyes of 120 twelve-week-old male F2 rats. Among the 18976 probes, TRIM32 is the response, which is responsible to cause Bardet-Biedl syndrome. We follow \cite{wang2016high} to focus on the top 5000 genes with the highest sample variance. Therefore, the final sample size is 120 and there are 5000 predictors. The right panel of Table \ref{table: real data} shows the test average mean squared error (MSE) coupled with the standard deviation for each post-screening procedure.

The results show that SIS, ISIS, RaSE-BIC and RaSE-eBIC with LASSO achieve comparable performance, which are better than that of the vanilla LASSO. RaSE-$k$NN with $k$NN and RaSE\textsubscript{1}-$k$NN with $k$NN enhance the vanilla $k$NN method as well. RaSE-SVM with SVM and RaSE\textsubscript{1}-SVM with SVM only slightly improve the vanilla SVM for this data set. Note that MV-SIS is not directly applicable to this data set. It is possible to discretize all the variables to make MV-SIS work. See Section 4.2 in \cite{cui2015model} for details.

Similar to the colon data set, we also demonstrate the stability of RaSE methods in Figure \ref{fig: rat_perc}.
\section{Discussion}\label{sec:discussion}
In this article, we propose a very general screening framework named RaSE screening, based on the random subspace ensemble method. We can equip it with any criterion function for comparing subspaces. By comparing subspaces instead of single predictors, RaSE screening can capture signals without marginal effects on response. Besides, an iterative version of the RaSE screening framework is introduced to enhance the performance of vanilla RaSE and relax the restriction on $B_2$. In the theoretical analysis, we establish sure screening property for both vanilla and iterative RaSE frameworks under some general conditions. The rank consistency is also proved for the vanilla RaSE. We investigate the relationship between the signal strength and the appropriate choice of $B_2$, which shows that in some sense the weaker the signal is, a larger $B_2$ is necessary for RaSE to succeed. In the numerical studies, the effectiveness of RaSE and its iterative version is verified through multiple simulation examples and real data analyses.

The success of RaSE leverages on proper choices of $\cri$ (the criterion), $B_1$ (the number of subspace groups),  $B_2$ (the number of subspace candidates in each group), and $D$ (the maximum subspace size). While we have studied their impacts on the performance of RaSE, there exists potential improvement for choosing these ``tuning" parameters. For example, the subspace distribution at each iteration step could be further generalized, e.g., we can choose $D$ from the empirical distribution of the sizes of the selected $B_1$ subspaces.

There are many other interesting problems worth further studying. The first question is that whether there is an adaptive way to automatically select the number of iterations ($T$). A possible solution is cross-validation and to stop the iteration process when the performance of RaSE on validation data stops improving further. Another interesting topic is to use different $B_2$ values in different iteration steps, which might further speed up the computation. 

\spacingset{1} 

\subsection*{Acknowledgements}
We are grateful to the editor, the AE, and anonymous reviewers for their insightful comments which have greatly improved the scope and quality of the manuscript.
\bibliography{reference.bib}
\bibliographystyle{apalike}
\spacingset{1.5} 
\newpage
\setcounter{page}{1}
\begin{appendices}
\renewcommand{\theequation}{\thesection.\arabic{equation}}
\bigskip
\begin{center}
{\large\bf Supplementary Materials of ``RaSE: A Variable Screening Framework via Random Subspace Ensembles"}
\end{center}

%
%
%
%

\section{Additional Details of This Paper}\label{sec: additional details}
\subsection{Vanilla RaSE algorithm in \cite{tian2021rase}}\label{subsec: rase algo}
See Algorithm \ref{algo: rase}.
\begin{algorithm}
\caption{Random subspace ensemble classification (RaSE)}
\label{algo: rase}
\KwIn{training data $\{(\bm{x}_i, y_i)\}_{i = 1}^n$, new data $\bm{x}$, subspace distribution $\mathcal{D}$, criterion $\mathcal{C}$, integers $B_1$ and $B_2$, type of base classifier $\mathcal{T}$}
\KwOut{predicted label $C^{RaSE}_n(\bx)$, the selected proportion of each feature $\hat{\bm{\eta}}$}
Independently generate random subspaces $S_{b_1b_2} \sim \mathcal{D}, 1 \leq b_1 \leq B_1, 1 \leq b_2 \leq B_2$\\
\For{$b_1 \leftarrow 1$ \KwTo $B_1$}{
  Select the optimal subspace $S_{b_1*}$ from $\{S_{b_1b_2}\}_{b_2 = 1}^{B_2}$ according to $\mathcal{C}$ and $\mathcal{T}$ \\
  Train $C_n^{S_{b_1*}-\ty}$ in subspace $S_{b_1*}$
}
Construct the ensemble decision function $\nu_n(\bm{x}) = \frac{1}{B_1}\sum_{b_1 = 1}^{B_1}C_n^{S_{b_1*}-\ty}(\bm{x})$ \\
Set the threshold $\hat{\alpha}$ according to minimize training error\\
Output the predicted label $C^{RaSE}_n(\bx) = \mathds{1}(\nu_n(\bm{x}) > \hat{\alpha})$, the selected proportion of each feature $\hat{\bm{\eta}} =(\heta_1,\ldots,\heta_p)^T$ where $\heta_l=B_1^{-1}\sum_{b_1=1}^{B_1}\mathds{1}(l\in S_{b_1*}), l=1,\ldots,p$
\end{algorithm}

\subsection{Additional analysis on iterative RaSE}\label{subsec: rase iteration}
\begin{assumption}\label{asmp: iter b2}
	We assume the following conditions hold:
	\begin{enumerate}[(i)]
		\item For any $j \in S^*_{[0]}$ and $d \geq 1$, we have
			\begin{equation}
				\cri(S) - \cri(S') > 2\epsilon_n,
			\end{equation}
			for any $S$ and $S'$ satisfying $|S \cap S^*| < d$, $|S' \cap S^*| \geq d$, $S' \ni j$, and $S \not\ni j$.
		\item For any $j \in S^*_{[1]}$ and $d \geq |S^*_{[0]}|+1$, we have
			\begin{equation}
				\cri(S) - \cri(S') > 2\epsilon_n,
			\end{equation}
			for any $S$ and $S'$ satisfying $|S \cap S^*| < d$, $|S' \cap S^*| \geq d$, $S' \ni j$, and $S \not\ni j$.
	\end{enumerate}
\end{assumption}

\begin{remark}
	By using the definition of detection complexity, Assumption \ref{asmp: iter b2} can be stated as follows.
	\begin{enumerate}[(i)]
		\item Any $j \in S^*_{[0]}$ is detectable in any complexity $d \geq 1$ w.r.t. any $\bar{S}_j \ni j$ and $\bar{S}_j^0 = \emptyset$.
		\item Any $j \in S^*_{[1]}$ is detectable in any complexity $d \geq |S^*_{[0]}|+1$ w.r.t. any $\bar{S}_j \ni j$ and $\bar{S}_j^0 = \emptyset$.
	\end{enumerate}
\end{remark}

\begin{proposition}\label{prop: iter b2}
	Under Assumption \ref{asmp: iter b2}:
	\begin{enumerate}[(i)]
		\item When $B_2 \asymp (p/D)^{|S^*_{[0]}|+1}$, Assumption \ref{asmp: sure screening} holds.
		\item Let $D \geq p^*$. There exist $B_2 \asymp p/D$ in the first step, and $B_2\lesssim D^{|S^*_{[0]}|}(\log p)^{|S^*_{[1]}|-1} p$ in the second step, such that Assumption \ref{asmp: sure screening iterative} holds.
	\end{enumerate}
\end{proposition}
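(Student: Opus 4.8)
The plan is to verify the two abstract conditions directly from the concrete separation inequalities of Assumption \ref{asmp: iter b2}, namely Assumption \ref{asmp: sure screening} in part (i) and Assumption \ref{asmp: sure screening iterative} in part (ii); since Proposition \ref{prop: iter b2} only asserts that the assumptions hold, the whole task reduces to producing, for each signal, an admissible set $\bar{S}_j$ and bounding the exceedance expectations $\be_{\smj{}}[\delta_j(\smj{})^{\cdots}]$.

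For part (i), I would recast Assumption \ref{asmp: iter b2} in the language of detection complexity (Definition \ref{def: detection}), as the accompanying remark indicates. For $j \in S^*_{\mcf}$, taking $\bar{S}_j = \{j\}$ and $S_j^0 = \emptyset$ turns Assumption \ref{asmp: iter b2}(i) with $d = 1$ into exactly the separation required by Definition \ref{def: detection}, so $\mathpzc{d}_j = 1$. For $j \in S^*_{\mcs}$, I would choose $\bar{S}_j = S^*_{\mcf} \cup \{j\}$, a cardinality-$(|S^*_{\mcf}|+1)$ subset of $S^*$; then any $S' \supseteq \bar{S}_j$ satisfies $|S' \cap S^*| \ge |S^*_{\mcf}|+1$, and Assumption \ref{asmp: iter b2}(ii) with $d = |S^*_{\mcf}|+1$ gives $\mathpzc{d}_j \le |S^*_{\mcf}|+1$. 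Hence $\mathpzc{d} = \max_{j \in S^*}\mathpzc{d}_j = |S^*_{\mcf}|+1$, Assumption \ref{asmp: sure screening joint contribution} holds at this level, and Proposition \ref{prop: sure screening joint contribution} yields Assumption \ref{asmp: sure screening} as soon as $B_2 \asymp (p/D)^{|S^*_{\mcf}|+1}$. The one point to check carefully is that $\delta_j$ vanishes on all non-covering subspaces that miss enough signals, i.e. that the complexity reduction survives the subspaces which happen to contain $j$ but omit the rest of $\bar{S}_j$; this is where the joint-detectability reading of Assumption \ref{asmp: iter b2} is used.

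For part (ii) the first display of Assumption \ref{asmp: sure screening iterative} is the complexity-one instance just described, restricted to $S^*_{\mcf}$. With $\bar{S}_j = \{j\}$, Assumption \ref{asmp: iter b2}(i) forces $\delta_j^{\mcf}(\smj{}) = 0$ whenever $\smj{} \cap S^* = \emptyset$, so that $\be_{\smj{}}[\delta_j^{\mcf}(\smj{})^{\frac{1}{2} B_2 p^{\mcf}}] \le \bp_{\smj{}}(\smj{}\cap S^* \neq \emptyset) \lesssim p^{*}\,p^{\mcf} \asymp D/p$. Because $p^{\mcf} \asymp D/p$ makes the exponent $\Theta(1)$, the product $B_2 \cdot (D/p)$ is bounded precisely when $B_2 \asymp p/D$, reproducing the first-step requirement.

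The second display is the crux and the main obstacle. Here I would invoke Lemma \ref{lem: ind general} to split, conditionally on $N_j$, the covering and non-covering subspaces drawn from the reweighted $\mathcal{F} \in \Upsilon$, and then exploit the defining feature of $\Upsilon$: every coordinate of $S^*_{\mcf}$ carries mass $\tilde{\eta}_j \ge c_2^*/(D+C_0)$, hence is covered with probability bounded below by a constant, while the reweighting floor $C_0/p$ pins the cover probability of an undetected coordinate to $p^{\mcs}$. Consequently a subspace that merely covers $\{j\}$ for $j \in S^*_{\mcs}$ also covers all of $S^*_{\mcf}$ with constant probability and therefore carries at least $|S^*_{\mcf}|+1$ signals; this is exactly what collapses the second-step detection of $S^*_{\mcs}$ to an effective complexity one and relaxes $B_2$. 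To make this quantitative I would bound $\be_{\smj{}}[\delta_j^{\mcs}(\smj{})^{\frac{1}{2} B_2 p^{\mcs}}]$ by stratifying the non-covering subspaces according to the number $k$ of \emph{other} second-step signals they capture: for the near-miss stratum covering $S^*_{\mcf}$ together with $k$ of the remaining $|S^*_{\mcs}|-1$ signals, Assumption \ref{asmp: iter b2}(ii) applied with the integer $|S^*_{\mcf}|+k+1$ pinches $\delta_j^{\mcs}$ below $1$ by a margin of order $(p^{\mcs})^{k}$, while the stratum has probability of order $\binom{|S^*_{\mcs}|-1}{k}(p^{\mcs})^{k}$ times the constant $S^*_{\mcf}$-covering factor. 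The hard part is to choose $B_2$ so that the concentration exponent $\frac{1}{2} B_2 p^{\mcs}$ defeats every stratum at once; the binding stratum is $k = |S^*_{\mcs}|-1$, and balancing the concentration rate against the $\log p$-scale tails generated by the union over these $|S^*_{\mcs}|-1$ competing signals and by the covering of the $|S^*_{\mcf}|$-block is what I expect to produce the multiplicative factors in $B_2 \lesssim D^{|S^*_{\mcf}|}(\log p)^{|S^*_{\mcs}|-1}p$. Carrying out this combinatorial bookkeeping, and confirming that the floor $C_0/p$ keeps $p^{\mcs}$ large enough for the exponential bound to bite, is the step I expect to be the most delicate; once both displays are verified the statement follows.
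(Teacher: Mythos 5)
Your part (i) and the first display of part (ii) track the paper's argument: the paper likewise observes that Assumption \ref{asmp: iter b2} gives $\mathpzc{d} = |S^*_{[0]}|+1$ so that Proposition \ref{prop: sure screening joint contribution} delivers Assumption \ref{asmp: sure screening}, and handles the first step as the complexity-one case with $B_2 \asymp p/D$. The genuine gap is in the second display of part (ii), which you yourself identify as the crux but only sketch, and the mechanism you sketch would not produce the stated bound. You stratify the non-covering subspaces by the number $k$ of other second-step signals they capture and assign each such capture the floor probability $p^{\mcs} \asymp 1/p$, expecting the factor $(\log p)^{|S^*_{[1]}|-1}$ to emerge from ``$\log p$-scale tails'' over these strata. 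But with floor weights the stratum probabilities are powers of $1/p$, which can never generate $\log p$ factors; worse, your claim that the margin pinching $\delta_j^{\mcs}$ below $1$ on the $k$-th stratum is of order $(p^{\mcs})^k$ would force $B_2 p^{\mcs} (p^{\mcs})^{k} \gtrsim 1$, i.e.\ $B_2 \gtrsim p^{k+1}$, which for $k \geq 1$ wildly exceeds the target $D^{|S^*_{[0]}|}(\log p)^{|S^*_{[1]}|-1}p$. Your accounting also ignores that $\Upsilon$ permits the \emph{other} second-step signals to carry weight far above the floor after the first step, which is simultaneously the real danger (non-covering subspaces capture extra signals with non-negligible probability, so the exceedance condition $B_2\,\bp(|\smj{}\cap S^*| \text{ too large}) \lesssim 1$ is at risk) and the real source of the logarithmic factor.

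The paper's proof resolves this precisely through the thresholding step of Algorithm \ref{algo: iterative rase screening}, which your proposal never invokes. It introduces $\tilde{S}_{[0]}^* = \{i: \heta^{[0]}_i > C_0/\log p\}\backslash S^*_{[0]}$, the set of above-threshold variables after the first step, and derives the three-tier weight bounds $\inf_{i\in S^*_{[0]}}\tilde{\eta}^{[0]}_i \gtrsim D^{-1}$, $\inf_{i\in \tilde{S}^*_{[0]}}\tilde{\eta}^{[0]}_i \gtrsim (\log p)^{-1}$, $\inf_{i\in S^*_{[1]}\backslash \tilde{S}^*_{[0]}}\tilde{\eta}^{[0]}_i \gtrsim p^{-1}$. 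It then splits into cases according to whether $j \in \tilde{S}_{[0]}^* \cup S_{[0]}^*$ and whether any second-step signal remains outside that set, and in each case pins $B_2$ between a covering requirement $B_2\,\bp\bigl(S_{11}^{[1]} \supseteq S_{[0]}^* \cup \tilde{S}_{[0]}^* \cup \{j\}\bigr) \gtrsim 1$ and the exceedance requirement $B_2\,\bp\bigl(|S_{11}^{[1]} \cap S^*| > |S_{[0]}^* \cup \tilde{S}_{[0]}^*|\bigr) \lesssim 1$; the key observation making these compatible is that exceeding the high-weight set requires catching a floor-weight variable, so the exceedance probability is comparable to the covering probability. Since covering $S^*_{[0]}$ costs $D^{-|S^*_{[0]}|}$, covering $\tilde{S}^*_{[0]}$ costs $(\log p)^{-|\tilde{S}^*_{[0]}|}$ with $|\tilde{S}^*_{[0]}| \leq |S^*_{[1]}|-1$ in the binding case, and covering $j$ costs $p^{-1}$, one gets $B_2 \lesssim D^{|S^*_{[0]}|}(\log p)^{|S^*_{[1]}|-1}p$ --- the $\log p$ factors are the price for covering subspaces to catch all above-threshold signals, not a tail or union-bound phenomenon. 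Note also that the hypothesis $D \geq p^*$, which you never use, is needed in the case $S_{[0]}^* \cup \tilde{S}_{[0]}^* = S^*$, where the requirement becomes $B_2 \gtrsim \bigl(\bp(S_{11}^{[1]} \supseteq S^*)\bigr)^{-1}$ and a subspace of size at most $D$ must be able to contain all of $S^*$.
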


\subsection{More numerical experiment results}\label{subsec: rase numerical}

\subsubsection{Further illustration of ISIS used in numerical studies}\label{subsubsec: isis}
As described at the beginning of Section \ref{sec:numerical}, we report the results of ISIS with two screening steps and one selection step. To facilitate readers' understanding, we would like to explain the details via the following toy example. Suppose we have $n = 10$ observations with $p=10$ predictors, i.e. features 1-10. ISIS first screens 10 variables and selects the top $[n/\log n] \approx 4$ variables (screening step 1), which follows by fitting a LASSO model on these four variables and select those with non-zero coefficients (selection step 1). Suppose in screening step 1, the feature ranking is 4, 3, 1, 2, 6, 5, 7, 8, 10, 9. ISIS selects features 1-4 while in selection step 1, only features 1 and 4 have non-zero coefficients and are selected. Then ISIS screens on the remaining 8 features, and outputs the ranking of them (screening step 2). Suppose the ranking is 3, 7, 5, 9, 2, 6, 8, 10. In this case, the final output ranking from ISIS is 4, 1, 3, 7, 5, 9, 2, 6, 8, 10. The details can be summarized as follows:
\begin{itemize}
	\item \underline{Screening step 1:} 4, 3, 1, 2, 6, 5, 7, 8, 10, 9. Choose features 4, 3, 1, 2 to fit the LASSO.
	\item \underline{Selection step 1:} 4, 1. (LASSO selects features 4 and 1)
	\item \underline{Screening step 2:} 3, 7, 5, 9, 2, 6, 8, 10. (by screening on the remaining 8 features.)
	\item \underline{Final output ranking:} 4, 1, 3, 7, 5, 9, 2, 6, 8, 10.
\end{itemize}

\subsubsection{Additional simulations}\label{subsubsec: additional simulations}

\begin{example}[Example \ref{exp_sis} with realistic predictors, motivated by Model (3)(d) in \cite{nandy2020covariate}]\label{exp_realistic}
	In each replication, we randomly sample $p = 2000$ predictors from 18976 variables in rat eye expression data set, which is used in our real data example in Section \ref{subsubsec: rat}. Then we generate the response from the same model \eqref{eq: model in exp 1} in Example \ref{exp_sis}. Here the sample size $n = 120$, as in the rat eye expression data set.
\end{example}

The goal of this example is to demonstrate the effectiveness of RaSE for realistic predictors. The performances of different approaches are summarized in the left panel of Table \ref{table: exp realistic_mixed}, leading to similar findings as in Example \ref{exp_sis} with RaSE\textsubscript{1}-eBIC still achieving the best overall performance. Note that HOLP seems to have a much better performance than that in Example \ref{exp_sis}.

\begin{table}[!h]
\renewcommand{\arraystretch}{0.7}
\setlength{\tabcolsep}{7pt}
\begin{center}
\begin{threeparttable}
\begin{tabular}{l|rrrrr|rrrrr}
\Xhline{1pt}
\multirow{2}{*}{Method/MMS} & \multicolumn{5}{c}{Example 7} & \multicolumn{5}{c}{Example 8} \\ \cline{2-11}
&5\% &25\%  &50\%  &75\%  &95\% &5\% &25\%  &50\%  &75\%  &95\%  \\
\hline
SIS&  173&  660&  1184&  1634&  1907& 7&  41&  221&  599&  1847\\
ISIS&  15&  18&  24&  115&  896& 114&  850&  1392&  1806&  1969\\
SIRS&  564&  1174&  1537&  1798&  1944& 6&  1226&  1602&  1864&  1960\\
DC-SIS&  675&  1177&  1552&  1823&  1958& 61&  1153&  1620&  1831&  1976\\
HOLP&  4&  11&  44&  151&  657& 63&  270&  604&  1217&  1847\\
IPDC&  487&  936&  1334&  1642&  1940& 220&  588&  940&  1411&  1841\\
MDC-SIS&  110&  602&  1150&  1617&  1922& 7&  36&  148&  1218&  1915\\
CIS&  264&  703&  1222&  1644&  1936& 2000&  2000&  2000&  2000&  2000\\
RaSE-BIC&  4&  11&  34&  98&  1131& 8&  949&  1512&  1797&  1954\\
RaSE\textsubscript{1}-BIC&  4&  4&  4&  53&  101& 18&  1228&  1618&  1872&  1966\\
RaSE-eBIC&  4&  19&  59&  822&  1829& 7&  448&  1619&  1780&  1965\\
RaSE\textsubscript{1}-eBIC&  4&  4&  6&  10&  40& 1068&  1527&  1753&  1896&  1984\\
RaSE-$k$NN&  12&  109&  270&  1447&  1911& 5&  5&  10&  326&  1740\\
RaSE\textsubscript{1}-$k$NN&  4&  33&  658&  1412&  1834& 5&  5&  5&  842&  1858\\
RaSE-SVM&  7&  61&  204&  937&  1802& 5&  5&  8&  94&  1504\\
RaSE\textsubscript{1}-SVM&  4&  32&  72&  880&  1707& 5&  5&  5&  6&  675\\
\Xhline{1pt}
\end{tabular}
\end{threeparttable}
\caption{Quantiles of MMS in Examples \ref{exp_realistic} and \ref{exp_mixed}.}
\label{table: exp realistic_mixed}
\end{center}
\end{table}

\begin{example}[Example \ref{exp_knn} with mixed types of covariates]\label{exp_mixed}
	We generate $\tilde{\bx}' \sim N(\bm{0}, \Sigma)$ where $\Sigma = (\sigma_{ij})_{(4p/5) \times (4p/5)} = (0.5^{|i-j|})_{(4p/5) \times (4p/5)}$, $\tilde{\bx}' \in \mathbb{R}^{4p/5}$, and $p = 2000$. And we generate $\tilde{\bx}'' \sim \textup{Unif}(\{-2,-1,0,1,2\})$, where $\tilde{\bx}'' \in \mathbb{R}^{p/5}$. Then we construct $\tilde{\bm{x}}$ by letting $\tilde{\bx}_{S^c} = \tilde{\bx}'$ and $\tilde{\bx}_{S} = \tilde{\bx}''$, where $S = \{5, 10, \ldots, 1995, 2000\}$. Generate $z \sim \textup{Unif}(\{-3, 3\}) \ind \tilde{\bx}$ and $\bx = \tilde{\bx} + z\mathbf{1}_p$. The response is generated from the same model \eqref{eq: model in exp 3} in Example \ref{exp_mixed}. The signal set $S^* = \{1, 2, 3, 4, 5\}$. $n=200$.
\end{example}

This example is used to verify the effectiveness of RaSE in dealing with mixed types of covariates, which is very common in applications. Results are summarized in the right panel of Table \ref{table: exp realistic_mixed}, which are similar to those of Example \ref{exp_knn}. MDC-SIS and SIS still achieve a good performance at $5\%$ and $25\%$ quantiles, with RaSE\textsubscript{1}-SVM outperforming all other methods.

\begin{figure}[!h]
  \centering
  \subfloat[RaSE-BIC]{\label{fig: D_settings_pic}\includegraphics[width=.5\textwidth]{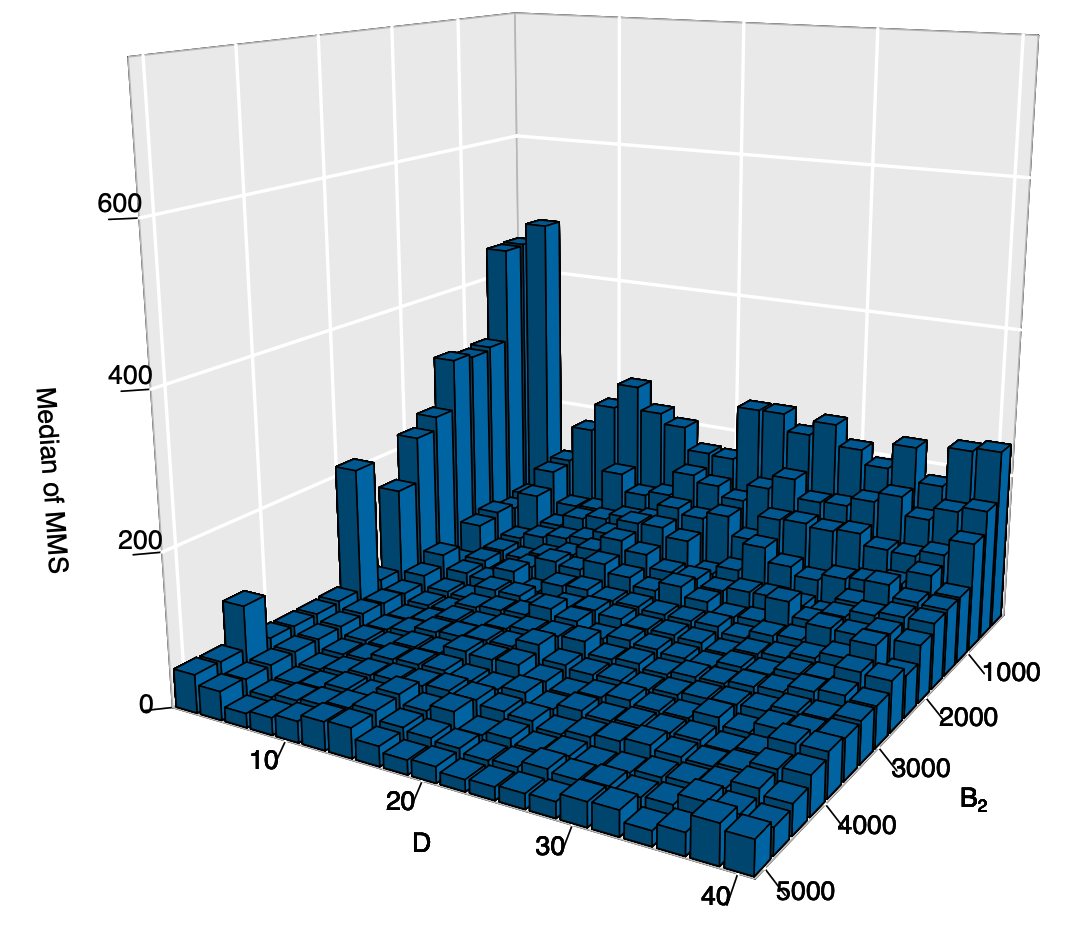}}
  \subfloat[RaSE\textsubscript{1}-BIC]{\label{fig: D_settings_1_pic}\includegraphics[width=.5\textwidth]{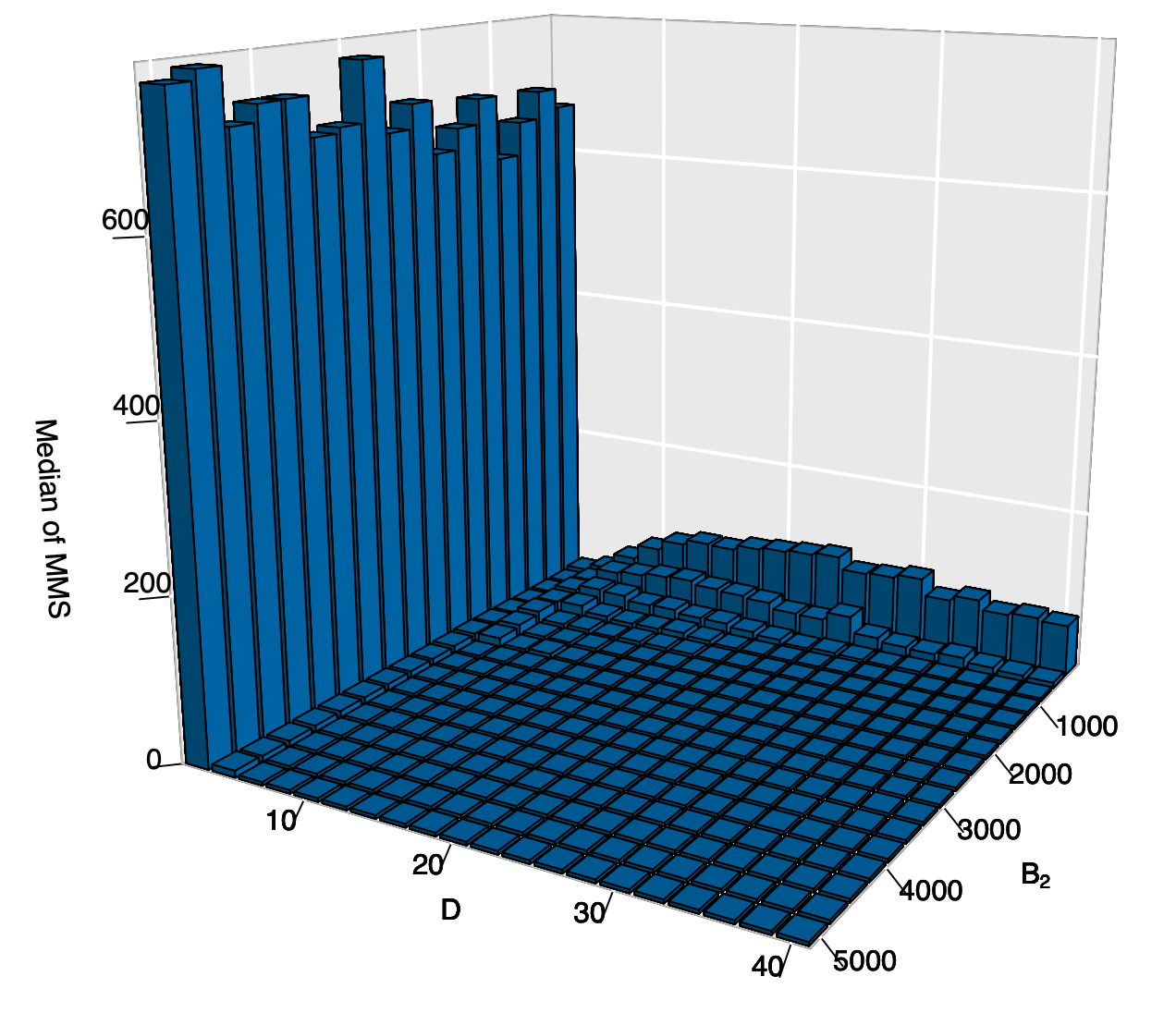}}
  \caption{Median MMS to capture $S^*$ ($|S^*| = 4$) as $(D, B_2)$ varies for RaSE-BIC (a) and RaSE\textsubscript{1}-BIC (b) in Example \ref{exp_sis}.}
  \label{fig: D_settings}
\end{figure}

\subsubsection{Additional figures}
See Figures \ref{fig: exp_samworth}, \ref{fig: alon_perc} and \ref{fig: rat_perc}. The full name of variables in Figure \ref{fig: alon_perc} can be found at \url{http://genomics-pubs.princeton.edu/oncology/affydata/names.html}

\begin{figure}[!h]
	\centering
	\includegraphics[width=0.8\textwidth]{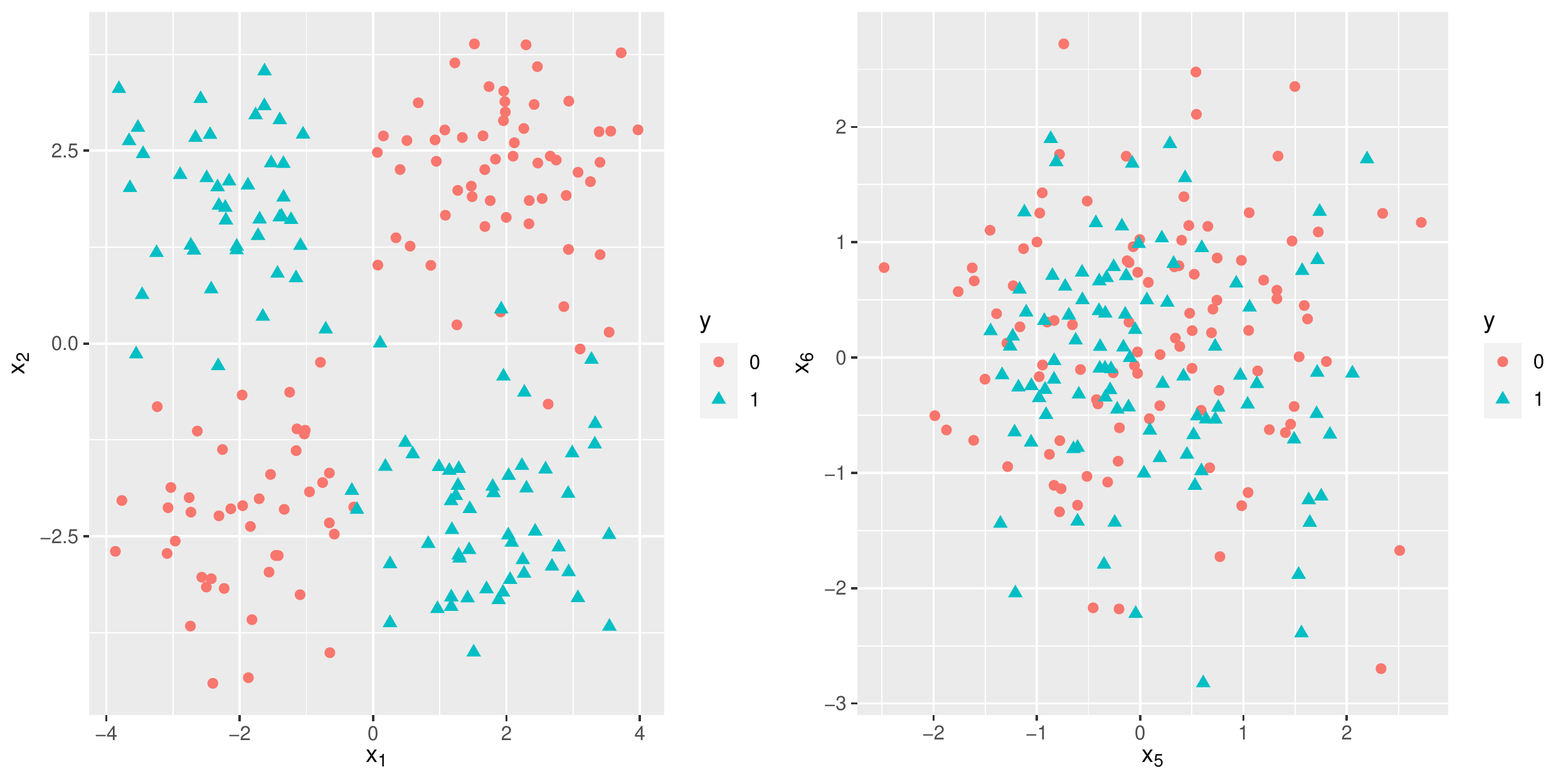}
	\caption{Scatterplots of $x_2$ vs. $x_1$ and $x_6$ vs. $x_{5}$ for Example \ref{exp_samworth} ($n = 200$).}
	\label{fig: exp_samworth}
\end{figure}

\begin{figure}[hp]
	\centering
	\includegraphics[width=\textwidth]{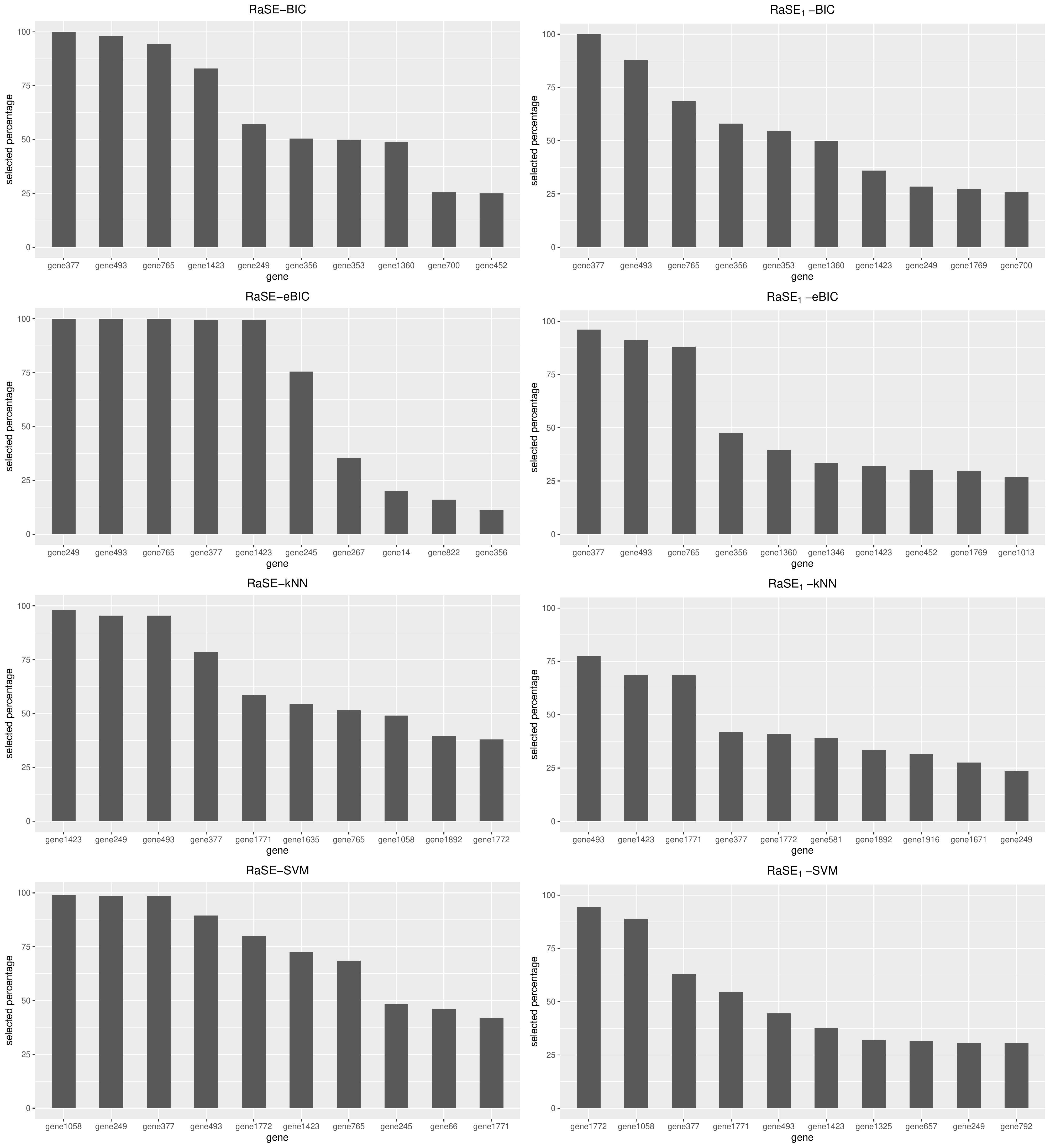}
	\caption{Features with the 10 highest selection rates (percentages in 200 replications) in the colon cancer data set.}
	\label{fig: alon_perc}
\end{figure}

\begin{figure}[hp]
	\centering
	\includegraphics[width=\textwidth]{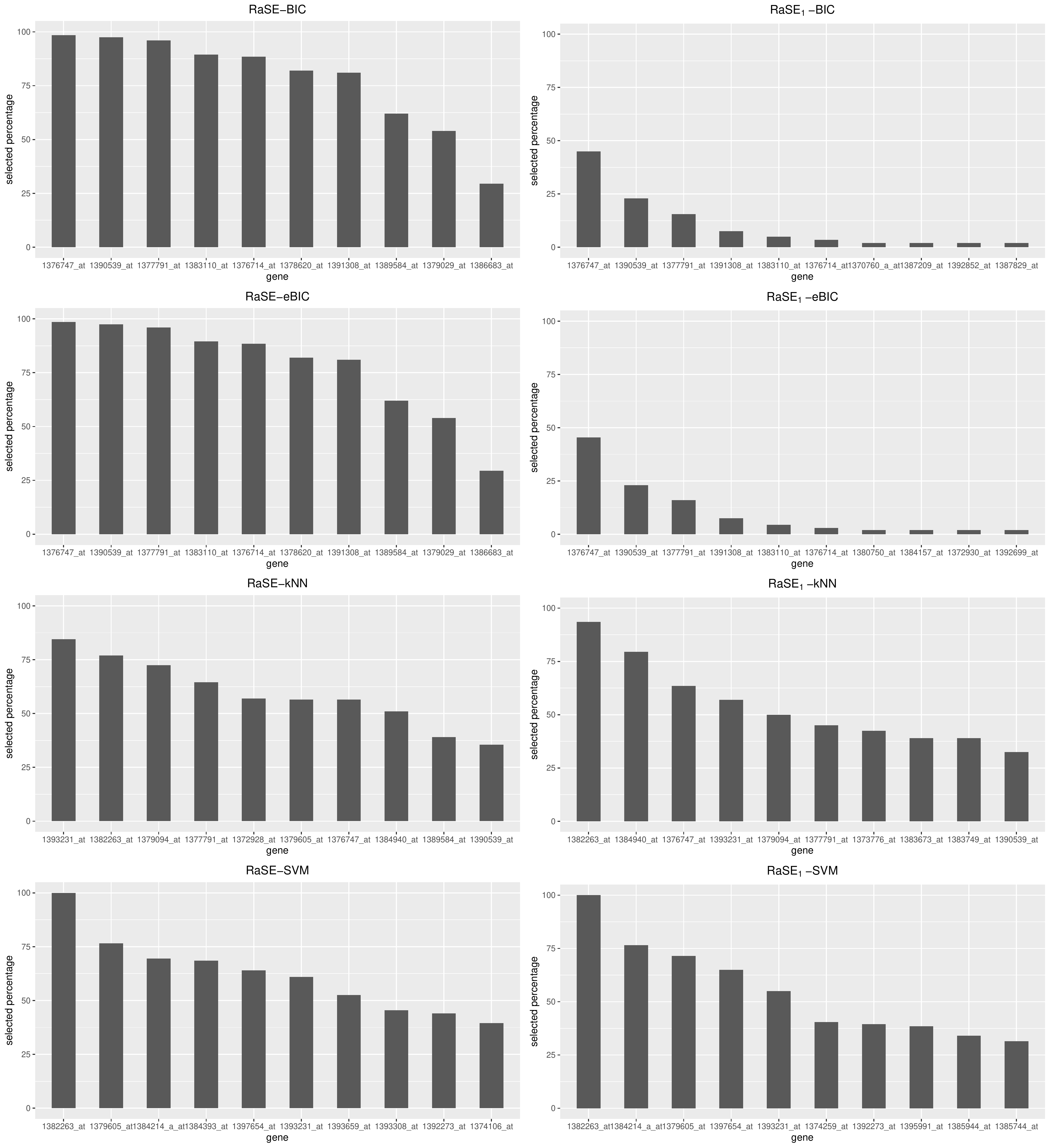}
	\caption{Features with the 10 highest selection rates (percentages in 200 replications) in the rat eye expression data set.}
	\label{fig: rat_perc}
\end{figure}

\section{Proofs}\label{sec: proofs}
In this section we use $C$ to represent a positive constant which is irrelevant to $n, D, B_1$ and $B_2$. It might take different values at different places.
\subsection{Proof of Lemma \ref{lem: ind}}
For arbitrary subsets $A_1^{(j)}, \ldots, A_{k}^{(j)}, A_{1}^{(-j)} \ldots, A_{B_2-k}^{(-j)}$, we have
	\begin{align}
		&\bp(\sj{11} = A_1^{(j)}, \ldots, \sj{1k} = A_k^{(j)}, \smj{11} = A_1^{(-j)}, \ldots, \smj{1(B_2-k)} = A_{B_2-k}^{(j)}|N_j = k) \\
		&= \frac{\bp(\sj{11} = A_1^{(j)}, \ldots, \sj{1k} = A_k^{(j)}, \smj{11} = A_1^{(-j)}, \ldots, \smj{1(B_2-k)} = A_{B_2-k}^{(j)})}{\bp(N_j = k)}\cdot \prod_{i=1}^k \mathds{1}(A_i^{(j)} \supseteq \bar{S}_j)\\
		&\quad \cdot \prod_{i=1}^{B_2-k} \mathds{1}(A_i^{(-j)} \not\supseteq \bar{S}_j)\\
		&= \frac{\binom{B_2}{k} \prod\limits_{i=1}^k \frac{1}{D}\frac{1}{\binom{p}{|A_{1i}^{(j)}|}}\prod\limits_{i=1}^{B_2-k} \frac{1}{D}\frac{1}{\binom{p}{|A_{1i}^{(-j)}|}}}{\binom{B_2}{k}p_j^k (1-p_j)^{B_2-k}}\cdot \prod_{i=1}^k \mathds{1}(A_i^{(j)} \supseteq \bar{S}_j)\cdot \prod_{i=1}^{B_2-k} \mathds{1}(A_i^{(-j)} \not\supseteq \bar{S}_j) \\
		&= \prod_{i=1}^k \frac{1}{D}\frac{1}{\binom{p}{|A_{1i}^{(j)}|}p_j}\prod_{i=1}^{B_2-k} \frac{1}{D}\frac{1}{\binom{p}{|A_{1i}^{(-j)}|}(1-p_j)}\cdot \prod_{i=1}^k \mathds{1}(A_i^{(j)} \supseteq \bar{S}_j)\cdot \prod_{i=1}^{B_2-k} \mathds{1}(A_i^{(-j)} \not\supseteq \bar{S}_j)\\
		&= \prod_{i=1}^k \frac{1}{D}\frac{1}{\binom{p}{|A_{1i}^{(j)}|}p_j}\cdot \mathds{1}(A_i^{(j)} \supseteq \bar{S}_j)\cdot \prod_{i=1}^{B_2-k} \frac{1}{D}\frac{1}{\binom{p}{|A_{1i}^{(-j)}|}(1-p_j)}\cdot \mathds{1}(A_i^{(-j)} \not\supseteq \bar{S}_j),
	\end{align}
	where the last equation is the joint density under conditions (\rom{1})-(\rom{3}). The second equality in (\rom{1}) and (\rom{2}) can be easily derived by basic algebra. This completes the proof.

\subsection{Proof of Theorem \ref{thm: sure screening}}

\begin{lemma}[Chernoff's bound]\label{lem: chernoff}
	For $N \sim \textup{Bin}(B_2, p_0)$, where $p_0 \in (0, 1)$, it holds
	\begin{align}
		\p(N > (1+\lambda)B_2p_0) &\leq \exp\left\{-\frac{\lambda^2}{2+\frac{2}{3}\lambda}B_2p_0\right\}, \lambda > 0,\\
		\p(N < (1-\lambda)B_2p_0) &\leq \exp\left\{-\frac{\lambda^2}{2+\frac{2}{3}\lambda}B_2p_0\right\}, \lambda \in (0, 1).
	\end{align}
\end{lemma}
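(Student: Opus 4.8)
The plan is to prove both tail bounds by the classical exponential (Chernoff) method: I would write $N=\sum_{i=1}^{B_2}X_i$ with $X_i\overset{i.i.d.}{\sim}\textup{Bernoulli}(p_0)$, optimize the exponent coming from the binomial moment generating function to obtain the sharp Bennett-type bound, and then relax it to the stated Bernstein form via an elementary one-variable inequality. The computation splits cleanly into an upper-tail and a lower-tail argument, which are mirror images of each other.

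For the upper tail, I would fix $t>0$ and apply Markov's inequality to $e^{tN}$, giving
\begin{equation}
\p(N>(1+\lambda)B_2p_0)\leq e^{-t(1+\lambda)B_2p_0}\,\e[e^{tN}].
\end{equation}
By independence $\e[e^{tN}]=(1-p_0+p_0e^t)^{B_2}=(1+p_0(e^t-1))^{B_2}\leq\exp\{B_2p_0(e^t-1)\}$, where the last step uses $1+x\leq e^x$. The exponent is then $B_2p_0[(e^t-1)-t(1+\lambda)]$, minimized at $t=\log(1+\lambda)$, which yields the Bennett bound
\begin{equation}
\p(N>(1+\lambda)B_2p_0)\leq\exp\{-B_2p_0[(1+\lambda)\log(1+\lambda)-\lambda]\}.
\end{equation}
For the lower tail I would take $t<0$ and write $s=-t>0$; the optimal choice is $s=-\log(1-\lambda)$, well defined for $\lambda\in(0,1)$, producing
\begin{equation}
\p(N<(1-\lambda)B_2p_0)\leq\exp\{-B_2p_0[(1-\lambda)\log(1-\lambda)+\lambda]\}.
\end{equation}

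It then remains to compare these two Bennett exponents with the target $\tfrac{\lambda^2}{2+\frac23\lambda}$. The lower tail is immediate from the power-series identity $(1-\lambda)\log(1-\lambda)+\lambda=\sum_{m\geq2}\frac{\lambda^m}{m(m-1)}\geq\frac{\lambda^2}{2}\geq\frac{\lambda^2}{2+\frac23\lambda}$. For the upper tail I would introduce
\begin{equation}
\phi(\lambda)\coloneqq(1+\lambda)\log(1+\lambda)-\lambda-\frac{\lambda^2}{2+\frac23\lambda},\qquad\lambda>0,
\end{equation}
and verify $\phi(\lambda)\geq0$. Here $\phi(0)=0$ and $\phi'(0)=0$, so it suffices to check $\phi''(\lambda)\geq0$; a short computation gives $\phi''(\lambda)=\frac{1}{1+\lambda}-\frac{27}{(3+\lambda)^3}$, whence $\phi''(\lambda)\geq0$ reduces to the polynomial inequality $(3+\lambda)^3\geq27(1+\lambda)$, i.e.\ $\lambda^3+9\lambda^2\geq0$, which is obvious. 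Substituting the two resulting inequalities into the Bennett bounds above delivers exactly the two claimed estimates.

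The only nonroutine ingredient is locating the correct elementary inequality for the upper tail and choosing an efficient route to prove it — the naive comparison of the Bennett exponent with $\frac{\lambda^2}{2+\frac23\lambda}$ looks awkward. The reduction $\phi''\geq0\Leftrightarrow(3+\lambda)^3\geq27(1+\lambda)$ is what makes this painless; the main care required is in correctly establishing $\phi'(0)=0$ and computing $\phi''$, after which everything collapses to a one-line polynomial bound.
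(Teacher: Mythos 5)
Your proof is correct and complete. Note, however, that the paper does not actually prove this lemma: its ``proof'' is a one-line citation to \cite{shalev2014understanding}, so you have supplied the details the paper delegates. Your route is the standard one that underlies the cited result: the exponential-moment (Chernoff) method with the relaxation $(1+p_0(e^t-1))^{B_2}\leq\exp\{B_2p_0(e^t-1)\}$, optimization at $t=\log(1+\lambda)$ (resp.\ $t=\log(1-\lambda)$) to get the Bennett exponents, and then the elementary comparison with $\frac{\lambda^2}{2+\frac{2}{3}\lambda}$. All your computations check out: the lower-tail identity $(1-\lambda)\log(1-\lambda)+\lambda=\sum_{m\geq2}\frac{\lambda^m}{m(m-1)}\geq\frac{\lambda^2}{2}$ is valid for $\lambda\in(0,1)$, and for the upper tail, writing $\frac{\lambda^2}{2+\frac{2}{3}\lambda}=\frac{3\lambda^2}{2(3+\lambda)}$, one indeed finds $\phi'(\lambda)=\log(1+\lambda)-\frac{3\lambda(6+\lambda)}{2(3+\lambda)^2}$ (so $\phi'(0)=0$) and $\phi''(\lambda)=\frac{1}{1+\lambda}-\frac{27}{(3+\lambda)^3}$, and the reduction to $(3+\lambda)^3\geq27(1+\lambda)\Leftrightarrow\lambda^3+9\lambda^2\geq0$ is exactly right; since $\phi(0)=\phi'(0)=0$ and $\phi''\geq0$, convexity gives $\phi\geq0$ on $(0,\infty)$. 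Two cosmetic remarks: the strict-inequality events in the statement are contained in the corresponding non-strict events, so Markov's inequality applies without fuss; and your lower-tail argument actually yields the stronger exponent $\frac{\lambda^2}{2}$, of which the stated $\frac{\lambda^2}{2+\frac{2}{3}\lambda}$ is a weakening adopted only to present both tails in a single Bernstein form.
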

\begin{proof}
	See \cite{shalev2014understanding}.
\end{proof}

\begin{lemma}[Hoeffding's inequality]\label{lem: hoeffding}
	Let $N_j \sim \textup{Bin}(B_2, p_0)$, where $p_0 \in (0, 1)$, then for arbitrary $x \in (0, p_0)$, there holds that
	\begin{equation}
		\max\{\p(N_j < B_2(p_0-x)), \p(N_j > B_2(p_0+x))\} \leq \exp\{-2B_2x^2\}.
	\end{equation}
\end{lemma}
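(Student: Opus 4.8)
The plan is to realize $N_j$ as a sum of independent bounded summands and then apply the exponential-moment (Chernoff) method together with Hoeffding's lemma. Since $N_j \sim \textup{Bin}(B_2, p_0)$, I would write $N_j = \sum_{i=1}^{B_2} X_i$ where $X_1, \ldots, X_{B_2}$ are i.i.d.\ $\textup{Bernoulli}(p_0)$, and set $Y_i = X_i - p_0$ so that $\e[Y_i] = 0$ and each $Y_i$ takes values in the interval $[-p_0, 1-p_0]$ of length $1$.

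For the upper tail, I would apply Markov's inequality to the tilted variable: for any $s > 0$,
\[
\p\big(N_j > B_2(p_0 + x)\big) \leq \p\Big(\sum_{i=1}^{B_2} Y_i \geq B_2 x\Big) \leq e^{-s B_2 x}\, \prod_{i=1}^{B_2} \e\big[e^{s Y_i}\big].
\]
The key step is Hoeffding's lemma: because each centered $Y_i$ is supported in an interval of length $1$, its moment generating function satisfies $\e[e^{s Y_i}] \leq \exp(s^2/8)$. Substituting this bound gives $\p(N_j > B_2(p_0+x)) \leq \exp\{-s B_2 x + B_2 s^2/8\}$, and minimizing the exponent over $s > 0$ (the optimizer is $s = 4x$) yields the exponent $-2 B_2 x^2$, i.e.\ the claimed bound $\exp\{-2 B_2 x^2\}$.

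For the lower tail I would run the identical argument with $Y_i$ replaced by $-Y_i = p_0 - X_i$, which is again centered and supported in an interval of length $1$; this produces $\p(N_j < B_2(p_0 - x)) \leq \exp\{-2 B_2 x^2\}$. Taking the maximum of the two tail bounds completes the proof. There is essentially no serious obstacle here: the only non-elementary ingredient is Hoeffding's lemma bounding the MGF of a bounded centered variable, which is entirely standard and could instead be invoked by citing the general Hoeffding inequality for independent bounded random variables, in the same spirit as the Chernoff bound of Lemma~\ref{lem: chernoff} was referenced via \cite{shalev2014understanding}.
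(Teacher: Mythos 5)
Your proof is correct: writing $N_j$ as a sum of i.i.d.\ centered Bernoulli variables supported on an interval of length $1$, applying Hoeffding's lemma $\e[e^{sY_i}] \leq e^{s^2/8}$, and optimizing at $s = 4x$ does give the exponent $-2B_2x^2$ for the upper tail, with the lower tail following by the symmetric argument applied to $-Y_i$. The paper offers no proof of this lemma at all---it is stated as the standard Hoeffding inequality, in the same spirit as Lemma~\ref{lem: chernoff} being cited to \cite{shalev2014understanding}---so your argument is precisely the standard derivation the paper implicitly invokes.
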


Next we prove Theorem \ref{thm: sure screening} by applying the lemmas above. Denote the training data as $\mathcal{D} = \{(\bx_i, y_i)\}_{i=1}^n$ and event $\{\supp |\cri_n(S) - \cri(S)| \leq \epsilon_n\}$ as $\mathcal{A}$. Let $j$ be an arbitrary feature in $S^*$. Because of Lemma \ref{lem: ind}, it holds
\begin{align}
	&\p(j \in S_{1*}|N_j = k, \mathcal{A}) \\
	&= \p\left(\inf_{1 \leq b_2\leq k}\cri_n(\sj{1b_2}) < \inf_{1 \leq b_2\leq B_2-k}\cri_n(\smj{1b_2})|N_j = k, \mathcal{A}\right) \\
	&\geq \p\left(\inf_{1 \leq b_2\leq k}\cri(\sj{1b_2}) + 2\epsilon_n < \inf_{1 \leq b_2\leq B_2-k}\cri(\smj{11})|N_j = k, \mathcal{A}\right) \\
	&= \p\left(\inf_{1 \leq b_2\leq k}\cri(\sj{1b_2}) + 2\epsilon_n < \inf_{1 \leq b_2\leq B_2-k}\cri(\smj{11})|N_j = k\right) \\
	&= \be_{\{\sj{1b_2}\}_{b_2=1}^k} \left[\bp\left(\inf_{1 \leq b_2\leq k}\cri(\sj{1b_2}) + 2\epsilon_n < \inf_{1 \leq b_2\leq B_2-k}\cri(\smj{11})|N_j = k, \{\sj{1b_2}\}_{b_2=1}^k\right)\right]\\
	&= \be_{\{\sj{1b_2}\}_{b_2=1}^k} \left[\bp\left(\inf_{1 \leq b_2\leq k}\cri(\sj{1b_2}) + 2\epsilon_n < \cri(\smj{11})|N_j = k, \{\sj{1b_2}\}_{b_2=1}^k\right)^{B_2-k}\right]\\
	&\geq \bp\left(\inf_{1 \leq b_2\leq k}\cri(\sj{1b_2}) + 2\epsilon_n < \cri(\smj{})|N_j = k\right)^{B_2-k} \label{eq: jensen} \\
	&= \left[1-\bp\left(\inf_{1 \leq b_2\leq k}\cri(\sj{1b_2}) + 2\epsilon_n \geq \cri(\smj{})|N_j = k\right)\right]^{B_2-k} \\
	&= \left\{1-\be_{\smj{}}\left[\bp\left(\inf_{1 \leq b_2\leq k}\cri(\sj{1b_2}) + 2\epsilon_n \geq \cri(\smj{11})|N_j = k, \smj{}\right)\right]\right\}^{B_2-k} \\
	&= \left\{1-\be_{\smj{}}\left[\bp_{\sj{}}\left(\cri(\sj{}) + 2\epsilon_n \geq \cri(\smj{11})|N_j = k, \smj{}\right)^{k}\right]\right\}^{B_2-k} \\
	&= (1-\be_{\smj{}}[\delta_{j}(\smj{})^k])^{B_2-k}.\label{eq: ind multi}
\end{align}
where \eqref{eq: jensen} holds due to Jensen inequality. It follows that
\begin{equation}\label{eq: j 1star}
	\p(j \in S_{1*}|N_j = k) \geq \p(j \in S_{1*}|N_j = k, \mathcal{A})\tp(\mathcal{A})
	\geq (1-c_{1n})(1-\be_{\smj{}}[\delta_{j}(\smj{})^k])^{B_2-k}.
\end{equation}
Then we have
\begin{align}
	\p(j \in S_{1*}) &= \sum_{k=1}^{B_2}\p(j \in S_{1*}|N_j = k)\bp(N_j = k)\\
	&\geq (1-c_{1n})\sum_{k=1}^{B_2}(1-\be_{\smj{}}[\delta_{j}(\smj{})^k])^{B_2-k}\cdot\bp(N_j = k) \\
	&\geq (1-c_{1n})\sum_{k\geq \frac{1}{2}B_2p_0}(1-\be_{\smj{}}[\delta_{j}(\smj{})^k])^{B_2-k}\cdot\bp(N_j = k) \\
	&\geq (1-c_{1n})(1-\be_{\smj{}}[\delta_{j}(\smj{})^{\frac{1}{2}B_2p_0}])^{B_2}\cdot \bp\left(N_j \geq \frac{1}{2}B_2p_j\right).\label{eq: js}
\end{align}
By letting $\lambda = \frac{1}{2}$ in Lemma \ref{lem: chernoff}, we have
\begin{equation}\label{eq: chf}
	\bp\left(N_j \geq \frac{1}{2}B_2p_j\right) \geq 1-\exp\left\{-\frac{3}{28}B_2p_j\right\}.
\end{equation}
Combining \eqref{eq: js} and \eqref{eq: chf}, we obtain
\begin{equation}\label{eq: cover signal}
	\eta_j = \p(j \in S_{1*}) \geq c_{2n}.
\end{equation}
Again, by Lemma \ref{lem: hoeffding} and union bounds, it holds
\begin{align}
	\p(S^* \subseteq \hat{S}_{\alpha}) &\geq \p\left(\bigcap_{j \in S^*}\left\{\heta_j \geq \frac{c_{2n}}{\alpha}\right\} \right) \\
	&\geq 1-\sum_{j \in S^*}\p\left(\heta_j -\eta_j < -\left(1-\frac{1}{\alpha}\right)\eta_j \right)\\
	&\geq 1-\sum_{j \in S^*}\exp\left\{-2B_1\eta_j^2\left(1-\frac{1}{\alpha}\right)^2\right\} \\
	&\geq 1-p^*\exp\left\{-2B_1 c_{2n}^2\left(1-\frac{1}{\alpha}\right)^2\right\}, \label{eq: hoef union bdds}
\end{align}
which leads to conclusion (\rom{1}). 

For (\rom{2}), since $\limsup\limits_{n,D,B_2 \rightarrow \infty} \left\{B_2\sup_{j \in S^*}\be_{\smj{}}[\delta_{j}(\smj{})^{\frac{1}{2}B_2p_0}]\right\} < \infty$, $c_{1n} \rightarrow 0$ and $B_2\inf\limits_{j \in S^*}p_j \gtrsim 1$, it's easy to show that
\begin{equation}
	\liminf\limits_{n,D,B_2 \rightarrow \infty} c_{2n} > 0,
\end{equation}
yielding $|\hat{S}_{\alpha}| \lesssim D$, which completes our proof.

\subsection{Proof of Proposition \ref{prop: sure screening joint contribution}}
\begin{lemma}\label{lem: pj}
	$p_j = \bp(S_{11} \supseteq \bar{S}_j) = D^{-1}\sum_{d = |\bar{S}_j|}^{D}\frac{\binom{p-|\bar{S}_j|}{d-|\bar{S}_j|}}{\binom{p}{d}} \asymp \left(\frac{p}{D}\right)^{|\bar{S}_j|}$, where $S_{11}$ follows the distribution in \eqref{eq: sj dist} w.r.t. $\bar{S}_j$.
\end{lemma}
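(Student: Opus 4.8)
The statement is a purely arithmetic estimate of the explicit sum in the displayed closed form, so the plan is to pin down the order of a generic summand and then of the whole sum. Write $s = |\bar S_j|$, which is fixed throughout (recall $p^* = |S^*|$ is fixed). First I would cancel factorials in the generic summand to obtain
\begin{equation}
\frac{\binom{p-s}{d-s}}{\binom{p}{d}} = \frac{d(d-1)\cdots(d-s+1)}{p(p-1)\cdots(p-s+1)},
\end{equation}
a ratio of two falling factorials of length $s$. Since $s$ is fixed and $p\to\infty$, the denominator satisfies $p(p-1)\cdots(p-s+1)\asymp p^{s}$ with constants free of $n,D$, and for every $s\le d\le D$ one has $(d-s+1)^{s}\le d(d-1)\cdots(d-s+1)\le d^{s}$. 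Thus each summand is of order $(d/p)^{s}$, uniformly in $d$ up to universal constants, and the whole estimate reduces to controlling $\frac{1}{D}\sum_{d=s}^{D}(d/p)^{s}$.

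Next I would bound the sum from both sides. Using $p(p-1)\cdots(p-s+1)\ge c\,p^{s}$ together with $d(d-1)\cdots(d-s+1)\le d^{s}$, the upper bound
\begin{equation}
p_j \le \frac{1}{c\,D\,p^{s}}\sum_{d=s}^{D} d^{s} \asymp \frac{1}{D\,p^{s}}\cdot\frac{D^{s+1}}{s+1} \asymp \left(\frac{D}{p}\right)^{s}
\end{equation}
follows from the elementary Riemann-sum estimate $\sum_{d=s}^{D} d^{s}\asymp D^{s+1}$ (valid because $s$ is fixed). For the matching lower bound I would keep only the indices $d\in[D/2,D]$, on which each summand is $\gtrsim (D/p)^{s}$ and of which there are of order $D$; dividing by $D$ again yields $p_j\gtrsim (D/p)^{s}$. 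Combining the two directions gives $p_j\asymp (D/p)^{s}$, with the sum dominated by the large-$d$ terms.

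The computation itself is routine; the only points needing care are that the falling-factorial/power comparisons and the sum estimate hold with constants independent of $n,D,p$ (which is exactly where the fixedness of $s=|\bar S_j|$ enters) and that the small-$d$ terms are negligible. The one genuinely substantive point is the \emph{orientation} of the rate. The closed form gives $p_j\asymp (D/p)^{|\bar S_j|}$, equivalently $p_j^{-1}\asymp (p/D)^{|\bar S_j|}$, and this is the correct reading: $p_j$ is a probability, hence bounded by $1$, whereas $(p/D)^{|\bar S_j|}\to\infty$. It is also confirmed by the exact value $p_j=(D+1)/(2p)\asymp D/p$ that the formula produces when $|\bar S_j|=1$ (matching $p^{\mcf}=(D+1)/(2p)$ in Assumption \ref{asmp: sure screening iterative}), and by the requirement $B_2\gtrsim 1/p_j\asymp (p/D)^{\mathpzc{d}}$ used in Proposition \ref{prop: sure screening joint contribution}. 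So the exponent in the stated display should be read with $D/p$ in place of $p/D$; it is precisely the reciprocal $p_j^{-1}$ that scales like $(p/D)^{|\bar S_j|}$ and controls the number of candidate subspaces $B_2$.
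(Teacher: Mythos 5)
Your proof is correct and takes essentially the same route as the paper's: both cancel factorials to write each summand as $d(d-1)\cdots(d-|\bar S_j|+1)\big/\left[p(p-1)\cdots(p-|\bar S_j|+1)\right]$, exploit the fixedness of $|\bar S_j|$ to compare these falling factorials with pure powers uniformly in $d$, and estimate the resulting sum at order $D^{|\bar S_j|+1}$ (the paper's lower bound sums the shifted products $\sum_{r} r(r+1)\cdots(r+|\bar S_j|-1)$ over all indices, where you restrict to $d\in[D/2,D]$ --- a cosmetic difference). Your diagnosis of the orientation is also right and worth making explicit: the paper's own proof arrives at the intermediate bounds $D^{|\bar S_j|}/(p-D+1)^{|\bar S_j|}$ (above) and $\frac{1}{Dp^{|\bar S_j|}}\cdot D^{|\bar S_j|+1}$ up to constants (below), i.e.\ $p_j \asymp (D/p)^{|\bar S_j|}$, so the $(p/D)^{|\bar S_j|}$ appearing in the lemma statement and in the proof's final displays is a systematic typo, as your sanity checks confirm ($p_j\le 1$; the exact value $p^{\mcf}=(D+1)/(2p)$ when $|\bar S_j|=1$; the bound in Lemma \ref{lem: prop sure screening}; and the requirement $B_2\inf_j p_j\gtrsim 1$ with $B_2\asymp(p/D)^{\mathpzc{d}}$ in Proposition \ref{prop: sure screening joint contribution}).
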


\begin{proof}[Proof of Lemma \ref{lem: pj}]
	First, notice that
	\begin{equation}
		p_j = \frac{1}{D}\sum_{d=|\bar{S}_j|}^D \frac{(p-|\bar{S}_j|)\cdots (p-d+1)}{p\cdots (p-d+1)}\cdot d \cdots (d-|\bar{S}_j|+1)
		\leq \frac{1}{D}\sum_{d=|\bar{S}_j|}^D \frac{D^{|\bar{S}_j|}}{(p-D+1)^{|\bar{S}_j|}} 
		\lesssim \left(\frac{p}{D}\right)^{|\bar{S}_j|}.
	\end{equation}
	On the other hand,
	\begin{align}
		p_j &\geq \frac{1}{Dp^{|\bar{S}_j|}}\cdot \left(1-\frac{|\bar{S}_j|}{p-D}\right)^D\sum_{d=|\bar{S}_j|}^D d \cdots (d-|\bar{S}_j|+1) \\
		&\geq \frac{1}{Dp^{|\bar{S}_j|}}\cdot \left(1-\frac{|\bar{S}_j|}{p-D}\right)^D\sum_{r=1}^{D-|\bar{S}_j|+1} r\cdots(r+|\bar{S}_j|-1) \\
		&\gtrsim \left(\frac{p}{D}\right)^{|\bar{S}_j|},
	\end{align}
	where the last inequality is because $D \ll p$, $|\bar{S}_j| \leq p^*$ and $p^*$ is fixed.
\end{proof}

\begin{lemma}\label{lem: prop sure screening}
	Under Assumption \ref{asmp: sure screening joint contribution}, we have
	\begin{equation}
		\bp_{\smj{}}(|\smj{} \cap (S^* \cup S_j^0)| \geq |\bar{S}_j|) \lesssim \left(\frac{D}{p}\right)^{|\bar{S}_j|},
	\end{equation}
	where $\smj{}$ follows the distribution in \eqref{eq: smj dist}.
\end{lemma}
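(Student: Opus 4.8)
The plan is to reduce the statement to a bound under the plain hierarchical uniform distribution $\mathcal{R}(\mathcal{U}_0, p, p^{-1}\bm{1}_p)$ and then invoke Lemma \ref{lem: pj}. Write $m = |\bar{S}_j|$ and $T = S^* \cup S_j^0$. First I would observe that the density in \eqref{eq: smj dist} is precisely the conditional law of a hierarchical-uniform draw $S_{11} \sim \mathcal{R}(\mathcal{U}_0, p, p^{-1}\bm{1}_p)$ given the event $\{S_{11} \not\supseteq \bar{S}_j\}$: since $\bp_{\textup{unif}}(S_{11} = S) = [D\binom{p}{|S|}]^{-1}$ for every $S$ with $1 \le |S| \le D$, dividing by $\bp_{\textup{unif}}(S_{11}\not\supseteq\bar{S}_j) = 1-p_j$ and restricting to $S \not\supseteq \bar{S}_j$ reproduces \eqref{eq: smj dist} verbatim. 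Consequently, for any event $E$,
\begin{equation}
	\bp_{\smj{}}(E) = \frac{\bp_{\textup{unif}}(E \cap \{S_{11}\not\supseteq\bar{S}_j\})}{1-p_j} \leq \frac{\bp_{\textup{unif}}(E)}{1-p_j},
\end{equation}
and since $p_j \asymp (D/p)^m \to 0$ by Lemma \ref{lem: pj} (using $D \ll p$), the factor $(1-p_j)^{-1}$ stays bounded and is absorbed into $\lesssim$.

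Next I would control $\bp_{\textup{unif}}(|S_{11}\cap T| \geq m)$ by a union bound over the $m$-element subsets of $T$. The event $\{|S_{11}\cap T|\geq m\}$ is contained in $\bigcup_{A\subseteq T,\,|A|=m}\{S_{11}\supseteq A\}$, whence
\begin{equation}
	\bp_{\textup{unif}}(|S_{11}\cap T| \geq m) \leq \sum_{A\subseteq T,\,|A|=m}\bp_{\textup{unif}}(S_{11}\supseteq A) = \binom{|T|}{m}\,\bp_{\textup{unif}}(S_{11}\supseteq A_0),
\end{equation}
where the last equality uses exchangeability of the uniform draw over size-$d$ subsets, so that $\bp_{\textup{unif}}(S_{11}\supseteq A)$ depends on $A$ only through $|A| = m$; here $A_0$ is any fixed $m$-subset of $T$. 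Applying Lemma \ref{lem: pj} with $\bar{S}_j$ replaced by $A_0$ gives $\bp_{\textup{unif}}(S_{11}\supseteq A_0) \asymp (D/p)^m$.

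Finally I would combine the two displays. Because $p^* = |S^*|$ is fixed and $S_j^0$ has fixed cardinality $p_j^0$, the set $T$ has cardinality of constant order, so $\binom{|T|}{m}$ is $O(1)$; together with the bounded factor $(1-p_j)^{-1}$ this yields
\begin{equation}
	\bp_{\smj{}}\!\left(|\smj{}\cap (S^*\cup S_j^0)| \geq |\bar{S}_j|\right) \lesssim \left(\frac{D}{p}\right)^{|\bar{S}_j|},
\end{equation}
as claimed. I expect the only genuine subtlety to lie in the first step: correctly identifying \eqref{eq: smj dist} as a conditional hierarchical-uniform law and verifying that its normalizing constant $1-p_j$ is bounded away from $0$ (which is exactly where Lemma \ref{lem: pj} and $D \ll p$ enter). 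The subsequent union bound and exchangeability argument are routine, as long as one keeps in mind that the cardinalities $m = |\bar{S}_j|$ and $|T|$ are of constant order, so the combinatorial prefactor $\binom{|T|}{m}$ does not disturb the claimed rate.
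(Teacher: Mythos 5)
Your proof is correct, and it takes a mildly different route from the paper's. The paper works directly with the law \eqref{eq: smj dist}: it computes $\bp_{\smj{}}(|\smj{}\cap(S^*\cup S_j^0)| = t)$ exactly as a mixture over the subspace size $d$ of hypergeometric-type counts, bounds each term by $C(D/p)^t$ using $1/(1-p_j)\le 2$, and then sums over $t \ge |\bar{S}_j|$ (the bounded number of levels $t \le p^*+p_j^0$ makes the first term dominate). You instead factor the argument into two reusable pieces: (a) the observation that \eqref{eq: smj dist} is exactly the hierarchical-uniform law conditioned on $\{S_{11}\not\supseteq\bar{S}_j\}$, so that any event's probability under $\smj{}$ is at most $(1-p_j)^{-1}$ times its unconditional probability — the same normalizing-constant trick the paper uses via $\inf_j p_j < 1/2$ — and (b) a union bound over the $\binom{|T|}{m}$ many $m$-subsets of $T = S^*\cup S_j^0$ combined with exchangeability, which reduces the tail event to a containment probability that Lemma \ref{lem: pj} has already bounded. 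What your route buys is modularity: you reuse Lemma \ref{lem: pj} as a black box rather than redoing the counting, and the union bound sidesteps the exact enumeration entirely. What the paper's route buys is the sharper per-level estimate $\bp(|\smj{}\cap T| = t)\lesssim (D/p)^t$ for every $t$, though for the lemma as stated both give the same rate. Two small notes: you only need the upper-bound half of Lemma \ref{lem: pj} when applying it to $A_0$ (the $\asymp$ you cite includes a lower bound whose proof uses $|\bar{S}_j|\le p^*$ fixed, which is harmless here but unnecessary); and you correctly read Lemma \ref{lem: pj} as asserting $p_j \asymp (D/p)^{|\bar{S}_j|}$ — the displayed $(p/D)^{|\bar{S}_j|}$ in the paper's statement is a typo, as its own proof (and the requirement $p_j \to 0$) makes clear. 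Your explicit remark that $p_j^0$ is treated as fixed matches the paper's implicit use of $\binom{p^*+p_j^0}{t}$ as a constant.
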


\begin{proof}[Proof of Lemma \ref{lem: prop sure screening}]
	By Lemma \ref{lem: pj}, when $n$ is sufficiently large, $\inf\limits_{j \in S^*}p_j < \frac{1}{2}$. Then for any integer $t \in [0, p^*+p_j^0]$, we have
	\begin{align}
		&\bp_{\smj{}}(|\smj{} \cap (S^* \cup S_j^0)| = t) \\
		&= \sum_{d = t}^D \frac{1}{D}\frac{1}{\binom{p}{d}(1-p_j)}\cdot \binom{p-p^*-p_j}{d-t}\binom{p^*+p_j^0}{t} \\
		&\leq \frac{2}{D}\sum_{d = t}^D \frac{(p-p^*-p_j^0)\cdots (p-p^*-p_j^0 -d +t + 1)}{p\cdots (p-d+1)} \cdot \binom{p^*+p_j^0}{t} d\cdots (d-t+1)\\
		&\leq \frac{2}{D}\sum_{d = t}^D \left(1-\frac{p^*+p^0_j}{p}\right)^d\cdot \frac{1}{(p-D+1)^t}\cdot D^t \\
		&\leq \frac{2\binom{p^*+p_j^0}{t}D^t}{(p-D+1)^t} \\
		&\lesssim  \left(\frac{D}{p}\right)^t.
	\end{align}
	Therefore,
	\begin{equation}
		\bp_{\smj{}}(|\smj{} \cap (S^* \cup S_j^0)| \geq |\bar{S}_j|) \lesssim \bp_{\smj{}}(|\smj{} \cap (S^* \cup S_j^0)| = |\bar{S}_j|) \lesssim \left(\frac{D}{p}\right)^{|\bar{S}_j|}.
	\end{equation}
\end{proof}

Now let's prove Proposition \ref{prop: sure screening joint contribution}. Due to Assumption \ref{asmp: sure screening joint contribution}, for any $j$ there exists a set $\bar{S}_j \ni j$ with cardinality $\mathpzc{d}$ to make $j$ detectable. By Lemmas \ref{lem: pj} and \ref{lem: prop sure screening}, it's easy to see that
\begin{equation}
	B_2\sup_{j \in S^*}\be_{\smj{}}[\delta_{j}(\smj{})^{\frac{1}{2}B_2p_j}] \leq B_2\sup_{j \in S^*} \bp_{\smj{}}(|\smj{} \cap (S^* \cup S_j^0)| \geq \mathpzc{d}) \lesssim B_2\left(\frac{D}{p}\right)^{\mathpzc{d}} < \infty,
\end{equation}
\begin{equation}
	B_2\inf_{j \in S^*}p_j \gtrsim B_2\left(\frac{p}{D}\right)^{\mathpzc{d}} \gtrsim 1,
\end{equation}
implying that conditions in Assumption \ref{asmp: sure screening joint contribution} hold.

\subsection{Proof of Theorem \ref{thm: sure screening iterative}}
Akin to the argument in the proof of Theorem \ref{thm: sure screening}, we can prove that
\begin{equation}
	\p\left(\bigcap_{j \in S^*_{\mcf}}\left\{\heta_j^{\mcf} \geq c_2^*\right\} \right)\geq 1-p^*\exp\left\{-2B_1 (c_{2n}^{\mcf}-c_2^*)^2\right\}.
\end{equation}
For convenience, denote event $\bigcap_{j \in S^*_{\mcf}}\left\{\heta_j^{\mcf} \geq c_2^*\right\}$ as $\mathcal{B}$. Write $\#\{b_2: j \in S_{1b_2}^{\mcs}\}$ as $N_j$, then similar to \eqref{eq: ind multi} and \eqref{eq: j 1star}, since
\begin{equation}
	\inf_{i \in S^*_{\mcf}} \tilde{\eta}_i \geq \frac{c_2^*}{(D+C_0)},
\end{equation}
the subspace distribution family satisfies the condition in Assumption \ref{asmp: sure screening iterative}.(\rom{2}), it follows
\begin{equation}
	\p(j \in S_{1*}^{\mcs}|N_j = k, \mathcal{B}) \geq (1-c_{1n})(1-(\delta_n^{\mcs})^k)^{B_2-k}.
\end{equation}
Similar to \eqref{eq: js}, it follows that
\begin{align}
	&\p(j \in S_{1*}^{\mcs}|\hat{\bm{\eta}}^{\mcf} \in \mathcal{B}) \\
	&= \sum_{k=1}^{B_2}\p(j \in S_{1*}^{\mcs}|N_j = k, \mathcal{B})\bp(N_j = k|\hat{\bm{\eta}}^{\mcf} \in \mathcal{B})\\
	&\geq (1-c_{1n})\sum_{k=1}^{B_2}(1-\be_{\smj{}}[\delta_{j}(\smj{})^k])^{B_2-k}\cdot\bp(N_j = k|\hat{\bm{\eta}}^{\mcf} \in \mathcal{B}) \\
	&\geq (1-c_{1n})\sum_{k\geq \frac{1}{2}B_2p_j^{\mcs}}(1-\be_{\smj{}}[\delta_{j}(\smj{})^k])^{B_2-k}\cdot\bp(N_j = k|\hat{\bm{\eta}}^{\mcf} \in \mathcal{B}) \\
	&\geq (1-c_{1n})(1-\be_{\smj{}}[\delta_{j}(\smj{})^{\frac{1}{2}B_2p_j^{\mcs}}])^{B_2}\cdot \bp\left(N_j \geq \frac{1}{2}B_2p_j^{\mcs}\bigg|\hat{\bm{\eta}}^{\mcf} \in \mathcal{B}\right),\label{eq: jsp 2}
\end{align}
where $p_j^{\mcs} = \bp_{S_{11}^{\mcs} \sim \mathcal{R}(\mathcal{U}_0, p, \tilde{\bm{\eta}}^{\mcf})}(j \in S_{11}^{\mcs}|\hat{\bm{\eta}}^{\mcf} \in \mathcal{B})$. Notice that $N_j \sim \textup{Bin}(B_2, p_j^{\mcs})$ when $\hat{\bm{\eta}}^{\mcf}$ is given. 
Therefore, by Lemma \ref{lem: chernoff}, analogous to the proof of Theorem \ref{thm: sure screening}, we can obtain
\begin{equation}\label{eq: ineq 1}
	\bp\left(N_j \geq \frac{1}{2}B_2p_j^{\mcs}\bigg|\hat{\bm{\eta}}^{\mcf} \in \mathcal{B}\right) 
	\geq 1-\exp\left\{-\frac{3}{28}B_2p_j^{\mcs}\right\}.
\end{equation}
Therefore, together with \eqref{eq: ineq 1}, equation \eqref{eq: jsp 2} leads to
\begin{align}
	&\p(j \in S_{1*}^{\mcs}|\hat{\bm{\eta}}^{\mcf} \in \mathcal{B}) \\
	&\geq (1-c_{1n})(1-\be_{\smj{}}[\delta_{j}(\smj{})^{\frac{1}{2}B_2p_j^{\mcs}}])^{B_2}\left(1-\exp\left\{-\frac{3}{28}B_2p_j^{\mcs}\right\}\right). \label{eq: jsp 3}
\end{align}
Therefore, given $\hat{\bm{\eta}}^{\mcf} \in \mathcal{B}$, it follows that
\begin{equation}
	\eta_j^{\mcs} =  \p(j \in S_{1*}^{\mcs}|\hat{\bm{\eta}}^{\mcf} \in \mathcal{B}) \geq c_{2n}^{\mcs}.
\end{equation}
And for any $j$, it's straightforward to see that 
\begin{align}
	\bp_{S_{11}^{\mcs} \sim \mathcal{R}(\mathcal{U}_0, p, \tilde{\bm{\eta}}^{\mcf})}(j \in S_{11}^{\mcs}|\hat{\bm{\eta}}^{\mcf} \in \mathcal{B}, |S_{11}^{\mcs}| = d) &\geq 1-\left[1-\frac{\frac{C_0}{p}}{D+C_0}\right]\cdot \left[1-\frac{\frac{C_0}{p}}{D+C_0-\frac{C_0}{p}}\right] \cdots \\
	&\quad \left[1-\frac{\frac{C_0}{p}}{D+C_0-\frac{(d-1)C_0}{p}}\right]\\
	&\geq \frac{C_0d}{(D+C_0)p},
\end{align}
yielding that
\begin{equation}
	p_j^{\mcs} = \bp_{S_{11}^{\mcs} \sim \mathcal{R}(\mathcal{U}_0, p, \tilde{\bm{\eta}}^{\mcf})}(j \in S_{11}^{\mcs}|\hat{\bm{\eta}}^{\mcf} \in \mathcal{B}) \geq \frac{1}{D}\sum_{d=1}^D \frac{C_0d}{(D+C_0)p} \geq \frac{(D+1)C_0}{2(D+C_0)p} = p^{\mcs}.\label{eq: s11 j}
\end{equation}
Analogous to \eqref{eq: hoef union bdds}, applying Hoeffding's inequality and union bounds, we obtain
\begin{align}
	\p(S^* \subseteq \hat{S}_{\alpha}^{\mcs}) &\geq \p\left(\bigcap_{j \in S^*}\left\{\heta_j^{\mcs} \geq \frac{c_2^{\mcs}}{\alpha}\right\} \bigg| \hat{\bm{\eta}}^{\mcf} \in \mathcal{B}\right) \p(\hat{\bm{\eta}}^{\mcf} \in \mathcal{B})\\
	&\geq 1-\sum_{j \in S^*}\p\left(\heta_j^{\mcs} -\eta_j^{\mcs} < -\left(1-\frac{1}{\alpha}\right)c_2^{\mcs} \bigg|\hat{\bm{\eta}}^{\mcf} \in \mathcal{B}\right) - \p(\hat{\bm{\eta}}^{\mcf} \notin \mathcal{B})\\
	&\geq 1-p^*\exp\left\{-2B_1 (c_{2n}^{\mcf}-c_2^*)^2\right\} - p^*\exp\left\{-2B_1 (c_{2n}^{\mcs})^2\left(1-\frac{1}{\alpha}\right)^2\right\},
\end{align}
which completes the proof of conclusion (\rom{1}). The proof of conclusion (\rom{2}) is similar to that of Theorem \ref{asmp: sure screening}, and we omit it here.

\subsection{Proof of Theorem \ref{thm: rank consistency}}
Similar to \eqref{eq: ind multi}, for any $j \notin S^*$, we have
\begin{align}
	&\p(j \in S_{1*}|N_j = k, \mathcal{A})\\
	&= \p\left(\inf_{1 \leq b_2\leq k}\cri_n(\sj{1b_2}) < \inf_{1 \leq b_2\leq B_2-k}\cri_n(\smj{1b_2})|N_j = k, \mathcal{A}\right) \\
	&\leq \p\left(\inf_{1 \leq b_2\leq k}\cri(\sj{1b_2}) - 2\epsilon_n < \inf_{1 \leq b_2\leq B_2-k}\cri(\smj{11})|N_j = k, \mathcal{A}\right) \\
	&= \p\left(\inf_{1 \leq b_2\leq k}\cri(\sj{1b_2}) - 2\epsilon_n < \inf_{1 \leq b_2\leq B_2-k}\cri(\smj{11})|N_j = k\right) \\
	&= 1-\be_{\{\smj{1b_2}\}_{b_2=1}^{B_2-k}} \left[\bp\left(\inf_{1 \leq b_2\leq k}\cri(\sj{1b_2}) - 2\epsilon_n \geq \inf_{1 \leq b_2\leq B_2-k}\cri(\smj{11})|N_j = k, \{\smj{1b_2}\}_{b_2=1}^{B_2-k}\right)\right]\\
	&= 1-\be_{\{\smj{1b_2}\}_{b_2=1}^{k}} \left[\bp\left(\cri(\sj{}) - 2\epsilon_n \geq \inf_{1 \leq b_2\leq k}\cri(\smj{1b_2})|N_j = k, \{\smj{1b_2}\}_{b_2=1}^{B_2-k}\right)^{k}\right]\\
	&\leq 1-\bp\left(\cri(\sj{}) - 2\epsilon_n \geq \inf_{1 \leq b_2\leq B_2-k}\cri(\smj{1b_2})|N_j = k, \{\smj{1b_2}\}_{b_2=1}^{B_2-k}\right)^k \\
	&= 1-\left(1-\be_{\sj{}}\left[\bp_{\smj{}}(\cri(\sj{}) - 2\epsilon_n \geq \cri(\smj{})|\sj{})^{B_2-k}\right]\right)^k \\
	&= 1-\left(1-\be_{\sj{}}\left[\tilde{\delta}_j(\sj{})^{B_2-k}\right]\right)^k.
\end{align}
Then we have
\begin{align}
	\p(j \in S_{1*}|N_j = k) &\leq \p(j \in S_{1*}|N_j = k, \mathcal{A})\tp(\mathcal{A}) + \tp(\mathcal{A}^c)\\
	&\leq (1-c_{1n})\left\{1-\left(1-\be_{\sj{}}\left[\tilde{\delta}_j(\sj{})^{B_2-k}\right]\right)^k\right\} + c_{1n},
\end{align}
which combined with Lemma \ref{lem: chernoff} yields
\begin{align}
	\gamma' &\coloneqq \inf_{j \in S^*}\p(j \in S_{1*}) - \sup_{j \notin S^*}\p(j \in S_{1*})\\
	&\geq \inf_{j \in S^*}\sum_{k=1}^{B_2}\p(j \in S_{1*}|N_j = k)\bp(N_j = k)- \sup_{j \notin S^*}\sum_{k=1}^{B_2}\p(j \in S_{1*}|N_j = k)\bp(N_j = k)\\
	&\geq (1-c_{1n})\inf_{j \in S^*}\sum_{k=1}^{B_2}\left(1-\be_{\smj{}}[\delta_{j}(\smj{})^{k}]\right)^{B_2-k}\bp(N_j = k)\\
	&\quad +(1-c_{1n})\inf_{j \notin S^*}\sum_{k=1}^{B_2}\left[\left(1-\be_{\sj{}}\left[\tilde{\delta}_j(\sj{})^{B_2-k}\right]\right)^k-1\right]\cdot\bp(N_j = k) - c_{1n} \\
	&\geq (1-c_{1n})\inf_{j \in S^*}\sum_{k \geq \frac{1}{2}B_2p_j}\left(1-\be_{\smj{}}[\delta_{j}(\smj{})^{k}]\right)^{B_2-k}\bp(N_j = k)\\
	&\quad +(1-c_{1n})\inf_{j \notin S^*}\sum_{k\leq \frac{3}{2}B_2p^{[0]}}\left[\left(1-\be_{\sj{}}\left[\tilde{\delta}_j(\sj{})^{B_2-k}\right]\right)^k-1\right]\cdot\bp(N_j = k) - c_{1n} \\
	&\geq  \left[\left(1-\sup_{j \in S^*}\be_{\smj{}}[\delta_{j}(\smj{})^{\frac{1}{2}B_2p_j}]\right)^{B_2}+\left(1-\sup_{j \notin S^*}\be_{\sj{}}\left[\tilde{\delta}_j(\sj{})^{B_2-\frac{3}{2}B_2p^{[0]}}\right]\right)^{\frac{3}{2}B_2p^{[0]}}-1\right] \\
	&\quad \cdot(1-c_{1n})\left(1-2\exp\left\{-\frac{3}{28}B_2\inf_{j \in S^*}p_j\right\}\right) - c_{1n} \\
	&= \gamma.
\end{align}
Denote $\gamma'' = \frac{\inf_{j \in S^*}\eta_j - \sup_{j \notin S^*}\eta_j}{\inf_{j \in S^*}\eta_j}$. Then we have 
\begin{equation}
	\sup_{j \notin S^*}\eta_j = \sup_{j \notin S^*}\p(j \in S_{1*}) = (1-\gamma'') \inf_{j \in S^*}\p(j \in S_{1*}) = (1-\gamma'')\inf_{j \in S^*}\eta_j.
\end{equation}
By Hoeffding's inequality and union bounds, it follows
\begin{align}
	&\p\left(\inf_{j \in S^*} \heta_j > \sup_{j \notin S^*} \heta_j\right)\\
    &\geq \p\left(\sup_{j \notin S^*}\heta_j < \left(1-\frac{1}{2}\gamma''\right)\inf_{j \in S^*}\eta_j < \inf_{j \in S^*}\heta_j\right)\\
	&\geq 1-\p\left(\bigcup_{j\notin S^*}\left\{\heta_j - \eta_j\geq \frac{\gamma''}{2}\inf_{j \in S^*}\eta_j \right\}\right)-\p\left(\bigcup_{j\in S^*}\left\{\heta_j - \eta_j \leq  -\frac{\gamma''}{2}\inf_{j \in S^*}\eta_j \right\}\right) \\
	&\geq 1-\sum_{j\notin S^*}\p\left(\heta_j - \eta_j\geq \frac{\gamma'}{2}\right) - \sum_{j\in S^*}\p\left(\heta_j - \eta_j\leq -\frac{\gamma'}{2}\right)\\
	&\geq 1-\sum_{j\notin S^*}\exp\left\{-\frac{1}{2}B_1 \gamma'^2\right\}- \sum_{j\in S^*}\exp\left\{-\frac{1}{2}B_1 \gamma'^2\right\}\\
	&\geq 1-p\exp\left\{-\frac{1}{2}B_1 \gamma'^2\right\} \\
	&\geq 1-p\exp\left\{-\frac{1}{2}B_1 \gamma^2(n, D, B_2)\right\},
\end{align}
which converges to 1 due to Assumption \ref{asmp: rank consistency} as $n, B_1, B_2 \rightarrow \infty$. This completes the proof.

\subsection{Proof of Proposition \ref{prop: rank consistency}}
By the definition of $\gamma(n, D, B_2)$, it suffices to show that there exist constants $C_2>C_1>0$, such that
\begin{equation}
	\left(1-\sup_{j \in S^*}\be_{\smj{}}[\delta_{j}(\smj{})^{\frac{1}{2}B_2p_j}]\right)^{B_2}+\left(1-\sup_{j \notin S^*}\be_{\sj{}}\left[\tilde{\delta}_j(\sj{})^{B_2-\frac{3}{2}B_2p^{[0]}}\right]\right)^{\frac{3}{2}B_2p^{[0]}}>1,
\end{equation}
as well as $B_2\inf\limits_{j \in S^*}p_j \gtrsim 1$. Without loss of generality, assume feature $1 \in S^*$.

In fact, when $B_2 \in (\frac{C_1p}{D}, \frac{C_2p}{D})$, by the proof of Proposition \ref{prop: sure screening joint contribution},
\begin{equation}
	\left(1-\sup_{j \in S^*}\be_{\smj{}}[\delta_{j}(\smj{})^{\frac{1}{2}B_2p_j}]\right)^{B_2} \gtrsim e^{-C_2}.
\end{equation}
On the other hand, 
\begin{align}
	\sup_{j \notin S^*}\be_{\sj{}}\left[\tilde{\delta}_j(\sj{})^{B_2-\frac{3}{2}B_2p^{[0]}}\right] &\leq \sup_{j \notin S^*}\left[\bp_{\sj{}}(\sj{} \cap (\{1\}\cup S_1^0) \neq \emptyset) + \bp_{\smj{}}(1 \notin \smj{})^{B_2-\frac{3}{2}B_2p^{[0]}}\right] \\
	&\lesssim  \frac{D}{p} + \left(1-\frac{D}{p}\right)^{B_2-\frac{3}{2}B_2p^{[0]}}\\ 
	&\lesssim  e^{-C_1}. 
\end{align}
Therefore, $\exists C > 0$, such that
\begin{equation}
	\left(1-\sup_{j \notin S^*}\be_{\sj{}}\left[\tilde{\delta}_j(\sj{})^{B_2-\frac{3}{2}B_2p^{[0]}}\right]\right)^{\frac{3}{2}B_2p^{[0]}} \geq (1-Ce^{-C_1})^{C_2}.
\end{equation}
Let $C_1 = C_2/2$, the inequality above leads to
\begin{align}
	&\left(1-\sup_{j \in S^*}\be_{\smj{}}[\delta_{j}(\smj{})^{\frac{1}{2}B_2p_j}]\right)^{B_2}+\left(1-\sup_{j \notin S^*}\be_{\sj{}}\left[\tilde{\delta}_j(\sj{})^{B_2-\frac{3}{2}B_2p^{[0]}}\right]\right)^{\frac{3}{2}B_2p^{[0]}} \\
	&\geq  e^{-C_2} + (1-Ce^{-\frac{C_2}{2}})^{C_2} \\
	&> 1,
\end{align}
when $C_2$ is small than some constant, which finishes our proof.

\subsection{Proof of Proposition \ref{prop: iter b2}}
Note that $\mathpzc{d} = |S^*_{[0]}|+1$ and Assumption \ref{asmp: sure screening joint contribution} holds. Therefore conclusion (\rom{1}) holds due to Proposition \ref{prop: sure screening joint contribution}. Let's prove (\rom{2}). The $B_2$ requirement in first step is derived from Proposition \ref{prop: sure screening joint contribution} since $\sup_{i\in S_{[0]}^*}\mathpzc{d}_i = 1$. We only need to calculate the restriction on $B_2$ in the second step. By the proof of Theorem \ref{thm: sure screening iterative}, $\inf_{i \in S^*_{\mcf}} \hat{\eta}_i \geq c_2^*$. Recalling Algorithm \ref{algo: iterative rase screening}, we denote $\tilde{S}_{[0]}^* = \{i: \hat{\eta}^{[0]} > C_0/\log p \} \backslash S^*_{[0]}$. Then similar to \eqref{eq: s11 j}, we know that
\begin{equation}\label{eq: eta lower bdd}
	\inf_{i \in S^*_{\mcf}} \tilde{\eta}^{[0]}_i \gtrsim D^{-1}, \quad \inf_{i \in \tilde{S}_{[0]}^*} \tilde{\eta}^{[0]}_i \gtrsim (\log p)^{-1}, \quad \inf_{i \in S^*_{[1]} \backslash \tilde{S}_{[0]}^*} \tilde{\eta}^{[0]}_i \gtrsim p^{-1}.
\end{equation}
\begin{itemize}
	\item If $j \notin \tilde{S}_{[0]}^* \cup S_{[0]}^*$, similar to the proof of Proposition \ref{prop: sure screening joint contribution}, to guarantee Assumption \ref{asmp: sure screening iterative}, we need
		\begin{align}
			B_2\bp(S_{11}^{[1]} \supseteq S_{[0]}^* \cup \tilde{S}_{[0]}^* \cup \{j\}) &\geq B_2\bp(S_{11}^{[1]} \cap S^* =  S_{[0]}^* \cup \tilde{S}_{[0]}^* \cup \{j\}) \gtrsim 1, \label{eq: b2 iter lower bdd}\\
			B_2\bp(|\smj{} \cap S^*| > |S_{[0]}^* \cup  \tilde{S}_{[0]}^*|) &\lesssim B_2\bp(|S_{11}^{[1]} \cap S^*| > |S_{[0]}^* \cup  \tilde{S}_{[0]}^*|) \lesssim 1.\label{eq: b2 iter upper bdd}
		\end{align}
		Note that 
		\begin{align}
			\bp(|S_{11}^{[1]} \cap S^*| > |S_{[0]}^* \cup  \tilde{S}_{[0]}^*|) &\leq \sum_{i \in S_{[1]}^* \backslash \tilde{S}_{[0]}^*} \bp(S_{11}^{[1]} \cap S^* =  S_{[0]}^* \cup \tilde{S}_{[0]}^* \cup \{i\}) \\
			&\lesssim \bp(S_{11}^{[1]} \cap S^* =  S_{[0]}^* \cup \tilde{S}_{[0]}^* \cup \{j\}),
		\end{align}
		which combined with \eqref{eq: b2 iter lower bdd} and \eqref{eq: b2 iter upper bdd} leads to $B_2\asymp (\bp(S_{11}^{[1]} \cap S^* =  S_{[0]}^* \cup \tilde{S}_{[0]}^* \cup \{j\}))^{-1}$. Then because of \eqref{eq: eta lower bdd}, it holds that
		\begin{align}
			\bp(S_{11}^{[1]} \cap S^* =  S_{[0]}^* \cup \tilde{S}_{[0]}^* \cup \{j\}) &\geq D^{-1}\sum_{d= |S_{[0]}^* \cup \tilde{S}_{[0]}^*|+1}^D \prod_{i \in S_{[0]}^* \cup \tilde{S}_{[0]}^* \cup \{j\}} \tilde{\eta}_j^{[0]} \\
			&\gtrsim D^{-1}\sum_{d= |S_{[0]}^* \cup \tilde{S}_{[0]}^*|+1}^D D^{-|S^*_{[0]}|}(\log p)^{-|\tilde{S}_{[0]}^*|}p^{-1} \\
			&\gtrsim D^{-|S^*_{[0]}|}(\log p)^{-|\tilde{S}_{[0]}^*|}p^{-1},
		\end{align}
		leading to $B_2 \lesssim D^{|S^*_{[0]}|}(\log p)^{|\tilde{S}_{[0]}^*|}p \leq D^{|S^*_{[0]}|}(\log p)^{|S^*_{[1]}|-1}p$.
	\item If $j \in \tilde{S}_{[0]}^* \cup S_{[0]}^*$ and there exists $j_0 \in S^*_{[1]} \backslash (\tilde{S}_{[0]}^* \cup S_{[0]}^*)$, then
		\begin{align}
			B_2\bp(|\smj{} \cap S^*| \geq |S_{[0]}^* \cup  \tilde{S}_{[0]}^*|) &\lesssim B_2\bp(|S_{11}^{[1]} \cap S^*| > |S_{[0]}^* \cup  \tilde{S}_{[0]}^*|) \\
			&\lesssim B_2 \sum_{i \in S^*_{[1]} \backslash \tilde{S}_{[0]}^*}\bp(S_{11}^{[1]} \cap S^* = S_{[0]}^* \cup  \tilde{S}_{[0]}^* \backslash \{j\} \cup \{i\}) \\
			&\lesssim B_2\bp(S_{11}^{[1]} \cap S^* = S_{[0]}^* \cup  \tilde{S}_{[0]}^*\backslash \{j\} \cup \{j_0\}),
		\end{align}
		 and 
		 \begin{align}
		 	B_2\bp(S_{11}^{[1]} \supseteq S_{[0]}^* \cup \tilde{S}_{[0]}^*) &\geq B_2\bp(S_{11}^{[1]} \cap S^* =  S_{[0]}^* \cup \tilde{S}_{[0]}^*),
		 \end{align}
		 leading to $ (\bp(S_{11}^{[1]} \cap S^* =  S_{[0]}^* \cup \tilde{S}_{[0]}^*)^{-1} \lesssim B_2 \lesssim (\bp(S_{11}^{[1]} \cap S^* = S_{[0]}^* \cup  \tilde{S}_{[0]}^*\backslash \{j\} \cup \{j_0\}))^{-1}$. Similar to the previous calculation, it can be shown that $(\bp(S_{11}^{[1]} \cap S^* = S_{[0]}^* \cup  \tilde{S}_{[0]}^*\backslash \{j\} \cup \{j_0\}))^{-1} \lesssim D^{|S^*_{[0]}|}(\log p)^{|S^*_{[1]}|-1}p$.
	\item If $j \in \tilde{S}_{[0]}^* \cup S_{[0]}^*$ and $S^*_{[1]} \backslash (\tilde{S}_{[0]}^* \cup S_{[0]}^*) = \emptyset$, we have $\bp(|\smj{} \cap S^*| \geq |S_{[0]}^* \cup  \tilde{S}_{[0]}^*|) = 0$ because $S_{[0]}^* \cup  \tilde{S}_{[0]}^* = S^*$. Therefore it is sufficient to require $B_2 \gtrsim (\bp(S_{11}^{[1]} \cap S^* =  S_{[0]}^* \cup \tilde{S}_{[0]}^*))^{-1} = (\bp(S_{11}^{[1]} \supseteq S^*))^{-1}$, where $\bp(S_{11}^{[1]} \supseteq S^*) \gtrsim (D\wedge \log p)^{-p^*}$, implying that there exists $B_2 \lesssim (D \vee \log p)^{p^*} \vee p = p$ satisfying Assumption \ref{asmp: sure screening iterative}.
\end{itemize}
Above all, we complete the proof of Proposition \ref{prop: iter b2}.

\end{appendices}

\end{document}